\newtheorem*{rep@theorem}{\rep@title}
\newcommand{\newreptheorem}[2]{%
	\newenvironment{rep#1}[1]{%
		\def\rep@title{#2 \ref{##1}}%
		\begin{rep@theorem}}%
		{\end{rep@theorem}}}
\def\zeit{\number\shorthour:\ifnum\shortminute<10 0\number\shortminute
	\else\number\shortminute\fi}
\newcommand{\mydriver}{hypertex}
\renewcommand{\mydriver}{pdftex}
\theoremstyle{plain}
\newtheorem{theorem}{Theorem}[section]
\newtheorem{fact}[theorem]{Fact}
\newtheorem{lemma}[theorem]{Lemma}
\newtheorem{corollary}[theorem]{Corollary}
\newtheorem{claim}[theorem]{Claim}
\newtheorem{definition}[theorem]{Definition}
\DeclarePairedDelimiterX{\set}[2]{\{}{\}}{#1\,\delimsize|\,\mathopen{}#2}
\DeclarePairedDelimiterX{\abs}[1]{\lvert}{\rvert}{#1}
\DeclareMathOperator{\mincut}{mincut}
\DeclareMathOperator{\capacity}{cap}
\title{Improved Guarantees for Vertex Sparsification in Planar Graphs\footnote{The research leading to these results has received funding from the
		European Research Council under the European Union's Seventh	Framework Programme (FP/2007-2013) / ERC Grant Agreement no. 340506.}}
\author{Gramoz Goranci\footnote{University of Vienna, Faculty of Computer Science, Vienna, Austria. E-mail: \texttt{gramoz.goranci@univie.ac.at}.}
	\and
	Monika Henzinger\footnote{University of Vienna, Faculty of Computer Science, Vienna, Austria. E-mail: \texttt{monika.henzinger@univie.ac.at}.}
	\and 
	Pan Peng\footnote{University of Vienna, Faculty of Computer Science, Vienna, Austria. E-mail: \texttt{pan.peng@univie.ac.at}.}
}
\begin{document}

\begin{titlepage}
	\maketitle
	\thispagestyle{empty}

\begin{abstract}
Graph Sparsification aims at compressing large graphs into smaller ones while preserving important characteristics of the input graph. In this work we study Vertex Sparsifiers, i.e., sparsifiers whose goal is to reduce the number of vertices. We focus on the following notions:

(1) Given a digraph $G=(V,E)$ and terminal vertices $K \subset V$ with $|K| = k$, a (vertex) reachability sparsifier of $G$ is a digraph $H=(V_H,E_H)$, $K \subset V_H$ that preserves all reachability information among terminal pairs. In this work we introduce the notion of reachability-preserving minors (RPMs) , i.e., we require $H$ to be a minor of $G$. We show any directed graph $G$ admits a RPM $H$ of size $O(k^3)$, and if $G$ is planar, then the size of $H$ improves to $O(k^{2} \log k)$. We complement our upper-bound by showing that there exists an infinite family of grids such that any RPM must have $\Omega(k^{2})$ vertices.

(2) Given a weighted undirected graph $G=(V,E)$ and terminal vertices $K$ with $|K|=k$, an exact (vertex) cut sparsifier of $G$ is a graph $H$ with $K \subset V_H$ that preserves the value of minimum-cuts separating any bipartition of $K$. We show that planar graphs with all the $k$ terminals lying on the same face admit exact cut sparsifiers of size $O(k^{2})$ that are also planar. Our result extends to flow and distance sparsifiers. It improves the previous best-known bound of $O(k^22^{2k})$ for cut and flow sparsifiers by an exponential factor, and matches an $\Omega(k^2)$ lower-bound for this class of graphs.

 \end{abstract}
\end{titlepage}
\section{Introduction}
Very large graphs or networks are ubiquitous nowadays, from social networks to information networks. One natural and effective way of processing and analyzing such graphs is to compress or sparsify the graph into a smaller one that well preserves certain properties of the original graph. Such a sparsification can be obtained by reducing the number of \emph{edges}. Typical examples include cut sparsifiers~\cite{BenczurK96}, spectral sparsifiers~\cite{SpielmanT04}, spanners~\cite{ThorupZ05} and transitive reductions~\cite{AhoGU72}, which are subgraphs defined on the same vertex set of the original graph $G$ while having much smaller number of edges and still well preserving the cut structure, spectral properties, pairwise distances and transitive closure of $G$, respectively. Another way of performing sparsification is by reducing the number of \emph{vertices}, which is most appealing when only the properties among a subset of vertices (which are called \emph{terminals}) are of interest~(see e.g., \cite{moitra09,andoni,distancepreserving}). We call such small graphs \emph{vertex sparsifiers} of the original graph. In this paper, we will particularly focus on vertex reachability sparsifiers for \emph{directed} graphs and cut (and other related) sparsifiers for \emph{undirected} graphs. 

Vertex reachability sparsifiers in directed graphs is an important and fundamental notion in Graph Sparsification, which has been implicitly studied in the dynamic graph algorithms community~\cite{Subramanian93,DiksS07}, and explicitly in~\cite{katriel2005reachability}. Specifically, given a digraph $G=(V,E)$, $K \subset V$, a digraph $H=(V_H,E_H)$, $K \subset V_H$ is a (\emph{vertex}) \emph{reachability sparsifier} of $G$ if for any $x ,x' \in K$, there is a directed path from $x$ to $x'$ in $H$ iff there is a directed path from $x$ to $x'$ in $G$. If $|K|=k$, we call the digraph $G$ a \emph{$k$-terminal digraph}. Note that any $k$-terminal digraph $G$ always admits a trivial reachability vertex sparsifier $H$, which corresponds to the transitive closure restricted to the terminals. In this work, we initiate the study of \emph{reachability-preserving minors}, i.e., vertex reachability sparsifiers with $H$ required to be a minor of $G$. 
The restriction on $H$ being a minor of $G$ is desirable as it makes sure that $H$ is structurally similar to $G$, e.g., any minor of a planar graph remains planar. We ask the question whether general graphs admit reachability-preserving minors whose size can be bounded independently of the input graph $G$, and study it from both the lower- and upper-bound perspective.

For the notion of cut (and other related) sparsifiers, we are given a capacitated undirected graph $G=(V,E,c)$, and a set of terminals $K$ and our goal is to find a (capacitated undirected) graph $H=(V_H,E_H,c_H)$ with as few vertices as possible and $K \subseteq V_H$ such that the quantities like, cut value, multi-commodity flow and distance among terminal vertices in $H$ are the same as or close to the corresponding quantities in $G$. If $|K|=k$, we call the graph $G$ a \emph{$k$-terminal graph}. We say $H$ is a \emph{quality-$q$} (\emph{vertex}) \emph{cut sparsifier} of $G$, if for every bipartition $(U,K \setminus U)$ of the terminal set $K$, the value of the minimum cut separating $U$ from $K \setminus U$ in $G$ is within a factor of $q$ of the value of minimum cut separating $U$ from $K \setminus U$ in $H$. If $H$ is a quality-$1$ cut sparsifier, then it will be also called a \emph{mimicking network}~\cite{HagerupKNR98}. Similarly, we define flow and distance sparsifiers that (approximately) preserve multicommodity flows and distances among terminal pairs, respectively~(see Section~\ref{sec: UpperFlow} for formal definitions). These type of sparsifiers have proven useful in approximation algorithms~\cite{moitra09} and also find applications in network routing~\cite{chuzhoy2012routing}.

\subsection{Our Results}

\paragraph*{Reachability Sparsifiers.} Our first main contribution is the study of reachability-preserving minors. Although reachability is a weaker requirement in comparison to shortest path distances, directed graphs are usually much more cumbersome to deal with from the perspective of graph sparsification. Surprisingly, we show that general digraphs admit reachability-preserving minors with $O(k^{3})$ vertices, which is in contrast to the bound of $O(k^4)$ on the size of distance-preserving minors in undirected graphs by Krauthgamer et al.~\cite{distancepreserving}. 

\begin{theorem} \label{thmi: general_reachability} Given a $k$-terminal digraph $G$, there exists a reachability-preserving minor $H$ of $G$ with size $O(k^{3})$.
\end{theorem}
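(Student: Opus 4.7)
The plan is to build $H$ by extracting a reachability-preserving subgraph $G^*$ of $G$ and then contracting all ``non-essential'' vertices. Fix a total order $\pi$ on $V(G)$, and for each terminal $s \in K$ let $T_s$ be the lexicographic-first DFS out-tree of $G$ rooted at $s$ (breaking ties by $\pi$); let $T_s^*$ be its minimal sub-tree spanning $s$ together with all terminals reachable from $s$ in $G$. The leaves of $T_s^*$ are a subset of $K$, so $T_s^*$ has at most $k$ leaves and, by a standard degree-counting argument, $O(k)$ branching vertices (internal vertices of out-degree $\ge 2$). Define $G^* := \bigcup_{s \in K} T_s^*$; for every ordered pair $(s, t)$ of terminals with $s$ reaching $t$ in $G$, the unique $s$-to-$t$ dipath in $T_s^*$ lies in $G^*$, so $G^*$ preserves terminal reachability, and since $G^*$ is a subgraph of $G$, no spurious reachability is introduced.

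Call a vertex $v \in V(G^*)$ \emph{essential} if $v \in K$, or if its in-degree or out-degree in $G^*$ is at least two. Any non-essential vertex has a unique in-neighbor and a unique out-neighbor in $G^*$; repeatedly contracting each non-essential $v$ with its unique out-neighbor (absorbing maximal non-essential chains into their successor) yields a minor $H$ of $G$ whose vertex set is exactly the essential vertices of $G^*$. The contraction is safe for reachability: every walk passing through a non-essential $v$ in $G^*$ is forced through the unique in-edge and unique out-edge at $v$, so merging $v$ into its successor creates no new reachability between any terminal pair. Thus it suffices to bound the number of essential vertices by $O(k^3)$.

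I would bound the essential vertices by splitting them into two classes: (a) vertices that are leaves, root, or branching inside some $T_s^*$, which contribute $O(k)$ per tree and $O(k^2)$ in total across the $k$ trees; and (b) vertices that are path vertices in every $T_s^*$ containing them but have in- or out-degree at least $2$ in $G^*$, caused by two different trees $T_s$ and $T_{s'}$ selecting different local edges at the vertex. For class (b), charge each such vertex to an unordered pair $\{s, s'\}$ whose trees disagree at it, and prove the key lemma that \emph{each pair accounts for $O(k)$ disagreement vertices}. The argument exploits the consistent tie-breaking by $\pi$: once $T_s$ and $T_{s'}$ diverge at a vertex, the two resulting canonical walks are governed by the at most $k$ reachable terminal destinations each tree must still reach, and once the two trees re-agree on an edge at a later vertex, $\pi$ forces them to stay synchronized thereafter. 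Summing across the $\binom{k}{2} = O(k^2)$ pairs then yields $|V(H)| = O(k^3)$.

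The main obstacle is establishing the per-pair $O(k)$ bound in (b); a looser bound at this step gives only $\tilde{O}(k^4)$, matching (and not improving on) the undirected distance-preserving result of Krauthgamer et al.\ \cite{distancepreserving}. The tighter bound exploits the fact that reachability, unlike distance, allows us to commit to a single canonical out-tree per source and choose subpaths of it consistently, so cross-tree disagreements are localized to transitions through the at most $k$ terminal destinations rather than arbitrary interior vertices of the graph.
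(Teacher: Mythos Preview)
Your approach is genuinely different from the paper's, and the gap you flag at the end is real and, as stated, not closed. The paper does not work with per-source out-trees at all. It first reduces to DAGs by contracting strongly connected components, and then, for the $O(k^3)$ bound, takes the \emph{sparsest reachability preserver} $H$ (the edge-minimal subgraph preserving all terminal-pair reachability) and bounds the number of branching events in $H$ by $k\cdot|P|$ via an ownership/pigeonhole argument: every edge of $H$ is \emph{owned} by some pair $(s,t)$ whose reachability it is essential for, and one shows that each pair $(s,t)$ participates in at most $k$ branching events, because two such events would force two owners sharing a left terminal $u$, and then one can reroute through the earlier branch to contradict essentiality. Contracting in-degree-$1$ non-terminals then leaves $O(k\cdot |P|)=O(k^3)$ vertices.

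Your per-pair $O(k)$ claim is the crux, and the justification you give does not hold for lexicographic DFS trees. The assertion ``once the two trees re-agree on an edge at a later vertex, $\pi$ forces them to stay synchronized thereafter'' is false: the subtree of $T_s$ below a vertex $v$ depends on the entire set of vertices already visited when DFS($s$) first reaches $v$, and this set can differ between DFS($s$) and DFS($s'$) even if both enter $v$ along the same edge. Concretely, if $s$ has a direct edge to some descendant $f$ of $v$ that it explores before reaching $v$, then $f$ will be missing from the $T_s$-subtree at $v$ but present in the $T_{s'}$-subtree, creating a fresh out-degree disagreement below $v$; iterating this produces unboundedly many disagreements between a single pair of trees. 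So the synchronization mechanism you rely on is specific to \emph{consistent shortest-path} tie-breaking (where suffixes are source-independent), not to DFS, and even there it yields only one branching event per \emph{pair of paths}, i.e.\ $O(k^2)$ per tree-pair and $O(k^4)$ overall --- exactly the warm-up bound, not $O(k^3)$. To hit $O(k^3)$ along your line you would need a genuinely new structural lemma about cross-tree interactions; the paper sidesteps this entirely by abandoning trees in favor of the sparsest preserver.
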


It might be interesting to compare the above result with the construction of \emph{reachability preserver} by Abbound and Bodwin~\cite{AB18reachability}, where the reachability preserver for a pair-set $P$ in a graph $G$ is defined to be a \emph{subgraph} of $G$ that preserves the reachability of all pairs in $P$. The size (i.e., the number of edges) of such preservers is shown to be at least $\Omega(n^{2/(d+1)}|P|^{(d-1)/d})$, for any integer $d\geq 2$, which is in sharp contrast to our upper bound $O(|P|^{3/2})$ on the size of reachability-preserving minors by taking $P$ to be the pair-set of all terminals. 

Furthermore, by exploiting a tight integration of our techniques with the compact distance oracles for planar graphs by Thorup~\cite{Thorup04}, we can show the following theorem regarding the size of reachability-preserving minors for planar digraphs. 

\begin{theorem} \label{thm: ubPlanar}
	Given a $k$-terminal planar digraph $G$, there exists a reachability-preserving minor $H$ of $G$ with size $O(k^{2} \log k)$.
\end{theorem}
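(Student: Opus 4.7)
The plan is to refine the $O(k^3)$ construction behind Theorem~\ref{thmi: general_reachability} by organising the paths we retain according to Thorup's balanced shortest-path separator for planar digraphs. Recall that Thorup proved that any planar digraph with nonnegative edge weights admits, with respect to a chosen shortest-path tree, a separator consisting of $O(1)$ directed shortest paths whose removal splits the graph into pieces each containing at most a constant fraction of a prescribed vertex weight. Applying this recursively, with weight equal to the number of terminals contained in a piece, yields a decomposition tree $\mathcal{T}$ of depth $O(\log k)$ whose internal nodes carry planar subgraphs of $G$ together with $O(1)$ directed shortest paths as separators.

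With $\mathcal{T}$ in hand, for each node $u$ of $\mathcal{T}$ and each separator path $P$ at $u$ I would identify a set of \emph{portals} on $P$ as follows: for every terminal $t$ in the subgraph at $u$, mark the first vertex of $P$ reachable from $t$ (the \emph{entry portal}) and the last vertex of $P$ that reaches $t$ (the \emph{exit portal}). A clean observation underwriting this is that, since $P$ is a directed path $p_0 \to p_1 \to \cdots \to p_m$, the set of $p_i$ reachable from any fixed $t$ is necessarily a suffix of $P$: if $t$ reaches $p_i$ then $t$ reaches $p_j$ for every $j \geq i$ by concatenating with the $P$-subpath from $p_i$ to $p_j$. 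The minor $H$ is then obtained from $G$ by retaining the separator paths (contracted down between consecutive marked portals), the terminals themselves, and, for each terminal-portal incidence, one witnessing connector path in $G$, while contracting every remaining edge.

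For the size analysis I would charge as follows. The decomposition has $O(\log k)$ levels, the terminals partition across the subgraphs of any one level, and each terminal contributes $O(1)$ portals per separator path it sees, so there are only $O(k \log k)$ named vertices (terminals plus portals) overall. The dominant contribution comes from branching vertices on the connector paths, which I would bound level-by-level using a planar adaptation of the branching analysis behind Theorem~\ref{thmi: general_reachability}: at node $u$ with $k_u$ terminals the connector paths into the $O(1)$ separator paths contribute $O(k_u^2)$ branching vertices, and $\sum_{u \text{ at level } \ell} k_u^2 \leq (\sum_u k_u) \cdot \max_u k_u \leq k^2$ per level, giving $O(k^2 \log k)$ in total.

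The main obstacle is twofold. First, proving the planarity-aware $O(k_u^2)$ branching bound per subgraph (rather than the $O(k_u^3)$ one that follows from applying Theorem~\ref{thmi: general_reachability} as a black box) is precisely where the \emph{tight integration} with Thorup's construction hinted at in the excerpt is needed, since one must exploit the shortest-path structure of the separator paths, not merely their existence, to trim the branching count. Second, one must argue that the portals and connector paths chosen across all $O(\log k)$ levels can be realised simultaneously as a single valid minor of $G$, which I expect to handle by an outside-in inductive contraction that fixes the outer separator structure before descending into each subproblem.
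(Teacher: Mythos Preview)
Your high-level strategy---Thorup's shortest-path separator decomposition together with entry/exit portals on each separator dipath---is exactly what the paper does. The organisational difference is that the paper does not carry out a per-piece branching analysis. Instead, the recursive separator procedure is used \emph{only} to assemble a single global pair-set $P'$: for every terminal $x$ and every separator dipath $Q$ encountered at any level of the recursion, it adds the pairs $(x,\texttt{to}_x[Q])$ and $(\texttt{from}_x[Q],x)$, declares the two portals as new terminals, and for each $Q$ adds the ``segment'' pairs between consecutive portals along $Q$. Since each terminal contributes $O(1)$ pairs per level and there are $O(\log k)$ levels, $|P'|=O(k\log k)$. The paper then invokes Theorem~\ref{thmi: general} (the $O(k\cdot|P|)$ bound) \emph{once}, on the original planar digraph with this enlarged pair-set, and reads off the $O(k^{2}\log k)$ bound.

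This reframing dissolves both obstacles you flag. Your first obstacle---getting $O(k_u^2)$ rather than $O(k_u^3)$ branching per piece---is already a non-issue in your own setup: at a node with $k_u$ terminals there are only $O(k_u)$ connector paths (one entry and one exit per terminal, times $O(1)$ separator dipaths), so Lemma~\ref{lemm: branching} alone gives $O(k_u^2)$ branching events, with no planarity-specific refinement needed. Your second obstacle---realising the minors across all levels simultaneously---never arises in the paper's organisation, because the separator recursion outputs only the pair-set $P'$; a single application of Theorem~\ref{thmi: general} to $G$ with $P'$ then produces one coherent minor in one shot.
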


We complement the above result by showing that there exist instances where the above upper-bound is tight up to a $O(\log k)$ factor. 


\begin{theorem} \label{thm: lbPlanar}
	For infinitely many $k \in \mathbb{N}$ there exists a $k$-terminal acyclic directed grid $G$ such that any reachability-preserving minor of $G$ must use $\Omega(k^{2})$ non-terminals.
\end{theorem}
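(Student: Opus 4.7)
The plan is to construct a concrete acyclic directed grid $G$ with $k$ terminals, and then argue by a short case analysis that any reachability-preserving minor $H$ of $G$ essentially recovers $G$ itself, forcing $H$ to have $\Omega(k^{2})$ non-terminal vertices. Specifically, I take $G$ to be the $n\times n$ grid on $\{(i,j):0\le i,j\le n-1\}$, with every horizontal edge oriented rightward and every vertical edge oriented downward (so $G$ is acyclic), and I declare all $4n-4$ boundary vertices (those in row $0$, row $n-1$, column $0$, or column $n-1$) to be terminals. This gives $k=4n-4$ and leaves $(n-2)^{2}$ interior non-terminals.

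Two facts about $G$ will do most of the work. First, for every column index $j$, the unique directed $G$-path from the terminal $(0,j)$ to the terminal $(n-1,j)$ is the straight-down traversal of column $j$: the monotone orientation forbids any path that leaves column $j$ from returning, and the symmetric statement holds for rows. Second, $G$ satisfies the reachability law $(a,b)\to(c,d)\iff a\le c$ and $b\le d$, and all four boundary sides of $G$ are saturated with terminals.

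The main step, which I expect to be the technical heart of the argument, is to show that every branch set of $H$ contains at most one interior vertex of $G$, and that no interior vertex can be absorbed into the branch set of any boundary terminal. For the first part, I would suppose that a single branch set $B$ contains two distinct interior vertices $v=(i,j)$ and $v'=(i',j')$ and then exhibit explicit boundary terminals $s,t$ with $s\to v$ and $v'\to t$ in $G$ but $s\not\to t$ in $G$; this produces the forbidden chain $s\to B\to t$ in $H$, contradicting reachability preservation. When $v,v'$ are incomparable with $i<i'$ and $j>j'$, take $s=(0,j)$ and $t=(n-1,j')$; when $v\le v'$ coordinate-wise with $j<j'$, take $s=(0,j')$ and $t=(n-1,j)$; and when $v\le v'$ with $i<i'$, take $s=(i',0)$ and $t=(i,n-1)$. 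The interior-versus-terminal comparison is similar in spirit: for any boundary terminal $t$, the set of terminal predecessors (or successors) of $t$ in $G$ is strictly smaller than that of any interior vertex, so absorbing an interior vertex into $t$'s branch set creates an analogous forbidden reachability.

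Once these facts are in hand, the first structural observation closes the argument: if any interior vertex $(i,j)$ were omitted from all branch sets of $H$, then the reachability $(0,j)\to(n-1,j)$ in $G$ would have no witness in $H$, because the unique directed $G$-path realizing it passes through $(i,j)$ and no bypassing walk is available once each branch set has been forced to be a singleton. Hence every interior vertex must appear in its own singleton non-terminal branch set of $H$, giving at least $(n-2)^{2}=\Omega(k^{2})$ non-terminals, as required.
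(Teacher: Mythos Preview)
Your construction and the overall strategy are reasonable, but the central step has a real gap. You write: ``exhibit explicit boundary terminals $s,t$ with $s\to v$ and $v'\to t$ in $G$ but $s\not\to t$ in $G$; this produces the forbidden chain $s\to B\to t$ in $H$.'' The inference from $s\to v$ in $G$ (with $v\in B$) to $s\to B$ in $H$ is not valid: the $G$-path from $s$ to $v$ may pass through vertices that were \emph{deleted} when forming the minor $H$, so nothing guarantees that $s$'s branch set can reach $B$ along $H$-edges. The same issue recurs in your ``interior vertex absorbed into a terminal's branch set'' argument. Reachability in $G$ between a terminal and a non-terminal simply does not transfer to $H$; only terminal-to-terminal reachabilities are preserved. (There is also a minor indexing slip in the comparable case $v\le v'$ with $j<j'$: your chosen $s=(0,j')$ does not reach $v=(i,j)$, though it does reach $v'$, so the roles need swapping.)

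The paper's proof sidesteps this entirely by never reasoning about individual branch sets. Instead it fixes, for each $i$, a directed path $P^i_H$ in $H$ from the $i$th left terminal to the $i$th right terminal, and similarly vertical paths $Q^j_H$; these exist because $H$ preserves terminal reachability. It then shows (i) the $P^i_H$ are pairwise vertex-disjoint (else one could reach a ``higher'' right terminal from a ``lower'' left terminal, which $G$ forbids), (ii) likewise for the $Q^j_H$, (iii) every $P^i_H$ meets every $Q^j_H$ (by pulling both paths back to $G$ and using that horizontal and vertical grid paths must cross), and (iv) no such intersection is a terminal. This yields $r^2$ distinct non-terminal intersection vertices in $H$. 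Your uniqueness-of-column-path observation is exactly what powers step (iii), but the argument lives in $H$, not in a branch-set analysis of $G$. If you want to rescue your approach, you would need to show that the relevant $G$-paths from $s$ to $v$ survive in $H$, which essentially forces you back to the paper's path-based viewpoint.
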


\paragraph*{Cut, Flow and Distance Sparsifiers.} We provide new constructions for quality-$1$ (exact) cut, flow and distance sparsifiers for $k$-terminal planar graphs, where all the terminals are assumed to lie on the same face. We call such $k$-terminal planar graphs \emph{Okamura-Seymour} (OS) instances. They are of particular interest in the algorithm design and optimization community, due to the classical Okamura-Seymour theorem that characterizes the existence of feasible concurrent flows in such graphs~(see e.g., \cite{OkamuraS81,chekuri09,chekuri2010flow,LeeMM13}).

We show that the size of quality-$1$ sparsifiers can be as small as $O(k^2)$ for such instances, for which only exponential (in $k$) size of cut and flow sparisifiers were known before~\cite{KrauthgamerR13,andoni}. Formally, we have the following theorem.
\begin{theorem} \label{thm: mainThm}
	For any \emph{OS} instance $G$, i.e., a $k$-terminal planar graph in which all terminals lie on the same face, there exist quality-$1$ cut, flow and distance sparsifers of size $O(k^2)$. Furthermore, the resulting sparsifiers are also planar.
\end{theorem}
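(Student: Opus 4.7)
The plan is to reduce the construction of all three sparsifiers to a single task: building a distance-preserving minor of size $O(k^2)$ for a planar graph whose terminals lie on a common face, possibly applied to the planar dual. First I would handle the cut case via planar duality. Let $G$ be the OS instance with terminals $t_1,\dots,t_k$ in cyclic order around the outer face. I take the planar dual $G^*$ and split the outer-face dual vertex into $k$ vertices $f_1,\dots,f_k$, where $f_i$ is incident to the dual edges crossing the arc of the outer face between $t_i$ and $t_{i+1}$. The resulting graph $\Gs$ is planar, and by the classical cut/shortest-path duality for planar graphs (specialised to the OS setting), the value of the minimum $(U,K\setminus U)$-cut in $G$ for any bipartition of the terminals can be read off from shortest-path distances among the $f_i$ in $\Gs$; for contiguous bipartitions this is immediate, and for general bipartitions it follows from the fact that any terminal cut in $G$ corresponds to a disjoint union of dual paths joining appropriate $f_i$'s.

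Next I would reduce to distance preservation: it suffices to build a planar distance-preserving minor $H^*$ of $\Gs$ with respect to $\{f_1,\dots,f_k\}$, and then dualise back to obtain a planar cut sparsifier of $G$. Because contraction and deletion commute with planar duality, and minors of planar graphs are planar, the resulting $H$ has the same vertex count as $H^*$ (up to $O(k)$) and is itself planar. For distance sparsifiers, $H^*$ (or equivalently a distance-preserving minor of $G$ itself with respect to $\{t_1,\dots,t_k\}$) already does the job. For flow sparsifiers, I would invoke the Okamura–Seymour theorem: in planar graphs with all terminals on one face, feasibility of every multicommodity flow is characterised by terminal cut values, so any quality-$1$ cut sparsifier that is itself an OS instance is automatically a quality-$1$ flow sparsifier.

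The core technical task is therefore: given a planar graph with $k$ terminals on one face, construct a distance-preserving minor of size $O(k^2)$. My approach is to adapt the contraction-based construction used for general graphs (which gives $O(k^4)$) and exploit the OS geometry to sharpen the counting. The crucial structural fact is that shortest paths between pairs of terminals lying on a single face can be chosen \emph{non-crossing} (a Monge / four-point property for planar graphs). Consequently, the union of a canonical family of $k \choose 2$ terminal-to-terminal shortest paths embeds as a planar subdivision whose branching vertices (degree $\ge 3$ inside the union) are the only non-terminal vertices that need to survive the contraction. Non-crossingness lets one bound the number of such branching vertices by $O(k^2)$ via an Euler-formula or sweep argument over the $k$ shortest-path trees rooted at the terminals, giving $|V(H)|=O(k^2)$.

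The main obstacle is precisely this $O(k^2)$ bound on branching vertices: the naive analysis of crossings of $\binom{k}{2}$ paths gives $\Theta(k^4)$, and one must use the non-crossing structure together with the fact that each tree $T_i$ of shortest paths rooted at $t_i$ spans only $k-1$ other terminals (so has $O(k)$ branching vertices of degree $\ge 3$), summed over $i$ to give $O(k^2)$. A further subtlety is to ensure that after contracting the remaining degree-$2$ chains, each contracted edge can be assigned a weight equal to its original length and that distances among $\{t_1,\dots,t_k\}$ are preserved exactly; this uses the standard argument that contracting an internal vertex of a shortest path does not change pairwise distances. The $\Omega(k^2)$ lower bound mentioned in the introduction shows this counting is tight.
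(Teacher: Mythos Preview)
Your approach is genuinely different from the paper's, which never invokes planar duality or distance-preserving minors: the paper instead embeds the OS instance into a grid, then into a ``half-grid'', and applies a sequence of Wye-Delta transformations to shrink the half-grid to one of size $O(k^2)$; the distance sparsifier is obtained by rerunning the same pipeline with length-preserving (rather than cut-preserving) Wye-Delta rules, and the flow sparsifier is derived from the cut sparsifier via the Okamura--Seymour flow/cut gap, as you also propose.

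There are, however, two genuine gaps in your plan. First, your core technical claim---that an OS instance admits a distance-preserving \emph{minor} of size $O(k^2)$---is strictly stronger than what the theorem asserts and what the paper proves: the paper's distance sparsifier is planar but explicitly \emph{not} claimed to be a minor (see the remark following the theorem statement and the table, where the best listed minor bound for planar OS is still $O(k^4)$). Your counting argument does not establish this stronger claim. You bound the branching vertices of each shortest-path tree $T_i$ by $O(k)$ and sum to $O(k^2)$, but a vertex of degree $\ge 3$ in the union $\bigcup_i T_i$ need not be a branching vertex in any single $T_i$: two trees can each pass straight through a vertex (degree $2$ in each) yet together give it degree $4$. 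The non-crossing/Monge property does not rule this out; for cyclically ordered $t_1,t_2,t_3,t_4$ the paths $\pi(t_1,t_3)$ and $\pi(t_2,t_4)$ must meet, and such meeting points are exactly the branching vertices you have not accounted for.

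Second, the ``dualise back'' step for cuts is not well-defined as written. In $\Gs$ your terminals are the split face-vertices $f_1,\dots,f_k$, which correspond to \emph{arcs} of the outer face of $G$ between consecutive $t_i$, not to the primal terminals themselves. Taking a minor $H^*$ of $\Gs$ that preserves the $f_i$-distances and then dualising does not automatically yield a graph containing $t_1,\dots,t_k$ as vertices, let alone one in which minimum terminal cuts are the right quantities. You would additionally need that the $f_i$ remain on a common face of $H^*$ (not guaranteed by an arbitrary distance-preserving minor) and a construction that reinstates primal terminals after dualising; contraction/deletion duality alone does not supply this.
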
 

We remark that all the above sparsifiers can be constructed in polynomial time (in $n$ and $k$), but we will not optimize the running time here. As we mentioned above, previously the only known upper bound on the size of quality-$1$ cut and flow sparsifiers for OS instance was $O(k^22^{2k})$, given by~\cite{KrauthgamerR13,andoni}. Our upper bound for cut sparsifier also matches the lower bound of $\Omega(k^2)$ for OS instance given by~\cite{KrauthgamerR13}. More specifically, in \cite{KrauthgamerR13}, an OS instance (that is a grid in which all terminals lie on the boundary) is constructed, and used to show that any mimicking network for this instance needs $\Omega(k^2)$ edges, which is thus a lower bound for planar graphs (see the table below for an overview). Note that that even though our distance sparsifier is not necessarily a minor of the original graph $G$, it still shares the nice property of being planar as $G$. Furthermore, Krauthgamer and Zondiner~\cite{KrauthgamerZ12} proved that there exists a $k$-terminal planar graph $G$ (not necessarily an OS instance), such that any quality-$1$ distance sparsifier of $G$ that is planar requires at least $\Omega(k^2)$ vertices. 

\vspace{-0.2cm}
\renewcommand{\arraystretch}{1.1}
\begin{center}
\begin{tabular}{l|l|l|l}
Type of sparsifier & Graph family &  Upper Bound &  Lower Bound  \\ \hline \hline
Cut & Planar & $O(k2^{2k})$~\cite{KrauthgamerR13} & $|E(G')| \geq \Omega(2^{k})$~\cite{karpov2017exponential} 
\\ \hline
Cut & Planar OS & $O(k^2)$~[\textbf{new}] & $|E(G')| \geq \Omega(k^{2})$~\cite{KrauthgamerR13} \\ \hline \hline
Flow & Planar OS & $O(k^{2}2^{2k})$~\cite{andoni} & follows from cut  \\ \hline
Flow & Planar OS & $O(k^2)$~[\textbf{new}] & follows from cut \\ \hline \hline
Distance (minor) & Planar OS & $O(k^{4})$~\cite{distancepreserving} & $\Omega(k^2)$~\cite{distancepreserving}   \\ \hline
Distance (planar) & Planar OS & $O(k^{2})$~[\textbf{new}] &  \\ \hline
\end{tabular}
\end{center}

We further provide a lower bound on the size of any \emph{data structure} (not necessarily a graph) that approximately preserves pairwise terminal distances of \emph{general} $k$-terminal graphs, which gives a trade-off between the distance stretch and the space complexity. 
 
\begin{theorem}~\label{thm:incompressibility}
For any $\varepsilon > 0$ and $t \geq 2$, there exists a (sparse) $k$-terminal $n$-vertex graph such that $k=o(n)$, and any data structure that approximates pairwise terminal distances within a multiplicative factor of $t - \varepsilon$ or an additive error $2t-3$ must use $\Omega(k^{1+1/(t-1)})$ bits. 
\end{theorem}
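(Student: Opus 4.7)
The plan is to adapt the classical encoding-style lower bound for approximate distance oracles (in the spirit of Thorup--Zwick and P\v{a}tra\c{s}cu--Roditty--Thorup) to the sparse, sublinear-terminal regime required by the statement. The core idea is to exhibit an exponentially large family of input graphs that any such data structure must be able to distinguish, then to conclude via an information-theoretic counting argument.

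First, I would invoke the standard high-girth extremal construction: by known lower bounds matching the Erd\H{o}s girth conjecture (realized, for example, by incidence graphs of generalized polygons or by the Lazebnik--Ustimenko--Wenger family, which together cover every integer $t \geq 2$ up to constants we absorb), pick an undirected graph $F$ on $k$ vertices whose girth is at least $2t - 1$ and whose edge count is $m_F = \Omega(k^{1 + 1/(t-1)})$. Declare the $k$ vertices of $F$ to be the terminal set $K$. To ensure $k = o(n)$ while keeping the ambient graph sparse, I pad $F$ by attaching a path on $n - k$ fresh non-terminal vertices to one fixed terminal; this padding does not alter any terminal-to-terminal distance, and taking $n$ slightly super-linear in $k$ yields an $n$-vertex graph $G$ with $|E(G)| = O(n)$ and $k/n \to 0$.

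The distinguishing family will consist of the graphs $\{G_{E'} : E' \subseteq E(F)\}$, where $G_{E'}$ retains precisely the $F$-edges in $E'$ (keeping the padding path intact). For any pair $u, v \in K$ with $(u,v) \in E(F)$: if $(u,v) \in E'$ then $d_{G_{E'}}(u,v) = 1$; if $(u,v) \notin E'$, any alternative $u$-$v$ path lies inside $F \setminus \{(u,v)\}$ and, closed by the edge $(u,v)$, forms a cycle of length at least $2t - 1$ in $F$, forcing $d_{G_{E'}}(u,v) \geq 2t - 2$. A data structure with multiplicative stretch $t - \varepsilon$ must therefore report $\tilde{d}(u,v) \leq t - \varepsilon$ in the first case and $\tilde{d}(u,v) \geq 2t - 2$ in the second; these regimes are disjoint for every $t \geq 2$ and $\varepsilon > 0$, since $t - \varepsilon < 2t - 2$ is equivalent to $t > 2 - \varepsilon$. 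An analogous calculation handles the additive case. Consequently the data structure recovers $E'$ bit-for-bit from its outputs on the $m_F$ pairs in $E(F)$, so its state space has size at least $2^{m_F}$, which yields the claimed lower bound of $\Omega(k^{1 + 1/(t-1)})$ bits.

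The main obstacle I anticipate is forcing strict separation in the purely additive case: with girth exactly $2t - 1$ the edge-side bound $\tilde{d} \leq 1 + (2t-3) = 2t - 2$ meets the non-edge-side bound $\tilde{d} \geq 2t - 2$ at the boundary, so the two regimes are not \emph{strictly} disjoint. I would resolve this via the standard edge-subdivision trick (replacing every $F$-edge by a path of length two, which doubles all relevant distances and opens a unit gap) or, equivalently, by appealing to a girth $\geq 2t$ variant of the extremal construction. A secondary subtlety is verifying that extremal graphs of the required edge count exist for every integer $t \geq 2$ and not only for the polygon-supported values of $t - 1$; this can be handled either by probabilistic constructions that achieve the exponent up to sub-polynomial factors absorbed in the $\Omega(\cdot)$, or by composing with small-girth building blocks.
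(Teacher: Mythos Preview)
Your high-level encoding strategy matches the paper's, but the combinatorial construction differs. Rather than a high-girth graph on the terminal set, the paper takes a Steiner Triple System on $[k]$, invokes a lemma of \cite{cheung2016} to extract a subset $\mathcal S'$ of $\Omega(k^{1+1/(t-1)})$ triples whose ``detouring graph'' has no cycle of length $\le t$, and forms the bipartite incidence graph: terminals on one side, triples in $\mathcal S'$ on the other, each triple joined to its three elements. The family is then indexed by $R\subseteq\mathcal S'$; for each triple outside $R$ one incident edge is dropped. A covered terminal pair has distance $2$ and an uncovered one has distance $\ge 2t$, so the multiplicative ratio is exactly $t$ and the additive gap is $2t-2$---the boundary issue you flag in the additive case simply does not arise. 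Sparsity and $k=o(n)$ come for free since $n=k+|\mathcal S'|=\Theta(k^{1+1/(t-1)})$ and every non-terminal has degree at most $3$.

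The real gap in your route is unconditionality. A $k$-vertex graph of girth $\ge 2t-1$ with $\Omega(k^{1+1/(t-1)})$ edges is exactly the Erd\H{o}s girth conjecture, proved only for $t-1\in\{1,2,3,5\}$. Your proposed probabilistic fix does not close this: the standard random-graph-and-delete argument gives girth $\ge g$ with only $\Omega(k^{1+1/(g-2)})$ edges, which for $g\approx 2t$ is $\Omega(k^{1+1/(2(t-1))})$---half the required exponent increment, a polynomial loss not absorbed by the $\Omega(\cdot)$. What the Steiner-system route buys is precisely an unconditional bound: because any two terminals lie in a unique triple, the bipartite incidence graph already has girth $\ge 6$ and only $O(k^\ell)$ cycles of length $2\ell$, so a sample-and-delete argument on the triple side yields $|\mathcal S'|=\Omega(k^{1+1/(t-1)})$ for every integer $t\ge 2$ without any girth conjecture. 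That is the missing idea.
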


\textbf{Remark.}~ Recently and independently of our work, Krauthgamer and Rika~\cite{krauthgamer2017refined} constructed quality-$1$ cut sparsifiers of size $O(\gamma^{5}2^{2\gamma}k^4)$ for planar graphs whose terminals are incident to at most $\gamma=\gamma(G)$ faces. In comparison with our upper-bound which only considers the case $\gamma = 1$, the size of our sparsifiers from Theorem~\ref{thm: mainThm} is better by a $\Omega(k^{2})$ factor. Subsequent to our work, Karpov et al.~\cite{karpov2017exponential} proved that there exists edge-weighted $k$-terminal planar graphs that require $\Omega(2^k)$ edges in any exact cut sparsifier, which implies that it is necessary to have some additional assumption (e.g., $\gamma=O(1)$) to obtain a cut sparsifier of $k^{O(1)}$ size. 

\subsection{Our Techniques}
\label{subsec:techniques}
Our results for reachability-preserving minors are obtained by exploiting a technique of counting ``branching'' events between shortest paths in the directed setting (this technique was introduced by Coppersmith and Elkin~\cite{coppersmithE06}, and has also been recently leveraged by Bodwin~\cite{Bodwin17} and Abboud and Bodwin~\cite{AB18reachability}). Using this and a consistent tie-breaking scheme for shortest paths, we can efficiently construct a RPM for general digraphs of size $O(k^4)$ and by using a more refined analysis of branching events (see \cite{AB18reachability}), we can further reduce the size to be $O(k^3)$. We then combine our construction with a decomposition for planar digraphs (see~\cite{Thorup04}), to show that it suffices to maintain the reachability information among $O(k\log k)$ terminal pairs, instead of the naive $O(k^2)$ pairs, and then construct a RPM for planar digraphs with $O(k^2\log k)$ vertices. The lower-bound follows by constructing a special class of $k$-terminal directed grids and showing that any RPM for such grids must use $\Omega(k^2)$ vertices. Similar ideas for proving the lower bound on the size of distance-preserving minors for undirected graphs have been used by Krauthgamer et al.~\cite{distancepreserving}.

We construct our quality-$1$ cut and distance sparsifiers by repeatedly performing \emph{Wye-Delta transformations}, which are local operations that preserve cut values and distances and have proven very powerful in analyzing electrical networks and in the theory of circular planar graphs~(see e.g., \cite{curtis98,feo1993}). Khan and Raghavendra~\cite{KhanR14} used Wye-Delta transformations to construct quality-$1$ cut sparsifiers of size $O(k)$ for trees and outerplanar graphs, while our case (i.e., the planar OS instances) is more general and complicated and previously it was not clear at all how to apply such transformations to a more broad class of graphs. Our approach is as follows. Given a $k$-terminal planar graph with terminals lying on the same face, we first embed it into some large grid with terminals lying on the boundary of the grid. Next, we show how to embed this grid into a ``more suitable'' graph, which we will refer to as ``half-grid''. Finally, using the Wye-Delta operations, we reduce the ``half-grid'' into another graph whose number of vertices can be bounded by $O(k^{2})$. Since we  argue that the above graph reductions preserve exactly all terminal minimum cuts, our result follows. Gitler~\cite{Gitler91} proposed a similar approach for studying the reducibility of multi-terminal graphs with the goal to classify all Wye-Delta reducible graphs, which is very different from our motivation of constructing small vertex sparsifiers with good quality. 

The distance sparsifiers can be constructed similarly by slightly modifying the Wye-Delta operation. Our flow sparsifiers follow from the construction of cut sparsifiers and the flow/cut gaps for OS instances (which has been initially observed by Andoni et al.~\cite{andoni}). Our lower bound on the space complexity of any compression function approximately preserving terminal pairwise distance is derived by combining extremal combinatorics construction of Steiner Triple System that was used to prove lower bounds on the size of distance approximating minors (see~\cite{cheung2016}) and the incompressibility technique from~\cite{matousek96}.

\subsection{Related Work}
There has been a long line of work on investigating the tradeoff between the quality of the vertex sparsifier and its size~(see e.g., \cite{englert10,KrauthgamerR13,andoni} and Section~\ref{subsec:techniques}). (Throughout, cut, flow and distance sparsifiers will refer to their vertex versions.) Quality-$1$ \emph{cut sparsifiers} (or equivalently, mimicking networks) were first introduced by Hagerup et al.~\cite{HagerupKNR98}, who proved that for any graph $G$, there always exists a mimicking network of size $O(2^{2^k})$. 
Krauthgamer and Rika~\cite{KrauthgamerR13} showed how to build a mimicking network of size $O(k^22^{2k})$ for any planar graph $G$ that is minor of the input graph. They also proved a lower bound of $\Omega(k^2)$ on the number of edges of the mimicking network of planar graphs, and a lower bound of $2^{\Omega(k)}$ on the number of vertices of the mimicking network for general graphs. 

Quality-$1$ vertex flow sparsifiers have been studied in~\cite{andoni, GoranciR16}, albeit only for restricted families of graphs like quasi-bipartite, series-parallel, etc. It is not known if any general undirected graph $G$ admits a constant quality flow sparsifier with size independent of $|V(G)|$ and the edge capacities. For the quality-$1$ distance sparsifiers, Krauthgamer et al.~\cite{distancepreserving} introduced the notion of \emph{distance-preserving minors}, and showed an upper-bound of size $O(k^4)$ for general undirected graphs. They also gave a lower bound of $\Omega(k^2)$ on the size of such a minor for planar graphs. Recently, Abboud et al.~\cite{AbboudGMW17} show how to compress a planar graph metric using only $\tilde{O}(\min\{k^2,\sqrt{k \cdot n}\})$ bits. However, it is not clear whether their compression scheme can be represented by a graph on $k$ terminals which matches their bound. 

Over the last two decades, there has been a considerable amount of work on understanding the tradeoff between the sparsifier's quality $q$ and its size for $q>1$, i.e., when the sparsifiers only \emph{approximately} preserve the corresponding properties~\cite{juliasteiner,andoni,moitra09,leighton,charikar,englert10,mm10,gupta01,chekuri2006embedding,chan06,englert10,kamma2015,cheung2016, Cheung17,Filtser17,GajjarR17,DaubelDKMS17}. 

\section{Preliminaries} \label{sec: preli}
Let $G=(V,E)$ be a directed graph with terminal set $K \subset V$, $|K|=k$, which we will refer to as a \emph{$k$-terminal digraph}. We say $G$ is a \emph{$k$-terminal} DAG if $G$ has no directed cycles. The \emph{in-degree} of a vertex $v$, denoted by $\deg_{G}^{-}(v)$, is the number of edges directed towards $v$ in $G$. A digraph $H=(V_H,E_H)$, $K \subset V_H$ is a (\emph{vertex}) \emph{reachability sparsifier} of $G$ if for any $x ,x' \in K$, there is a directed path from $x$ to $x'$ in $H$ iff there is a directed path from $x$ to $x'$ in $G$. If $H$ is obtained by performing minor operations in $G$, then we say that $H$ is a \emph{reachability-preserving minor} of $G$. We define the \emph{size} of $H$ to be the number of non-terminals in $H$, i.e. $|V_H \setminus K|$.

Let $G=(V,E,c)$ be an undirected graph with terminal set $K \subset V$ of
cardinality $k$, where $c: E \rightarrow \mathbb{R}_{\geq 0}$ assigns a non-negative
capacity to each edge. We will refer to such a graph as a \emph{k-terminal graph}. 
Let $U \subset V$ and $S \subset K$. We say that a cut $(U, V \setminus U)$ is
$S$-separating if it separates the terminal subset $S$ from its complement $K
\setminus S$, i.e., $U \cap K$ is either $S$ or $K \setminus S$. We will refer to such cut as a \emph{terminal cut}. The cutset
$\delta(U)$ of a cut $(U, V \setminus U)$ represents the edges that have one
endpoint in $U$ and the other one in $V \setminus U$. The cost
$\capacity_G(\delta(U))$ of a cut $(U, V \setminus U)$ is the sum over all
capacities of the edges belonging to the cutset. We let $\text{mincut}_{G}(S, K
\setminus S)$ denote the minimum cost of any $S$-separating cut of
$G$. A graph $H = (V_H, E_H, c_H)$, $K \subset V_H$ is a \emph{quality-$q$} (\emph{vertex}) \emph{cut sparsifier} of $G$ with $q \geq 1$ if for any $S \subset K,
~ \mincut_G(S, K \setminus S) \leq \mincut_H(S, K \setminus S) \leq q \cdot
\mincut_G(S, K \setminus S).$

\section{Reachability-Preserving Minors for General Digraphs} \label{sec: minorsGeneralDigraphs}
In this section, we provide two constructions for reachability-preserving minors for general digraphs. The resulting minor from the first construction has size $O(k^4)$, which is larger than the size $O(k^3)$ of the minor from the second construction. However, our first construction can be implemented in polynomial time (in $n$), while the second one requires exponential running time.
\subsection{A Warm-up: An Upper Bound of $O(k^4)$}
In this section we show that any $k$-terminal digraph admits a reachability-preserving minor of size $O(k^{4})$. We accomplish this by first restricting our attention to DAGs, and then showing how to generalize the result to any digraph. 


We start by introducing the following definition. Given a digraph $G$ with a terminal set $K$ of size $k$ and a pair-set $P\subseteq K\times K$, we say that $H$ is a reachability-preserving minor with respect to $P$, if $H$ is a minor of $G$ that preserves the reachability information only among the pairs in $P$. Note that in the definition of vertex reachability sparsifiers, the \emph{trivial} pair-set $P$ contains $k(k-1)$ terminal-pairs, i.e., for any pair $x,x' \in K$, both $(x,x')$ and $(x',x)$ belong to $P$. Whenever we omit $P$, we mean to preserve the reachability information among all possible terminal pairs.

We next review a useful scheme for breaking ties between shortest paths connecting some vertex pair from $P$. This tie-breaking is usually achieved by slightly perturbing the edge lengths of the original graph such that no two paths have the same length (note that in our case, edge lengths are initially one). The perturbation gives a \emph{consistent} scheme in the sense that whenever $\pi$ is chosen as a shortest path, every sub-path of $\pi$ is also chosen as a shortest path. Below we formalize these ideas using two definitions and a lemma from~\cite{Bodwin17}.

\begin{definition}[Tie-breaking Scheme] Given a $k$-terminal $G$, a \emph{shortest path tie breaking scheme} is a function $\pi$ that maps every pair of vertices $(s,t)$ to some shortest path between $s$ and $t$ in $G$. For any pair-set $P$, we let $\pi(P)$ denote the union over all shortest paths between pairs in $P$ with respect to the scheme $\pi$.
\end{definition}

\begin{definition}[Consistency] A tie-breaking scheme is consistent if, for all vertices $y,x,x',y' \in V$, if $x,x' \in \pi(y,y')$ with $d(y,x) < d(y,x')$, then $\pi(x,x')$ is a sub-path of $\pi(y,y')$.
\end{definition}


\begin{lemma}[\cite{Bodwin17}] \label{lemm: consistency} For any $k$-terminal G, there is a consistent tie-breaking scheme in $G$.
\end{lemma}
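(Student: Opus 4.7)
The plan is to establish Lemma~\ref{lemm: consistency} by perturbing the unit edge lengths so that shortest paths between every pair become unique, and then observing that uniqueness automatically delivers consistency.

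First I would enumerate the edges $e_1, \ldots, e_m$ and reweight edge $e_i$ to $1 + \varepsilon_i$, where the $\varepsilon_i$ are positive reals chosen to be linearly independent over $\mathbb{Q}$ and all less than $1/m$ (for instance, $\varepsilon_i = \delta \cdot 2^{-i}$ for a sufficiently small $\delta > 0$). Two distinct simple paths use different edge sets, so their perturbed lengths differ by a nonzero rational combination of the $\varepsilon_i$ and are therefore distinct. Since every $\varepsilon_i < 1/m$, the total perturbation accumulated along any simple path is strictly less than $1$, so no path of greater unit length can overtake a unit-shortest path after perturbation. Hence among the original shortest $s$-$t$ paths there is a unique minimizer of the perturbed length $\tilde d$; define $\pi(s,t)$ to be this path. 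By construction $\pi(s,t)$ is a shortest $s$-$t$ path in the original $G$.

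Second I would verify consistency. Fix $y, x, x', y'$ with $x, x' \in \pi(y,y')$ and $d(y,x) < d(y,x')$, and let $\sigma$ be the sub-path of $\pi(y,y')$ from $x$ to $x'$. Since a subpath of a shortest path is shortest, $\sigma$ is a shortest $x$-$x'$ path in $G$. Suppose for contradiction that $\sigma \neq \pi(x,x')$. Then splicing $\pi(x,x')$ into $\pi(y,y')$ in place of $\sigma$ produces another path from $y$ to $y'$ that is still shortest in the unit metric, and whose perturbed length equals $\tilde d(\pi(y,y')) - \tilde d(\sigma) + \tilde d(\pi(x,x'))$. Because $\pi(x,x')$ is the unique $\tilde d$-minimizer among unit-shortest $x$-$x'$ paths and differs from $\sigma$, we have $\tilde d(\pi(x,x')) < \tilde d(\sigma)$, so the spliced path has strictly smaller $\tilde d$ than $\pi(y,y')$, contradicting the choice of $\pi(y,y')$. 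Therefore $\sigma = \pi(x,x')$, which is exactly the consistency requirement.

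I do not anticipate a real obstacle, since the proof reduces to two elementary facts: a generic enough perturbation singles out a unique minimizer among shortest paths, and a small enough perturbation does not corrupt which paths are unit-shortest. Both are guaranteed by the choice above. An alternative route, taking $\pi(s,t)$ to be the lexicographically smallest shortest path (under some fixed vertex ordering), reaches the same conclusion via essentially the same exchange argument but requires a little more bookkeeping to justify the subpath-substitution step.
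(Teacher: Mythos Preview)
Your proposal is correct and matches the approach the paper itself sketches: the paper does not give a formal proof of this lemma (it is cited from \cite{Bodwin17}), but immediately before stating it the paper explains that consistency ``is usually achieved by slightly perturbing the edge lengths of the original graph such that no two paths have the same length,'' which is exactly what you carry out. One small quibble: your example $\varepsilon_i = \delta \cdot 2^{-i}$ does not give values linearly independent over $\mathbb{Q}$, but it does give the property you actually use (distinct edge subsets have distinct perturbation sums), so the argument is unaffected.
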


We remark that for any $k$-terminal graph with $n$ vertices, the consistent tie-breaking scheme can be constructed in polynomial (in $n$) time~\cite{coppersmithE06}. 

Let $G$ be a $k$-terminal DAG. Given a tie-breaking scheme $\pi$, the first step to construct a reachability-preserving minor is to start with an empty graph $H$ and then for every pair $p \in P$, repeatedly add the shortest-path $\pi(p)$ to $H$. We can alternatively think of this as deleting vertices and edges that do not participate in any shortest path among terminal-pairs in $P$ with respect to the scheme $\pi$. Clearly, the DAG $H=(V_H, E_H)$, $E_H := \pi(P)$, is a minor of $G$ and preserves all reachability information among pairs in $P$. We next review the notion of a branching event, which will be useful to bound the size of $H$.  

\begin{definition}[Branching Event] A \emph{branching event} is a set of two distinct directed edges $\{e_1=(u_1,v), e_2=(u_2,v)\}$ that enter the same node $v$.
\end{definition}  

\begin{lemma} \label{lemm: branching} The \emph{DAG} $H$ has at most $|P|(|P|-1|)/2$ branching events.
\end{lemma}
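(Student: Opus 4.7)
The plan is to define an injection from the set of branching events of $H$ into the set of unordered pairs $\binom{P}{2}$ of distinct elements of the pair-set, which at once yields the bound $|P|(|P|-1)/2 = \binom{|P|}{2}$.

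First, I would set up the charging. For each branching event $\{e_1 = (u_1, v),\, e_2 = (u_2, v)\}$ at a vertex $v$, the identity $E_H = \pi(P)$ gives at least one $p_1 \in P$ with $e_1 \in \pi(p_1)$ and at least one $p_2 \in P$ with $e_2 \in \pi(p_2)$; I would fix a deterministic ordering on $P$ once and for all so that these choices are well-defined. Since any single shortest path $\pi(p)$ is a simple directed walk in the DAG $G$, it enters $v$ on at most one edge, so $p_1 \neq p_2$ and the label $\{p_1, p_2\}$ is a legitimate element of $\binom{P}{2}$.

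The heart of the proof is injectivity. Suppose two distinct branching events, at vertices $v$ and $v'$, both receive the label $\{p_1, p_2\}$. Then $\pi(p_1)$ and $\pi(p_2)$ both pass through $v$ and both pass through $v'$, entering $v$ on the two distinct in-edges of the first event and entering $v'$ on the two distinct in-edges of the second. I would split on the relative order of $v$ and $v'$ along the two paths. In the same-order case (WLOG $v$ precedes $v'$ on both paths), the consistency of $\pi$ (Lemma~\ref{lemm: consistency}) forces the sub-paths of $\pi(p_1)$ and $\pi(p_2)$ from $v$ to $v'$ both to coincide with the canonical path $\pi(v, v')$; in particular their final edges, which enter $v'$, are identical, contradicting the branching event at $v'$. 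In the opposite-order case (WLOG $v$ precedes $v'$ on $\pi(p_1)$, while $v'$ precedes $v$ on $\pi(p_2)$), concatenating the sub-path of $\pi(p_1)$ from $v$ to $v'$ with the sub-path of $\pi(p_2)$ from $v'$ to $v$ yields a directed closed walk through $v$, contradicting the DAG hypothesis on $G$.

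The main obstacle is essentially bookkeeping: one must track which of the two in-edges at each of $v, v'$ lies on which of $\pi(p_1), \pi(p_2)$, remembering that since the labels are unordered either matching is a priori possible, and one must invoke consistency in the correct direction to get the right sub-path identification. A minor subtlety about the well-definedness of the labeling when several pairs use the same edge is dealt with up front by fixing a deterministic tie-breaking on $P$. Once the injection is established, the bound $|P|(|P|-1)/2$ follows immediately from $\bigl|\binom{P}{2}\bigr| = |P|(|P|-1)/2$.
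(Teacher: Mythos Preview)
Your proposal is correct and takes essentially the same approach as the paper: the paper also associates each edge with a pair in $P$, then argues that any two pairs $p_1,p_2$ can account for at most one branching event by combining consistency of $\pi$ (forcing a shared subpath $\pi(v,v')$, hence the same entering edge at $v'$) with acyclicity of the DAG (ruling out the opposite-order case). Your injection framing and explicit deterministic tie-breaking on $P$ are a slightly more careful packaging of the same two-case argument, but the content is identical.
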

\begin{proof}
	First, note that by construction of $H$, we can associate each edge $e \in E_H$ with some pair $p \in P$ such that $e \in \pi(p)$. To prove the lemma, it suffices to show that for any two terminal-pairs $p_1, p_2 \in P$, there is at most one branching event in the graph induced by $\pi(p1) \cup \pi(p_2)$. Suppose towards contradiction that there exist two terminal pairs $p_1, p_2$ that have two branching events in $\pi(p1) \cup \pi(p2)$. More specifically, we assume there exist two branching events
	\[
	b := \{e_1 = (u_1,v), e_2=(u_2,v)\} \text{ and } b' := \{e_1 = (u_1',v'), e_2=(u_2',v')\},
	\]
	where $e_i$ and $e_i'$ lie on the dipath $\pi(p_i)$, for $i = 1,2$.
	
	Assume without loss of generality that the vertex $v$ appears before $v'$ in the dipath $\pi(p_1)$. We then claim that $v$ must also appear before $v'$ in the dipath $\pi(p_2)$, since otherwise we would have a directed cycle between $v$ and $v'$, thus contradicting the fact that $H$ is acyclic. Since the tie-breaking scheme $\pi$ is consistent (Lemma \ref{lemm: consistency}), it follows that the dipaths $\pi(p_1)$ and $\pi(p_2)$ must share the subpath $\pi(v,v')$. Thus, $\pi(p_1)$ and $\pi(p_2)$ use the same edge that enters the node $v'$, i.e., $e_1' = e_2'$. However, by definition of a branching event, the edges that enter a node must be distinct, contradicting the fact that $b'$ is a branching event. This implies that there cannot be two branching events for the terminal pairs $p_1$ and $p_2$, thus proving the lemma.  
\end{proof}
We now have all the tools to present our algorithm for constructing reachability-preserving minors for DAGs. 
\vspace{-0.2cm}
\begin{algorithm}[H]
\caption{\textsc{MinorSparsifyDag} ($k$-terminal DAG $G$, pair-set $P$)}
\begin{algorithmic}[1]
\State Set $H = \emptyset$ and compute a consistent tie-breaking scheme $\pi$ for shortest paths in $G$.
\State\label{alg:minor_2} For each $p \in P$,  add the shortest path $\pi(p)$ to $H$.
\While{there is an edge $(u,v)$ directed towards a non-terminal $v$ with $\deg_{H}^{-}(v) = 1$}
\State\label{alg:minor_5} Contract the edge $(u,v)$.
\EndWhile
\State \Return $H$
\end{algorithmic}
\label{algo: minor}
\end{algorithm}
\vspace{-0.3cm}


\begin{lemma} \label{lemma: DAG}
Given a $k$-terminal \emph{DAG} $G$ with a pair-set $P$, \emph{Algorithm~\ref{algo: minor}} outputs a reachability-preserving minor $H$ of size $O(|P|^2)$ for $G$ with respect to $P$.
\end{lemma}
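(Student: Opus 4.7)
The plan is to prove separately that (i) $H$ is a reachability-preserving minor of $G$ with respect to $P$, and (ii) the number of non-terminals of $H$ is $O(|P|^2)$.

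For (i), I would first note that the graph produced by Step~\ref{alg:minor_2} is a minor of $G$ (obtained by edge and vertex deletions) that trivially preserves reachability for every $p \in P$, since the shortest path $\pi(p)$ is retained. I would then argue that each iteration of the while loop preserves both properties: contracting an edge $(u,v)$ where $v$ is a non-terminal with $\deg_H^-(v)=1$ is a legal minor operation, and since every directed path that passes through $v$ must enter via the unique in-edge $(u,v)$, identifying $u$ and $v$ into a single vertex (labelled as the terminal $u$ whenever $u \in K$, and non-terminal otherwise) preserves the existence of a directed path between every pair of terminals.

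For (ii), the key observation is that upon termination of the while loop, every non-terminal of $H$ has in-degree at least $2$. The invariant that every non-terminal lies on some $\pi(p)$ with $p \in P$ is preserved by both the initialization and by the contraction step, so such a vertex has in-degree at least $1$; combined with the termination condition of the while loop, its in-degree must then be at least $2$. I would then invoke Lemma~\ref{lemm: branching} to bound the total number of branching events in $H$ by $|P|(|P|-1)/2$. Since each non-terminal $v$ contributes $\binom{\deg_H^-(v)}{2}\geq 1$ distinct branching events, summing yields
\begin{equation*}
|V_H \setminus K| \;\leq\; \sum_{v \in V_H \setminus K} \binom{\deg_H^-(v)}{2} \;\leq\; \frac{|P|(|P|-1)}{2} \;=\; O(|P|^2).
\end{equation*}

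The step I expect to be most delicate is justifying that the branching-event bound of Lemma~\ref{lemm: branching}, whose proof is naturally stated for the DAG produced by Step~\ref{alg:minor_2}, still applies to the post-contraction $H$. The cleanest way to handle this is to observe that contracting an edge $(u,v)$ with $\deg_H^-(v)=1$ leaves the quantity $\sum_x \binom{\deg_H^-(x)}{2}$ unchanged: $v$ is removed, $u$ becomes the merged vertex $w$ with $\deg_H^-(w)=\deg_H^-(u)$, and the in-degrees of all other vertices are preserved. Equivalently, every edge of the final $H$ corresponds canonically to an edge of the Step~\ref{alg:minor_2} graph and is still associated with some pair $p \in P$, so the proof of Lemma~\ref{lemm: branching} applies verbatim after relabelling endpoints.
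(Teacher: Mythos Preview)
Your proposal is correct and follows essentially the same approach as the paper: both establish that Step~\ref{alg:minor_2} yields a reachability-preserving minor, that the while-loop contractions maintain this, that every surviving non-terminal has in-degree at least~$2$, and then invoke Lemma~\ref{lemm: branching} together with the observation that the contractions do not increase the number of branching events. Your discussion of the ``delicate step'' is in fact more careful than the paper's own proof, which simply asserts in one line that the contractions do not affect the branching count.
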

\begin{proof}
	We first argue that $H$ is a reachability-preserving minor with respect to the terminals. Indeed, after Line \ref{alg:minor_2} of the algorithm, graph $H$ can viewed as deleting vertices and edges from $G$ that do not lie on any of the shortest path among terminal pairs in $P$, chosen according to the scheme $\pi$. Thus, at this point $H$ is clearly a minor of $G$ that preserves the reachability information among the pairs in $P$. The edge contractions we perform in the remaining part of the algorithm guarantee that the resulting $H$ remains a reachability-preserving minor of $G$ with respect to $P$. 
	
	To bound the size of $H$, note that every non-terminal $v \in V_H \setminus K$ has in-degree at least $2$, and thus it corresponds to at least one branching event. Lemma \ref{lemm: branching} shows that the number of branching events is at most $O(|P|^2)$. Observing that edge contractions in Line \ref{alg:minor_5} do not affect this number, 
	we get that the size of $H$ is $O(|P|^2)$.
\end{proof}
%


We next show how the construction of reachability-preserving minors can be reduced from general digraphs to DAGs, and prove the following theorem.

\begin{theorem} \label{thmi: general1}
	Given a $k$-terminal digraph G with a pair-set $P$, there exists a polynomial-time algorithm that outputs a reachability-preserving minor $H$ of size $O(|P|^2)$ with respect to $P$.
\end{theorem}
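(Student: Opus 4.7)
The plan is to reduce the general-digraph case to the DAG case of Lemma~\ref{lemma: DAG} by passing to the condensation of the strongly connected components (SCCs) of $G$. Concretely, I first compute the SCCs of $G$ and form the condensation DAG $G_{\mathrm{cond}}$, whose vertices (\emph{super-vertices}) are the SCCs of $G$ and whose edges record the inter-SCC edges of $G$. I designate as super-terminals those super-vertices whose SCC contains a terminal appearing in some pair of $P$, and define the reduced pair-set
\[
P^{*} := \{(\mathrm{SCC}(t), \mathrm{SCC}(t')) : (t, t') \in P, \ \mathrm{SCC}(t) \ne \mathrm{SCC}(t')\},
\]
so that $|P^{*}| \le |P|$. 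Applying Algorithm~\ref{algo: minor} to $(G_{\mathrm{cond}}, P^{*})$ yields, by Lemma~\ref{lemma: DAG}, a reachability-preserving minor $H^{*}$ of $G_{\mathrm{cond}}$ with respect to $P^{*}$ of size $O(|P|^{2})$. Since Algorithm~\ref{algo: minor} only contracts into non-terminals, no two super-terminals of $G_{\mathrm{cond}}$ are ever merged: each super-vertex $S \in V(H^{*})$ corresponds to a collection $I_S$ of super-vertices of $G_{\mathrm{cond}}$ containing at most one super-terminal, so all terminals in $V_S := \bigcup_{C \in I_S} V(C)$ lie in a single SCC of $G$.

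Next I expand $H^{*}$ into a minor $H$ of $G$ as follows. For every super-terminal $S \in V(H^{*})$ with terminals $t_1^{S}, \dots, t_{m_S}^{S}$ (all lying in one SCC of $G$), I fix a directed closed walk through the $t_j^{S}$ in that SCC (which exists by strong connectivity) and contract all the non-terminal segments of this walk, so that the realization of $S$ in $H$ is exactly the directed cycle $t_1^{S} \to t_2^{S} \to \dots \to t_{m_S}^{S} \to t_1^{S}$ on the terminals alone; any remaining vertices of $V_S$ are folded into the branch set of $t_1^{S}$. For every non-super-terminal $S \in V(H^{*})$, I keep $S$ as a single non-terminal vertex whose branch set is $V_S$. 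Each edge of $H^{*}$ corresponds to a concrete inter-SCC edge of $G$ and becomes the analogous edge of $H$ between the corresponding realizations, while reachability for pairs of $P$ inside a common SCC is handled by the terminal cycle.

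For the size bound, super-terminals contribute zero non-terminals while non-super-terminal vertices of $H^{*}$ contribute exactly one non-terminal each, so $|V(H) \setminus K| \leq |V(H^{*})| = O(|P|^{2})$; all steps are polynomial time. I expect the main obstacle to be checking that $H$ is genuinely a reachability-preserving minor with respect to $P$. The minor property itself is routine: the branch sets are disjoint, each is connected in the underlying undirected graph of $G$, and the edges of $H$ are realized by edges of $G$ between these branch sets. The forward direction of reachability preservation decomposes any $G$-path between a pair in $P$ into intra-SCC segments (absorbed by the terminal cycle) and inter-SCC transitions (absorbed by $H^{*}$'s preservation of $P^{*}$-reachability in $G_{\mathrm{cond}}$). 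The subtler backward direction relies on a structural property of Algorithm~\ref{algo: minor}: because each contracted edge targets a non-terminal of in-degree one, every merged super-vertex of $H^{*}$ has a canonical ``root'' in $G_{\mathrm{cond}}$ from which all absorbed components are reachable, which guarantees that any path entering the realization of a super-vertex in $H$ arrives on the root side in $G$ and can therefore reach every outgoing edge via a $G$-path. Hence $H$-paths lift back to $G$-paths, completing the iff.
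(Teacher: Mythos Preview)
Your approach is essentially the paper's: condense SCCs to a DAG, invoke Lemma~\ref{lemma: DAG}, then re-expand. The paper orders the steps slightly differently—it first contracts away all non-terminals inside each terminal-bearing SCC (so that SCC is already a strongly connected digraph on terminals alone) \emph{before} condensing, which makes the final ``expand back'' step unambiguous.

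Your version has one small slip in the expansion. Folding \emph{all} remaining vertices of $V_S$ into the branch set of $t_1^S$ need not yield a connected branch set: a non-terminal SCC $C \in I_S$ may attach to the rest of $V_S$ only through a vertex that your closed-walk contraction has already placed in some other $t_j^S$'s branch set, so $\{t_1^S\text{'s segment}\} \cup C$ can be disconnected in the underlying undirected graph. The fix is easy—fold each leftover component into whichever $t_j^S$ it is actually adjacent to, or simply adopt the paper's preprocessing so that the terminal-SCC is already reduced to terminals before you condense. (Relatedly, the realization will not in general be ``exactly the directed cycle'' when the closed walk repeats non-terminals, but any strongly connected digraph on the $t_j^S$ suffices, so this does not affect correctness.)

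Your root-structure argument for the backward direction—that every merged super-vertex has a root into which all incoming edges arrive and from which all absorbed components are reachable—is correct and in fact more explicit than the paper's own treatment of that step.
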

Taking $P$ to be the trivial pair-set, we get a reachability-preserving minor of size $O(k^{4})$. 

\begin{proof}[Proof of Theorem~\ref{thmi: general1}]
Recall that a digraph is \emph{strongly connected} if there is a directed path between all pair of vertices. We proceed by first finding a decomposition of the graph into strongly connected components (SCCs)~\cite{Tarjan72}. We observe that each SCC that contains terminals can be contracted into a smaller component only on the terminals. Then contracting each SCC into a single vertex to obtain a DAG. Invoking Algorithm~\ref{algo: minor} on the resulting DAG gives some intermediate reachability-preserving minor. Finally, we show that this minor can be expanded back to produce a reachability-preserving minor for the original digraph. These steps are formally given in the procedure Algorithm~\ref{algo: minorSparsify}.

\begin{algorithm}[H]
	\caption{\textsc{MinorSparsify} ($k$-terminal digraph $G$, pair-set $P$)}
	\begin{algorithmic}[1]
		\State Compute a strongly connected component decomposition $\mathcal{D}$ of $G$.
		\State Let $f$ be some initially empty labelling that records the SCC of every vertex.
		\ForAll{SCC $C \in \mathcal{D}$} 
		\If{$C$ contains some terminal $x \in K$}
		\State For all $v \in C$, $f(v) = x$.
		\Else~{Choose some arbitrary $u \in C$, and set $f(v) = u$, for all $v \in C$.}
		\EndIf
		\EndFor
		\State \texttt{// Preprocessing Step}
		\State Let $\mathcal{D}_{K}$ denote the set of SCCs containing terminals in $G$.
		\ForAll{SSC $C \in \mathcal{D}_{K}$} 
		\While{$C$ contains some non-terminal $v$}
		\State\label{alg:ms_13} Choose some directed edge $(v,u)$ leaving $v$ inside $C$, and contract $v$ into $u$. 
		\EndWhile
		\EndFor
		\State Let $\hat{G}=(\hat{V}, \hat{E})$ and $\hat{\mathcal{D}}$ denote the resulting graph and the SCC decomposition.
		\State \texttt{// Main Procedure}
		\State Contract each SSC in $\hat{\mathcal{D}}$ into a single vertex, producing the DAG $G'=(V',E')$.
		\State Let $K'=\emptyset$ and $P'=\emptyset$ be the terminal set and pair-set of $G'$, respectively. 
		\State For all $k \in K$, add $f(k)$ to $K'$ and remove duplicates, if any.
		\State For all $(s,t) \in P$, add $(f(s),f(t))$ to $P'$ if $f(s) \neq f(t)$.
		\State Set $H' =$\textsc{MinorSparsifyDag}($G',P'$).
		\State\label{alg:ms_23} Let $H$ be the graph obtained by expanding back all contracted SCCs in $\hat{\mathcal{D}}_{K}$ in $H'$.
		
		\State \Return $H$
	\end{algorithmic}
	\label{algo: minorSparsify}
\end{algorithm}

The main intuition behind the correctness of the above reduction lies on two important observations. First, vertices belonging to the same strongly connected components can always reach each other. Second, vertices belonging to different strongly connected components can reach each other if the corresponding vertices in the contracted graph can do so. 
We have the following useful observation.

\begin{fact} \label{lemma: edgecontraction}
	For any strongly connected digraph $G=(V,E)$, contracting any edge $e \in E$ results in another strongly connected digraph $G'=(V',E')$.
\end{fact}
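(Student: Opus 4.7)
The plan is to unfold the definition of edge contraction and then verify strong connectivity of $G'$ by a short case analysis on the endpoints of an arbitrary pair of vertices. Let $e=(u,v)$ be the contracted edge and let $w$ denote the merged vertex in $G'$, so that $V' = (V \setminus \{u,v\}) \cup \{w\}$ and there is a natural surjection $\varphi : V \to V'$ that fixes every vertex other than $u,v$ and sends both $u$ and $v$ to $w$. The edge set $E'$ is obtained from $E$ by redirecting every edge with an endpoint in $\{u,v\}$ to $w$ and discarding the self-loop arising from $e$.

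The key observation I would use is that $\varphi$ induces a map on directed walks: any directed walk $v_0 \to v_1 \to \dots \to v_\ell$ in $G$ becomes a directed walk $\varphi(v_0) \to \varphi(v_1) \to \dots \to \varphi(v_\ell)$ in $G'$ (edges internal to $\{u,v\}$ collapse to a trivial step at $w$). Given this, strong connectivity of $G$ would immediately transfer to $G'$: for any pair $a,b \in V'$, pick preimages $a', b' \in V$ under $\varphi$, take a directed path from $a'$ to $b'$ in $G$, and push it through $\varphi$ to obtain a directed walk from $a$ to $b$ in $G'$, from which one extracts a path.

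The step-by-step plan is therefore: (i) spell out the notation for the contraction; (ii) state and quickly verify the walk-pushforward property of $\varphi$; (iii) fix arbitrary $a,b \in V'$, lift them to $a',b' \in V$, invoke strong connectivity of $G$ to produce a path $\pi$ from $a'$ to $b'$, and apply (ii) to conclude there is a directed walk from $a$ to $b$ in $G'$. Since directed walks contain directed paths with the same endpoints, this proves strong connectivity of $G'$.

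I do not expect a genuine obstacle here; the only subtlety to handle cleanly is the endpoint case where $a = w$ or $b = w$ (one must pick the correct preimage among $\{u,v\}$, but \emph{either} choice works since $u$ and $v$ are mutually reachable in $G$ via the edge $e$ itself and the strong connectivity of $G$). This makes the case analysis uniform and keeps the argument to a few lines.
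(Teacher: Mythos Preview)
Your argument is correct. The paper does not actually supply a proof of this statement; it is recorded as a \textbf{Fact} and used without justification. Your walk-pushforward argument via the surjection $\varphi$ is the standard way to see it and is entirely sound, so there is nothing to compare against.
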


Now we show that the graph $H$ output by \textsc{MinorSparsify} is a reachability-preserving minor of $G$. It is easy to verify that the produced graph $H$ is indeed a minor of $G$. To show the correctness, we will prove that $H$ preserves the reachability information among all pairs from $P$ in $G$. Before doing that, observe that the graph $\hat{G}$ obtained after the preprocessing step is a reachability preserving minor of $G$ with respect to $P$. Indeed, this can be inferred by a repeated application of Fact~\ref{lemma: edgecontraction} to each SSC containing terminal vertices.
	
	Now, let $(s,t) \in P$ be any terminal-pair in $G$. Assume that $t$ is reachable from $s$ in $G$. We distinguish two cases:
	\begin{enumerate}
		\item If $s$ and $t$ belong to the same SCC in $\mathcal{D}$, they do also belong to the corresponding SCC in $\hat{\mathcal{D}}$. In Line \ref{alg:ms_13}, $s$ and $t$ are contracted into a single terminal. However, since the contracted SSC contains terminals, it is expanded back to its original form in $\hat{\mathcal{D}}$ in Line \ref{alg:ms_23}. Thus, it follows that $t$ is reachable from $s$ in the output graph $H$. 
		\item If $s$ and $t$ do not belong to the same SCC in $\mathcal{D}$, they must also not belong to the same SCC in $\hat{\mathcal{D}}$. Let $f(s)$ and $f(t)$ denote the terminals in the DAG $G'$ obtained by contracting their corresponding components in $\hat{\mathcal{D}}$ (Line \ref{alg:ms_13}). Since $t$ is reachable from $s$ in $\hat{G}$, note that $f(t)$ must also be reachable from $f(s)$ in $G'$. By Lemma \ref{lemma: DAG}, it follows that $f(t)$ is reachable from $f(s)$ in the reachability-preserving minor $H'$ of $G'$. Expanding back the SCCs that contain terminals in $H'$ (Line \ref{alg:ms_23}), we can construct the directed path $s \rightsquigarrow f(s) \rightsquigarrow f(t) \rightsquigarrow t$ in $H$, which shows that $t$ is also reachable from $s$ in the output graph $H$.
	\end{enumerate}
	When $t$ is not reachable from $s$ in $G$, we can similarly show that $t$ is also not reachable from $s$ in $H$, thus concluding the correctness proof.
	
	We now bound the size of $H$. Since the DAG $G'$ has $|P'| \leq |P|$ pairs, it follows by Lemma \ref{lemma: DAG} that $H'$ has size at most $O(|P|^2)$. After expanding back the SCCs in Line 19, we get that each SSC in $H$ contains at most $k_i$ terminals, where $k = \sum_{i} k_i$. Note that this does not contribute to the size of $H$. Therefore, we get that the size of the output graph $H$ is at most $O(|P|^2)$.
\end{proof}

\subsection{An Improved Bound of $O(k^3)$}
Using the recent work due to Abboud and Bodwin~\cite{AB18reachability}, we next show how to get a polynomial improvement on the number of branching events from Lemma~\ref{lemm: branching}. This in turn gives a polynomial improvement on the size of reachability-preserving minor from Theorem~\ref{thmi: general1}.

Specifically, given a $k$-terminal DAG $G$ with a pair-set $P$, let $H=(V,E_H)$ be the subgraph of $G$ with minimum number of edges that preserves all reachability information among the pairs in $P$. We call such an $H$ the \emph{sparsest reachability preserver} of $G$.  The following lemma is implicit in~\cite{AB18reachability}, and we include it here for the sake of completeness.

\begin{lemma} The \emph{DAG} $H=(V,E_H)$ has at most $k \cdot |P|$ branching events.
\end{lemma}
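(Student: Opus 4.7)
The plan is to refine the charging argument from Lemma~\ref{lemm: branching} by exploiting the fact that $H$ is the \emph{sparsest} reachability preserver, so every edge is indispensable. Previously we could only afford to charge a branching event to an unordered pair of terminal pairs, giving the $O(|P|^2)$ bound; now I will charge each branching event to a single tuple in $P \times K$, yielding the stronger bound $k \cdot |P|$.

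The first step is to set up witnesses. Since deleting any edge of $H$ must break reachability for some pair in $P$, I can define a function $w : E_H \to P$ such that for every edge $e \in E_H$, the pair $w(e) = (s_e, t_e)$ has the property that every directed path from $s_e$ to $t_e$ in $H$ traverses $e$. For each $p = (s, t) \in P$, fix an arbitrary canonical $s$-$t$ path $\pi(p)$ in $H$; in particular, if $w(e) = p$ then $e \in \pi(p)$.

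The second step is the charging. For each branching event $b = \{e_1, e_2\}$ at a vertex $v$, I pick an arbitrary ordering of the two edges, set $p_i = w(e_i) = (s_i, t_i)$, and assign $b$ to the tuple $(p_1, s_2) \in P \times K$. Since $|P \times K| = k \cdot |P|$, it suffices to prove that no tuple is assigned to more than one branching event.

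The third and main step is this uniqueness argument. Suppose two distinct branching events $b$ at $v$ and $b'$ at $v'$ are both assigned to the same tuple $(p_1, s_2)$. Then $w(e_1) = w(e_1') = p_1$, so both $e_1$ and $e_1'$ lie on $\pi(p_1)$; without loss of generality $v$ precedes $v'$ on $\pi(p_1)$. Moreover $s(w(e_2)) = s(w(e_2')) = s_2$, so the pair $p_2' := w(e_2') = (s_2, t_2')$ starts at $s_2$. I will build an $s_2$-$t_2'$ walk in $H$ that avoids $e_2'$: concatenate the prefix of $\pi(p_2)$ from $s_2$ to $v$ (entering $v$ via $e_2$), the sub-path of $\pi(p_1)$ from $v$ to $v'$ (entering $v'$ via $e_1' \neq e_2'$), and the suffix of $\pi(p_2')$ from $v'$ to $t_2'$. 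The middle segment uses $v'$ only at its endpoint, and hence avoids the edge $e_2'$ (which enters $v'$); the suffix, beginning at $v'$, cannot revisit $v'$ because $H$ is acyclic, so it too avoids $e_2'$; and the prefix avoids $e_2'$ as well, because otherwise $v'$ would precede $v$ on $\pi(p_2)$, and combined with the $v \rightsquigarrow v'$ sub-path of $\pi(p_1)$ this would yield a directed cycle in $H$, contradicting that $H$ is a DAG. The resulting walk shows $s_2$ reaches $t_2'$ in $H \setminus \{e_2'\}$, contradicting the choice of $e_2'$ as a witness bridge for $p_2'$.

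The main obstacle is precisely this last verification that the assembled walk really avoids $e_2'$; the DAG property is essential for ruling out the pathological case where the prefix passes through $v'$. Once this is established, each tuple in $P \times K$ receives at most one branching event, and the bound $k \cdot |P|$ follows immediately.
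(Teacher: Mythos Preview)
Your proposal is correct and follows essentially the same approach as the paper: both charge each branching event to the witness pair $p_1 \in P$ of one edge together with the source terminal of the other edge's witness, and derive the same acyclicity-based contradiction when two branching events receive the same charge. The only difference is organizational---the paper partitions branching events into sets $B^{H}_{(s,t)}$ and bounds each by $k$ via pigeonhole on the common source terminal, whereas you phrase the same count directly as an injection into $P \times K$.
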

\begin{proof}
For each pair $(s,t) \in P$, we associate a directed path $s \rightsquigarrow t$ in $H$, and let $\tilde{\pi}(s,t)$ denote such a path. Note that since $H$ is acyclic, every $\tilde{\pi}(s,t)$ is acyclic as well. Moreover, using the fact that $H$ is the sparsest reachability preserver, it follows that for every edge $e \in E_H$, there must be some pair $(s,t) \in P$ such that deleting $e$ from $H$ implies that $s$ cannot reach $t$, i.e., $s \not \rightsquigarrow t$ in $H \setminus \{e\}$. This naturally leads to a relationship between edges and pairs. Specifically, we say that every edge $e \in E_H$ is \emph{owned} by one such pair $(s,t) \in P$. 

Next, for each $(s,t) \in P$, we let $B^{H}_{(s,t)}$ denote the set of all branching events $\{e_1,e_2\}$ in $H$ such that either $e_1$ or $e_2$ (but not both) is owned by $(s,t)$. We claim that $\bigcup \set{B^{H}_{(s,t)}}{(s,t) \in P}$ contains all branching events in $H$. Indeed, suppose towards contradiction that $\{e_1,e_2\}$ is a branching event in $H$ but not in $\bigcup \set{B^{H}_{(s,t)}}{(s,t) \in P}$. Then by definition of $B^{H}_{(s,t)}$ there must be some pair $(s,t) \in P$ such that $e_1$ and $e_2$ are both owned by $(s,t)$. The latter implies that we can construct two directed paths from $s$ to $t$, where one path uses $e_1$ and the other uses $e_2$. Delete edge $e_1$ w.l.o.g. Then we still have another directed path from $s$ to $t$, thus contradicting the assumption that $e_1$ is owned by $(s,t)$. 

Now, to prove the lemma it suffices to show that $\abs{B^{H}_{(s,t)}} \leq k$, for every $(s,t) \in P$. Suppose towards contradiction that there exists a pair $(s,t) \in P$ such that $\abs{B^{H}_{(s,t)}} \geq k+1$.  Then by the pigeonhole principle, there exist two branching events 
\[
	\{(x_1,b_1),(x_2,b_1)\},~\{(y_1,b_2),(y_2,b_2)\} \in B^{H}_{(s,t)}
\]
entering the nodes $b_1$ and $b_2$, such that $(s,t)$ owns $(x_1,b_1)$ and $(y_1,b_2)$, and the other edges are owned by pairs that share a common left terminal, i.e., 
\[
	(x_2,b_1) \text{ is owned by } (u,v_1) \text{ and } (y_2,b_2) \text{ is owned by } (u,v_2)
\]
for some $u \in K$ and $(u,v_1), (u,v_2) \in P$. Note that by the definition of $B^H_{(s,t)}$, $y_1$ and $y_2$ are different vertices. We further assume w.l.o.g. that node $b_1$ appears before $b_2$ in $\tilde{\pi}(s,t)$.
Now, since the pair $(u,v_2)$ owns the edge $(y_2,b_2)$, every path $u \rightsquigarrow v_2$ must use the edge $(y_2,b_2)$, which further implies that every path $u \rightsquigarrow b_2$ must use the edge $(y_2,b_2)$. We can form a path $u \rightsquigarrow b_2$ by first taking the path $\tilde{\pi}(u,v_1)[u \rightsquigarrow b_1]$\footnote{Let $x,y,x',y' \in V$, $\tilde{\pi}(x,y)$ be a directed path from $x$ to $y$, and suppose $x',y' \in \tilde{\pi}(x,y)$ with $x'$ appearing before $y'$. Then $\tilde{\pi}(x,y)[x' \rightsquigarrow y']$ denotes the directed subpath from $x'$ to $y'$ in $\tilde{\pi}(x,y)$.} and then extend it by concatenating it with the path $\tilde{\pi}(s,t)[b_1 \rightsquigarrow b_2]$. This implies one of the following cases: (1) $(y_2,b_2) \in \tilde{\pi}(s,t)[b_1 \rightsquigarrow b_2]$ or (2) $(y_2,b_2) \in \tilde{\pi}(u,v_1)[u,b_1]$. We show that (2) cannot happen, thus only (1) holds. To this end, suppose towards contradiction that $(y_2,b_2) \in \tilde{\pi}(u,v_1)[u,b_1]$. Then we can find a directed path $b_2 \rightsquigarrow b_1$. But since $b_1$ appears before $b_2$, we get the cycle $b_2 \rightsquigarrow b_1 \rightsquigarrow b_2$, which contradicts the fact that $H$ is acyclic. 

Finally, case (1) implies that $(y_2,b_2) \in \tilde{\pi}(s,t)$. Therefore, the path $\tilde{\pi}(s,t)$ contains both $(y_1,b_2)$ and $(y_2,b_2)$. On the other hand, since $\tilde{\pi}(s,t)$ is acyclic, there cannot be two vertices entering $b_2$, which is a contradiction.  
\end{proof}

The above lemma leads to the following algorithm.

\begin{algorithm}[H]
\caption{\textsc{MinorSparsifyDag2} ($k$-terminal DAG $G$, pair-set $P$)}
\begin{algorithmic}[1]
\State Set $H=(V,E_H)$ be the sparsest reachability preserver with respect to $P$.
\State Remove isolated non-terminal vertices from $H$, if any.
\While{there is an edge $(u,v)$ directed towards a non-terminal $v$ with $\deg_{H}^{-}(v) = 1$}
\State Contract the edge $(u,v)$.
\EndWhile
\State \Return $H$
\end{algorithmic}
\label{algo: minor2}
\end{algorithm}

We remark that the above construction is built upon the sparest reachability preserver $H$, which we can find in exponential time (say, by a brute-force approach). By using similar arguments as in the proof of Lemma~\ref{lemma: DAG} and Theorem~\ref{thmi: general1}, we have the following guarantees.
\begin{lemma} Given a $k$-terminal \emph{DAG} $G$ with a pair-set $P$, \emph{Algorithm \ref{algo: minor2}} outputs a reachability-preserving minor $H$ of size $O(k \cdot \abs{P})$ for $G$ with respect to $P$.
\end{lemma}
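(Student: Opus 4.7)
The plan is to establish two claims: (i) the output $H$ of Algorithm~\ref{algo: minor2} is a reachability-preserving minor of $G$ with respect to $P$, and (ii) the number of non-terminals in $H$ is at most $O(k \cdot |P|)$.

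For (i), I would verify that each step of the algorithm preserves both the minor relation and the reachability information for pairs in $P$. The starting graph $H_0$ is the sparsest reachability preserver, which is a subgraph of $G$ (hence a minor) and by definition preserves reachability for every $(s,t) \in P$. Removing isolated non-terminals in Step~2 is a sequence of vertex deletions that cannot affect any $s \rightsquigarrow t$ path among the remaining vertices. In the contraction loop, every contracted edge $(u,v)$ has $v$ a non-terminal with $\deg^-(v) = 1$, so any directed path through $v$ is forced to enter via $(u,v)$; merging $v$ into $u$ therefore preserves every existing terminal-to-terminal path and creates no new ones, mirroring the argument already used in the proof of Lemma~\ref{lemma: DAG}.

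For (ii), the key observation is that upon termination of the loop every non-terminal of $H$ has in-degree at least~$2$. I plan to bound the number of such surviving non-terminals by the number of non-terminals with in-degree $\geq 2$ in the initial graph $H_0$. The crucial monotonicity property is that contracting an edge $(u,v)$ with $\deg^-(v) = 1$ can only decrease or preserve the in-degree of every other vertex: the in-edges of $u$ are unchanged, while out-edges of $v$ become out-edges of $u$ and may merge with existing parallel edges. Consequently, every non-terminal that survives the loop must already have had in-degree $\geq 2$ in $H_0$, and this quantity is at most the number of branching events in $H_0$, which the preceding lemma pins down at $k \cdot |P|$.

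The main subtlety to handle carefully is precisely this interaction between contractions and in-degrees: the merging of parallel edges could in principle be missed, so I would explicitly track that each contraction weakly decreases in-degrees globally. A second minor point is loop termination, which follows immediately since each iteration strictly reduces $|V(H)|$. Combining the resulting bound with the reduction from general digraphs to DAGs (as in the proof of Theorem~\ref{thmi: general1}, replacing \textsc{MinorSparsifyDag} by \textsc{MinorSparsifyDag2}) then yields the $O(k^3)$ bound promised by Theorem~\ref{thmi: general_reachability}.
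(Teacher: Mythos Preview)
Your proposal is correct and follows precisely the approach the paper intends: it explicitly defers this lemma to ``similar arguments as in the proof of Lemma~\ref{lemma: DAG},'' and you carry out exactly that adaptation, replacing the $O(|P|^2)$ branching-event bound by the $k\cdot|P|$ bound from the preceding lemma. Your added care about the monotonicity of in-degrees under the contractions (in particular, that merging parallel out-edges can only lower in-degrees elsewhere, and that the merged vertex inherits $\deg^-(u)$ since $v$'s sole in-edge is $(u,v)$) makes rigorous the paper's one-line remark that ``edge contractions do not affect this number.''
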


\begin{theorem} \label{thmi: general}
Given a $k$-terminal digraph G with a pair-set $P$, there exists an algorithm that outputs a reachability-preserving minor $H$ of size $O(k \cdot \abs{P})$ with respect to $P$.
\end{theorem}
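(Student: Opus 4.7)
The plan is to mirror the reduction from general digraphs to DAGs used in the proof of Theorem~\ref{thmi: general1}, but invoke Algorithm~\ref{algo: minor2} in place of Algorithm~\ref{algo: minor} as the DAG subroutine. Concretely, I would run exactly the procedure \textsc{MinorSparsify} from Algorithm~\ref{algo: minorSparsify}, with one modification on the line $H' = \textsc{MinorSparsifyDag}(G', P')$: replace it by $H' = \textsc{MinorSparsifyDag2}(G', P')$. Everything before and after that line, namely the SCC decomposition, the preprocessing step that contracts each terminal-containing SCC onto one of its terminal vertices, the construction of the contracted DAG $G'$ with terminal set $K'$ and pair-set $P'$, and finally the expansion back of the terminal-containing SCCs, is reused verbatim.

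For correctness, I would observe that the proof in Theorem~\ref{thmi: general1} only relies on two properties of the subroutine: that its output is a reachability-preserving minor of the input DAG with respect to $P'$, and that it is a minor of the input. Algorithm~\ref{algo: minor2} satisfies both, so the case analysis from the earlier proof, namely the two cases depending on whether the pair $(s,t) \in P$ lies in one SCC or in two different SCCs of $G$, transfers without change. In particular, Fact~\ref{lemma: edgecontraction} still justifies that the preprocessing step produces a reachability-preserving minor $\hat{G}$ of $G$ with respect to $P$, and the expansion back of the terminal-containing SCCs in Line~\ref{alg:ms_23} preserves reachability among original terminal pairs because vertices within a single SCC can reach each other.

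For the size bound, note that $|K'| \leq k$ and $|P'| \leq |P|$, since contracting SCCs can only identify terminals and pairs, never create new ones. By the lemma preceding this theorem, the DAG $H'$ has size $O(|K'| \cdot |P'|) = O(k \cdot |P|)$ non-terminals. The expansion step in Line~\ref{alg:ms_23} only expands SCCs containing terminals, and by the preprocessing step each such SCC has been reduced so that all its remaining vertices are terminals of $G$; therefore re-expansion adds no non-terminals to the final count. Combining these observations gives the claimed $O(k \cdot |P|)$ size bound.

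There is no significant obstacle: the only point that needs a brief sanity check is that the preprocessing contractions in Lines~11--14 of Algorithm~\ref{algo: minorSparsify} genuinely eliminate all non-terminals from terminal-containing SCCs, so that when we expand back in Line~\ref{alg:ms_23} we re-introduce only terminal vertices and do not inflate the non-terminal count beyond $O(k \cdot |P|)$. The running time is inherited from Algorithm~\ref{algo: minor2} and is therefore exponential in general, as noted in the preceding paragraph of the paper.
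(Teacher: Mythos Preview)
Your proposal is correct and matches the paper's own approach: the paper explicitly states that Theorem~\ref{thmi: general} follows ``by using similar arguments as in the proof of Lemma~\ref{lemma: DAG} and Theorem~\ref{thmi: general1},'' which is precisely the reduction you describe (running \textsc{MinorSparsify} with \textsc{MinorSparsifyDag2} substituted for \textsc{MinorSparsifyDag}). Your sanity checks on the size bound and on the expansion step re-introducing only terminals are exactly the points one needs to verify, and they go through as you indicate.
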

Taking $P$ to be the trivial pair-set we get a reachability-preserving minor of size $O(k^{3})$, which proves Theorem~\ref{thmi: general_reachability}. We note that in contrast to Theorem~\ref{thmi: general1}, the above theorem guarantees only an exponential-time algorithm in the worst-case. As discussed above, this comes from the assumption that we have access to the sparsest reachability preserver. It is conceivable that a similar approach that appears in~\cite{AB18reachability} could be employed to achieve a better running-time. However, the focus of our paper is on optimizing the size of reachability-preserving minors.

\section{Reachability-Preserving Minors for Planar Digraphs}
In this section we show that any $k$-terminal planar digraph $G$ admits a reachability-preserving minor of size $O(k^{2} \log{k})$ and thus prove Theorem~\ref{thm: ubPlanar}. This matches the lower-bound of Theorem~\ref{thm: lbPlanar} up to an $O(\log {k})$ factor. The main idea is as follows. Given a $k$-terminal planar digraph $G$ with the trivial pair-set $P$, $|P|=k(k-1)$, our goal will be to slightly increase the number of terminals while considerably reducing the size of the pair-set $P$, under the condition that no reachability information is lost among the terminal-pairs in $P$.

\vspace{-0.3cm}
\paragraph*{Preprocessing Step.} Given a $k$-terminal digraph $G$, we apply Theorem~\ref{thmi: general_reachability} to get a reachability-preserving minor $G'$. To simplify the notation, we will use $G$ instead of $G'$, i.e., throughout we assume that $G$ has at most $O(k^{3})$ vertices. 
\vspace{-0.3cm}
\paragraph{Decomposition into Path-Separable Digraphs and the Algorithm.} We say that a graph $G=(V,E)$ admits an \emph{$\alpha$-separator} if there exists a set $S \subset V$ whose removal partitions $G$ into connected components, each of size at most $\alpha \cdot |V|$, where $1/2 \leq \alpha < 1$. If the vertices of $S$ consist of the union over $r$ paths of $G$, for some $r \geq 1$, we say that $G$ is $(\alpha, r)$-\emph{path separable}. We now review the following reduction due to Thorup~\cite{Thorup04}.
\begin{theorem}[\cite{Thorup04}] \label{thm: thorup}
Given a digraph $G$, we can construct a series of digraphs $G_0,\ldots,G_{b}$ for some $b\leq n$ such that the number of vertices and edges over all $G_i$'s is linear in the number of vertices and edges in $G$, and
\begin{enumerate}
	\setlength{\itemsep}{2pt}
	\setlength{\parskip}{2pt}
	\item Each vertex and edge of $G$ appears in at most two $G_i$'s.
	\item\label{item:thorup_2} For all $u,v \in V$, if there is a dipath $R$ from $u$ to $v$ in $G$, there is a $G_i$ that contains $R$.  
	\item\label{item:thorup_3} Each $G_i=(V_i,E_i)$ is $(1/2,6)$-path separable.
	\item\label{item:thorup_4} Each $G_i$ is a minor of $G$. In particular, if $G$ is planar, so is $G_i$.
\end{enumerate}
\end{theorem}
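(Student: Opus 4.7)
The plan is to obtain the family $\{G_0,\dots,G_b\}$ by a recursive application of a planar separator theorem specialized to shortest paths. At the top level I would work with the undirected skeleton $\widetilde{G}$ of $G$ and invoke a Lipton--Tarjan-style result for planar graphs showing that there is a balanced separator $\mathcal{P}$ consisting of a constant number (at most three at each recursion step) of shortest paths in $\widetilde{G}$, whose removal splits $V$ into two sides $A$ and $B$ with $|A|,|B|\leq|V|/2$. I would then define two intermediate subgraphs $G^A := G[A\cup V(\mathcal{P})]$ and $G^B := G[B\cup V(\mathcal{P})]$, recurse on each, and output the leaves of the recursion as the sequence $G_0,\dots,G_b$. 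Properties~\ref{item:thorup_3} and~\ref{item:thorup_4} will follow directly from this construction: every leaf inherits the separator paths of its ancestor nodes (at most six across two levels), yielding a $(1/2,6)$-path separator, and vertex-induced subgraphs of $G$ are minors of $G$, hence planar when $G$ is planar.

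The step I expect to be the main obstacle is property~\ref{item:thorup_2}, namely that every directed path $R$ of $G$ must lie entirely inside some $G_i$. A directed path is free to zigzag across $\mathcal{P}$ in both directions, in which case it sits in neither $G^A$ nor $G^B$ by itself. To handle this I would exploit the fact that each component of $\mathcal{P}$ is a shortest path in $\widetilde{G}$: if $R$ enters $\mathcal{P}$ at a vertex $u$ and later re-enters at a vertex $v$, I would re-route the intermediate arc of $R$ along $\mathcal{P}$ itself, using the orientation of the separator paths inherited from $G$. The rerouting is length-monotone because the separator is built from shortest paths, so any zigzag pattern can be replaced with a single contiguous traversal of $\mathcal{P}$, and the rerouted path is contained in a single side. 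To make this argument go through in the directed setting without losing reachability information, I would probably need to include both orientations of each shortest-path separator, which is the source of the constant $6$ (rather than $3$) in the path-separability parameter.

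For the budget claims I would analyze the recursion tree in two charges. First, each vertex $v\in V$ appears as an interior vertex of exactly one side at every recursion level, and as a separator vertex in at most one separator on the path from the root to the unique leaf containing $v$; this gives that $v$ appears in at most two leaves (once as an interior, once as part of a separator shared with a sibling), establishing property~1. Second, the total size $\sum_i(|V_i|+|E_i|)$ telescopes to $O(|V|+|E|)$ by a standard potential argument on the balanced recursion. Finally $b\leq n$ because the recursion halves the non-separator part at each step and produces at most $O(n)$ leaves overall. With property~\ref{item:thorup_2} established, the remaining bookkeeping is routine; the conceptual crux is the shortest-path rerouting argument that lets one side of a separator absorb a zigzagging directed path.
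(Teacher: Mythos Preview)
The paper does not prove this theorem; it is quoted from Thorup~\cite{Thorup04} and used as a black box. So there is no ``paper's own proof'' to compare against, but your proposal does not reconstruct Thorup's argument and has real gaps.

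Your scheme outputs the \emph{leaves} of a balanced recursive separator tree as the $G_i$'s. That cannot give Property~1 or the linear total size. A separator vertex chosen at the root is placed in both $G^A$ and $G^B$, and then recursively in all of their descendants; such a vertex ends up in \emph{every} leaf of the recursion, not in at most two. For the same reason the sum $\sum_i|V_i|$ is $\Theta(n\log n)$ in the best case (each of the $O(\log n)$ levels carries $\Theta(n)$ vertices), not linear. Your ``once as interior, once as separator'' accounting is simply incorrect for a recursive scheme in which separators are duplicated into both children.

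Your handling of Property~\ref{item:thorup_2} is also off. The theorem asserts that some $G_i$ contains the given dipath $R$ itself, not a rerouted surrogate. Replacing a zigzagging subarc of $R$ by a walk along a separator path produces a \emph{different} path $R'$; that may suffice for reachability applications, but it does not establish the statement as written, and in particular the rerouted walk along an undirected shortest-path separator need not even be a dipath in $G$.

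Thorup's actual construction is not a recursive separator tree but a \emph{layering}. Working with a spanning tree rooted at some vertex, he partitions $V$ into layers by depth, groups layers into overlapping windows so that consecutive $G_i$'s share one layer, and contracts the layers outside each window; this contraction is why each $G_i$ is stated to be a \emph{minor} of $G$ rather than a subgraph. The overlap-by-one-window design is exactly what yields ``each vertex in at most two $G_i$'s'' and linear total size, and the choice of layering is what guarantees that a directed path is confined to a single window. The $(1/2,6)$-path separability of each $G_i$ is then obtained \emph{inside} a fixed $G_i$ (three shortest-path separators, doubled for the two orientations), not across the family. You have essentially interchanged the roles of the two decompositions: the recursive separator is applied within each $G_i$ (as in Algorithm~\ref{alg:reducepairset}), while the family $\{G_i\}$ itself comes from the layering.
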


Now we review how directed reachability can be efficiently represented by separator dipaths. Let $G$ be a $k$-terminal directed graph $G$ that contains some directed path $Q$. Assume that the vertices of $Q$ are ordered in increasing order in the direction of $Q$. For each terminal $x \in K$, let $\texttt{to}_x[Q]$ be the first vertex in $Q$ that can be reached by $x$, and let $\texttt{from}_x[Q]$ be the last vertex in $Q$ that reaches $x$. Let $(s,t)$ be a terminal pair and let $R$ be the directed path from $s$ to $t$ in $G$. We say that $R$ intersects $Q$ iff $s$ can reach $\texttt{to}_s[Q]$  and $t$ can be reached from $\texttt{from}_t[Q]$ in $Q$, and $\texttt{to}_s[Q]$ precedes $\texttt{from}_t[Q]$ in $Q$. 

We now are going to combine the above tools to give our labelling algorithm aimed at reducing the size of the trivial pair-set $P$. By Theorem~\ref{thm: thorup}, we restrict our attention only to the digraphs $G_i$. Let $K_i := V(G_i) \cap K$ be the set of terminals restricted to the graph $G_i$.
\begin{algorithm}[t]
	\caption{\textsc{ReducePairSet} (planar digraph $G_i$, terminals $K_i$)}
	\label{alg:reducepairset}
	\begin{algorithmic}[1]
		\State \textbf{if }{$|V(G_i)| \leq 1$ or $K_i = \emptyset$ \textbf{then return} $\emptyset$.}
		\State Let $P_i' = \emptyset$ be the new pair-set. 
		\State Compute a $1/2$-separator $S$ of $G_i$ consisting of $6$ dipaths by Item \ref{item:thorup_4} of Theorem~\ref{thm: thorup}.
		\For{each dipath $Q \in S$}
		\State \texttt{// Addition of terminal connections with $Q$}
		\State Let $Q'$ be the set of existing terminals of $Q$.
		\For{each terminal $x \in K_i$}
		\State\label{alg:rps_8} Compute $\texttt{to}_x[Q]$ and $\texttt{from}_x[Q]$, declare them terminals and add them to $Q'$.
		\State\label{alg:rps_9} Add $(x, \texttt{to}_x[Q])$ and $(\texttt{from}_x[Q],x)$ to $P_i'$. 
		\EndFor
		\State \texttt{// Sparsification of $Q$ using $Q'$}
		\State\label{alg:rps_12} Define directed pairs $(s,t)$, where $s$ and $t$ are consecutive terminals of $Q'$, 
		\Statex according to the ordering of $Q$ and add all these pairs to $P_i'$.
		\EndFor
		\State Let $(G_i^{(1)},K_i^{(1)})$ and $(G_i^{(2)},K_i^{(2)})$ be the resulting graphs from $G \setminus S$,
		\Statex where $K_i^{(1)}$ and $K_i^{(2)}$ are disjoint subsets of the terminals $K$ separated by $S$.
		\State \texttt{// Note that reachability info. about terminals in $S$ are taken care of.}
		\State \Return $P_i' \cup \bigcup_{j=1}^{2}\textsc{ReducePairSet}(G_i^{(j)},K_i^{(j)})$.
	\end{algorithmic}
	\label{algo: reduceSize}
\end{algorithm}
\begin{lemma} Let $G$ be a $k$-terminal planar digraph. Let $P' := \cup_{i=0}^{b} P_i'$ be the union over all pair-sets output by running \emph{Algorithm \ref{alg:reducepairset}} on each digraph $G_i$. Then the size of $|P'|$ is at most $O(k \log k)$. Moreover, if $H$ is a reachability-preserving minor of $G$ with respect to $P'$, then $H$ is a reachability-preserving minor of $G$ with respect to all terminal pairs. 
\end{lemma}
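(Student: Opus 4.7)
The lemma splits into two parts: the size bound $|P'| = O(k \log k)$, and the sufficiency of $P'$ for preserving reachability among all terminal pairs.

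For the size bound, I will analyze the recursion of \textsc{ReducePairSet} on each $G_i$. A single call on a subgraph with $n$ vertices and $k'$ terminals contributes $O(k')$ pairs to $P_i'$: for each of the six separator dipaths $Q$, Line~\ref{alg:rps_9} adds $2k'$ pairs, and Line~\ref{alg:rps_12} adds at most $|Q'| = O(k')$ consecutive-terminal pairs, since $|Q'|$ is bounded by the pre-existing terminals on $Q$ plus the $2k'$ newly declared $\texttt{to}/\texttt{from}$ vertices. The algorithm then recurses on two subproblems on graphs of size at most $n/2$ whose terminal sets form disjoint subsets of the original $k'$ terminals. Writing $T(n, k')$ for the total number of pairs output, we obtain $T(n, k') \leq c\, k' + T(n/2, k_1) + T(n/2, k_2)$ with $k_1 + k_2 \leq k'$, which solves to $T(n, k') = O(k' \log n)$, because at each of the $O(\log n)$ recursion levels the aggregated terminal count is at most $k'$. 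The preprocessing step ensures $|V(G)| = O(k^3)$, so $T(|V(G_i)|, |K_i|) = O(|K_i| \log k)$; summing across $i$ and using that each terminal lies in at most two $G_i$'s (by Theorem~\ref{thm: thorup}), we get $\sum_i |K_i| \leq 2k$ and thus $|P'| = O(k \log k)$.

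For the preservation claim, take any terminal pair $(s, t)$ with $s$ reaching $t$ in $G$. By Item~\ref{item:thorup_2} of Theorem~\ref{thm: thorup}, there is a dipath $R : s \rightsquigarrow t$ contained entirely in some $G_i$, so $s, t \in K_i$. I will descend the recursion tree of \textsc{ReducePairSet}$(G_i, K_i)$ along the branch containing both $s$ and $t$, stopping at the first node where the current separator $S = Q_1 \cup \cdots \cup Q_6$ separates them (or where one of them lies on $S$). Since $R$ lives in the current subgraph, $R$ must cross $S$ and hence intersect some $Q = Q_j$ in the sense defined above, giving that $\texttt{to}_s[Q]$ precedes $\texttt{from}_t[Q]$ in $Q$. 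Then the pairs $(s, \texttt{to}_s[Q])$ and $(\texttt{from}_t[Q], t)$ added in Line~\ref{alg:rps_9}, together with the chain of consecutive-terminal pairs along $Q'$ from Line~\ref{alg:rps_12} that links $\texttt{to}_s[Q]$ to $\texttt{from}_t[Q]$, all lie in $P'$. Since $H$ preserves reachability for $P'$, concatenating the corresponding dipaths in $H$ yields the desired $s \rightsquigarrow t$ in $H$. The converse (``only if'') direction is automatic: $H$ arises from $G$ through contractions (SCC contractions and in-degree-one edge contractions) that cannot introduce spurious reachability between terminals.

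The main obstacle is handling boundary cases where $s$ or $t$ itself lies on the separator path $Q$ at the splitting level. In such situations $\texttt{to}_s[Q] = s$ or $\texttt{from}_t[Q] = t$, so the corresponding pair becomes trivial and the chain along $Q'$ must alone deliver the required dipath; this works because $s$ and $t$ are already terminals of $K_i$ and therefore sit among the consecutive terminals of $Q'$. A second subtlety is checking that the $\texttt{to}/\texttt{from}$ vertices newly declared as terminals on $S$ are fully handled at the current recursion level, so they need not reappear in the recursion on $G_i \setminus S$; this is ensured by the fact that $S$ is removed before the recursive calls, as noted in the algorithm.
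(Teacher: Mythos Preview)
Your proof is correct and follows essentially the same approach as the paper. The paper's argument is considerably terser: it bounds $|P'|$ by observing directly that across all $G_i$'s there are $O(k)$ terminals per recursion level (since each terminal appears in at most two $G_i$'s) and $O(\log k)$ levels (using the preprocessing bound $|V(G)|=O(k^3)$), while you reach the same conclusion via a per-$G_i$ recurrence and then sum; for the preservation claim the paper simply asserts that $R$ must intersect some separator dipath $Q$ at some recursion level and that ``$P'$ contains all the necessary information,'' whereas you spell out the concatenation of the $(s,\texttt{to}_s[Q])$, consecutive-$Q'$, and $(\texttt{from}_t[Q],t)$ pairs and address the boundary cases explicitly.
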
 
\begin{proof}
	By preprocessing, $G$ has at most $O(k^{3})$ vertices. Throughout, it will be useful to think of the above algorithm as simultaneously running it on each digraph $G_i$. By Item \ref{item:thorup_2} of Theorem~\ref{thm: thorup}, each terminal appears in at most two $G_i$'s. Thus at each recursive level, there will be at most $O(k)$ active $G_i$'s. Also, note that the separator properties imply that there are $O(\log k)$ recursive calls overall.
	
	We next bound the size of the pair-set $P'$. Let $q$ denote the total number of newly added terminals in Line \ref{alg:rps_8} per recursive level. Since there are $O(k)$ terminals, each adding at most $O(1)$ new terminals, it follows that $q=O(k)$. First, we argue about the number of pairs added in Line \ref{alg:rps_9}. Since this is bounded by $O(q)$, it follows that there are $O(k \log k)$ pairs overall. Second, we bound the number of pairs added when sparsifying the separator paths, i.e., pair additions in Line \ref{alg:rps_12}. For all the separators in the same recursive level, we can write $q := \sum_{i}{|Q'_j|}$, where $Q_j'$ denotes the set newly added terminals for some separator dipath. By Line \ref{alg:rps_12}, it follows that we need only $(|Q'_j| - 1)$ pairs to represent each such dipath. Thus, per recursive call, the total number of newly added pairs is $O(q) = O(k)$. Summing these overall $O(\log k)$ levels, and combining this with the previous bound, gives the claimed bound on $|P'|$.
	
	Finally, we argue that $P'$ is a pair-set that can recover reachability information among terminals. Fix any terminal pair $(s,t)$ and let $R$ be a directed path from $s$ to $t$ in $G$. By Item \ref{item:thorup_2} of Theorem~\ref{thm: thorup}, there is some digraph $G_i$ that contains $R$. Then, $R$ must intersect with some separator dipath $Q$, at some level of the recursion of the above algorithm on $G_i$. The above argument gives that $P'$ contains all the necessary information to give a (possibly) another directed path from $s$ to $t$ in $G$.
\end{proof}
Applying Theorem~\ref{thmi: general} on the digraph $G$ with pair-set $P'$, as defined by the above lemma, we get Theorem~\ref{thm: ubPlanar}.

\subsection{Reachability-Preserving Minors: Lower-bound for Planar DAGs} \label{app: minorLB}
In this section we prove that there exists an infinite family of $k$-terminal acyclic directed grids such that any reachability-preserving minor for such graphs needs $\Omega(k^2)$ non-terminals (i.e., prove Theorem~\ref{thm: lbPlanar}). We achieve this by adapting the ideas of Krauthgamer et al.~\cite{distancepreserving}, from their lower-bound proof on distance-preserving minors for undirected graphs.

We start by defining of our lower-bound instance. Fix $k$ such that $r=k/4$ is an integer. Construct an initially undirected $(r+1) \times (r+1)$ grid, where all the $k$ terminals lie on the boundary, except at the corners, and declare all non-boundary vertices non-terminals. Remove the four corner vertices, and then all boundary edges connecting the terminals. Now, make the graph directed by first directing each horizontal edge from left to right, and then directing each vertical edge from top to bottom. Let $G$ denote the resulting $k$-terminal directed grid. It is easy to verify that $G$ is acyclic. 

\begin{theorem}
	For infinitely many $k \in \mathbb{N}$ there exists a $k$-terminal acyclic directed grid $G$ such that any reachability-preserving minor of $G$ must use $\Omega(k^{2})$ non-terminals.
\end{theorem}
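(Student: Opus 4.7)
The plan is to analyze any reachability-preserving minor $H$ via its branch-set decomposition: each $v \in V(H)$ corresponds to a connected (in the underlying undirected graph) branch set $B(v) \subseteq V(G)$, each terminal lies in its own branch set, and edges of $H$ are induced by the surviving edges of $G$ between distinct branch sets. I will show that (a) every one of the $(r-1)^2$ interior (non-terminal) vertices of $G$ must appear in some branch set, (b) no branch set can contain two distinct interior vertices, and (c) terminal branch sets can jointly absorb at most $O(k)$ interior vertices. Combining these will give at least $(r-1)^2 - O(k) = \Omega(k^2)$ non-terminal branch sets.

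For (a), observe that for each column $j \in \{2,\ldots,r\}$ the only directed path in $G$ from the top terminal $(1,j)$ to the bottom terminal $(r+1,j)$ is the straight descent down column $j$, which visits every interior vertex in that column; an analogous row-based argument applies to every interior vertex. Hence no interior vertex may be deleted without violating a required reachability. For (b), suppose for contradiction that a branch set $B$ contained distinct interior vertices $v_1 = (i_1,j_1)$ and $v_2 = (i_2,j_2)$; WLOG assume $j_1 > j_2$ (the equal-column case is symmetric using the row analysis, picking a left terminal and a right terminal instead). Since the unique directed paths from $(1,j_1)$ to $(r+1,j_1)$ (through $v_1$) and from $(1,j_2)$ to $(r+1,j_2)$ (through $v_2$) must both be preserved in $H$, both descents must visit $B$. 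Concatenating the prefix of the first descent (from $(1,j_1)$ into $B$) with the suffix of the second descent (from $B$ out to $(r+1,j_2)$) produces a directed $(1,j_1) \to (r+1,j_2)$ path in $H$. Yet $G$ contains no such path, since only right/down moves are allowed and $j_1 > j_2$, contradicting the RPM condition. The principal obstacle will be to verify that no edge deletion inside the minor can remove the unwanted concatenated path without simultaneously breaking one of the two column descents that must be preserved; this boils down to the uniqueness of monotone grid paths.

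For (c), a source (top or left) terminal cannot absorb any interior vertex into its branch set. For instance, if $(1,j)$'s branch set were to absorb its only successor $(2,j)$, then $(1,j-1)$ would newly reach $(1,j)$ in $H$ via the $G$-path $(1,j-1)\to(2,j-1)\to(2,j)$ into $B$; the only edge that could sever this new reachability, $(2,j-1)\to(2,j)$, is also the unique route by which $(1,j-1)$ can reach any row-2 right-boundary sink such as $(2,r+1)$, so deleting it would destroy a required reachability. A symmetric argument handles left sources. A sink (bottom or right) terminal, by contrast, can safely absorb its unique in-neighbor in $G$ -- every predecessor of that in-neighbor already reaches the sink through it, so merging introduces no new terminal reachability -- but by (b) no additional interior vertex may be absorbed. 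Hence each of the $2(r-1)$ sink terminals absorbs at most one interior vertex, for a total of at most $2(r-1)$ interior vertices hidden inside terminal branch sets. Combining (a)--(c) then gives $|V(H)\setminus K| \geq (r-1)^2 - 2(r-1) = \Omega(k^2)$, as desired.
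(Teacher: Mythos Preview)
Your approach via branch-set analysis is genuinely different from the paper's, but step (a) has a real gap that undermines the rest. Your argument for (a) is that the unique directed $G$-path from $(1,j)$ to $(r+1,j)$ passes through every interior vertex of column $j$, so none can be deleted. This would be valid if $H$ were a \emph{subgraph} of $G$, but minors also allow contractions: a branch set is only undirectedly connected, so a directed $H$-path can enter a branch set at one $G$-vertex and exit at an incomparable one, creating directed reachability in $H$ that has no witness in $G$. Thus deleting $(i,j)$ and compensating by suitable contractions could, a priori, restore the $(1,j)\to(r+1,j)$ reachability in $H$ without using $(i,j)$. You would need to show that \emph{every} such compensation introduces a forbidden terminal reachability, and this is exactly the ``principal obstacle'' you flag but do not resolve. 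Steps (b) and (c) then inherit the problem: your claim that ``both descents must visit $B$'' only follows once you know every column vertex lies in some branch set and every column edge survives (either as an $H$-edge or inside a branch set), i.e., once you have (a) and its edge analogue.

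The paper avoids this entirely by reasoning about paths in $H$ rather than about which $G$-vertices survive. It fixes, for each $i$, some directed $H$-path $P^i_H$ from the $i$-th left terminal to the $i$-th right terminal, and similarly vertical paths $Q^j_H$. Disjointness of the $P^i_H$ (and of the $Q^j_H$) is a one-line RPM argument: a shared vertex would let $x_j$ reach $y_i$ for $j>i$. That every $P^i_H$ meets every $Q^j_H$ is a planarity/pullback argument (the branch sets along $P^i_H$ form an undirected $x_i$--$y_i$ connection in $G$, which must cross the undirected $u_j$--$v_j$ connection). Finally, crossings cannot occur at terminals by another short RPM check. This yields $(k/4)^2$ distinct non-terminal intersection points directly, with no need to decide whether individual interior grid vertices are deleted.
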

\begin{proof}
	Let $G$ be the $k$-terminal grid defined as above. Note that there are $r$ terminals on each side of the grid. Let $H$ be any reachability-preserving minor of $G$. Recall that $H$ contains all terminal vertices from $G$. Furthermore, let $x_1,x_2,\ldots,x_{r}$ be the terminals on the left-side of the grid, ordered from top to bottom. Similarly, define $y_1,y_2,\ldots,y_{r}$ to be the terminals on the right-side. Note that by construction of $G$, for an index pair $(i,j)$ with $i < j$, there is no directed path from $x_j$ to $y_i$. Finally, define $P^{i}_H$ to be the directed path from $x_i$ to $y_i$ in $H$, for $i=1,\ldots,r$. 
	Throughout we will refer to such paths as \emph{horizontal}.
	
	\begin{claim} \label{lemm: horizontal} The horizontal directed paths $P^{1}_H,P^{2}_H,\ldots,P^{r}_H$ are vertex disjoint in $H$.
	\end{claim}
	\begin{proof}
		Suppose towards contradiction that there exist some $i$ and $j$ with $i < j$ such that $P^{i}_H$ and $P^{j}_H$ intersect at some vertex $z$ in $H$. This implies that there are directed paths from $x_i$ and $x_j$ to $z$, and from $z$ to $y_i$ and $y_j$. The latter implies that there is a directed path from $x_j$ to $y_i$ in $H$. However, by construction of $G$, we know that $x_j$ cannot reach $y_i$ for $i < j$, contradicting the fact that $H$ is a reachability-preserving minor of $G$.
	\end{proof}
	We can apply symmetric argument to the \emph{vertical} paths in $H$. More specifically, define $u_1,u_2,\ldots,u_{r}$ to be the terminal on the top-side of the grid, order from left to right. Similarly, define $v_1,v_2,\ldots,u_{r}$ to be the terminals on the bottom-side. Note that by construction of $G$, for an index pair $(i,j)$ with $i<j$, there is no directed path from $u_j$ to $v_i$. Finally, define $Q^{i}_H$ to be the directed path from $u_i$ to $v_i$ in $H$, for $i=1,\ldots,r$. Then we get the following symmetric claim.
	\begin{claim} \label{lemm: vertical} The vertical directed paths $Q^{1}_H,Q^{2}_H,\ldots,Q^{r}_H$ are vertex disjoint in $H$.
	\end{claim} 
	We next argue that all the horizontal and the vertical paths must intersect with each other. 
	\begin{claim} \label{lemm: intersect} Any pair of horizontal and vertical paths $P^{i}_H$ and $Q^{j}_H$ intersect in $H$.
	\end{claim}
	\begin{proof}
		Since $H$ is a minor of $G$, any dipath that connects two terminals in $H$ can be mapped back to a dipath connecting two terminals in $G$. Let $P_i$ and $Q_j$ be the corresponding dipaths in $G$ that are obtained by expanding back the dipaths $P^{i}_H$ and $Q^{j}_H$ in $H$. By construction of $G$, the horizontal and vertical dipaths between terminals are unique, implying that $P_i$ and $Q_j$ must intersect at some vertex of $G$. By performing the backtracked minor-operations on this vertex yields an intersection vertex between $P^{i}_H$ and $Q^{j}_H$ in $H$.
	\end{proof}
	The last claim we need shows that no pair of horizontal and the vertical paths intersects intersect at a terminal vertex.
	\begin{claim} \label{lemm: noterminal}
		No pair of horizontal and vertical paths $P^{i}_H$ and $Q^{j}_H$ intersects at a terminal vertex in $G$.
	\end{claim}
	\begin{proof}
		Consider the terminal pairs $(x_i,y_i)$ and $(u_j,v_j)$ corresponding to the paths $P^{i}_H$ and $Q^{j}_H$. Note that by construction of $G$, the set of terminals reachable from both $x_i$ and $u_j$ in $G$ is $\{y_i, y_{i+1}, \ldots, y_{r}\} \cup \{v_j, v_{j+1}, \ldots, v_{r}\}$. Since $H$ is a reachability-preserving minor of $G$, $x_i$ and $u_j$ must also be able to reach this terminal-set in $H$ and also $P^{i}_H$ and $Q^{j}_H$ cannot intersect at any terminal in $\{y_1,\ldots,y_{i-1}\} \cup \{v_1,\ldots,v_{j-1}\}$. Now, suppose towards contradiction that $P^{i}_H$ and $Q^{j}_H$ intersect at some terminal $y_k$, for $k \in \{i+1,\ldots,r\}$. This implies that in the path $P^{i}_H$, there is a directed path from $y_k$ to $y_i$, for $k > i$, giving a contradiction by construction of $G$. Furthermore, observe that $P^{i}_H$ and $Q^{j}_H$ cannot intersect at $y_i$, as otherwise we would have a directed path from $y_i$ to $v_j$, which is a contradiction by construction of $G$. Applying a similar argument to the case when paths intersect at some terminal $v_\ell$, for $k \in \{j+1,\ldots,r\}$, gives the claim.
	\end{proof}
	We know have all the necessary tools to prove the theorem. Claim~\ref{lemm: intersect} shows that the paths $P^{i}_H$ and $Q^{j}_H$ intersect in $H$ and let $z_H^{i,j}$ denote one of the intersection vertices. Now, we must show that all these vertices are distinct. To this end, assume that $z_{H}^{i_1,j_1} = z_{H}^{i_2,j_2}$. Since these vertices belong to both $P^{i_1}_{H}$ and $P^{i_2}_{H}$, by Claim~\ref{lemm: horizontal} we get that $i_1 = i_2$. Similarly, by Claim~\ref{lemm: vertical} we get that $j_1 = j_2$. Thus, we have that all vertices $z_{H}^{i,j}$, for $i,j=1,2,\ldots, r$ are distinct. Since Claim~\ref{lemm: noterminal} implies that none of this intersection vertices is a terminal, we conclude that $H$ must contain at least $r^{2} = (k/4)^{2}$ non-terminals.
\end{proof}
\section{An Exact Cut Sparsifier of Size $O(k^2)$} \label{sec: upperCut}
In this section we show that given a $k$-terminal planar graph, where all terminals lie on the same face, one can construct a quality-$1$ cut sparsifier of size $O(k^{2})$. Note that it suffices to consider the case when all terminals lie on the \emph{outer} face. 
We first present some basic tools. 
\subsection{Basic Tools}
\paragraph*{Wye-Delta Transformations. }
In this section we investigate the applicability of some graph reduction techniques that aim at reducing the number of non-terminals in a $k$-terminal graph. 
We start by reviewing the so-called \emph{Wye-Delta} operations in graph reductions. These operations consist of five basic rules, which we describe below. (See Fig.~\ref{fig:wyedelta} for illustrations.)
\begin{enumerate}
	\setlength\itemsep{0.1em}
	\item \emph{Degree-one reduction:} Delete a degree-one non-terminal and its incident edge.
	\item \emph{Series reduction:} Delete a degree-two non-terminal $y$ and its incident edges $(x,y)$ and $(y,z)$, and add a new edge $(x,z)$ of capacity $\min\{c(x,y), c(y,z)\}$.
	\item \emph{Parallel reduction:} Replace all parallel edges by a single edge whose capacity is the sum over all capacities of parallel edges.
	\item \emph{Wye-Delta transformation:} Let $x$ be a degree-three non-terminal with neighbours $\delta(x) = \{u,v,w\}$. Assume w.l.o.g.\footnote{Suppose there exist a pair $(u,v) \in \delta(x)$ with $c(u,x) + c(v,x) < c(w,x)$, where $w \in \delta(v)\setminus \{u,v\}$. Then we can simply set $c(w,x) = c(u,x) + c(v,x)$, since any terminal minimum cut would cut the edges $(u,x)$ and $(v,x)$ instead of the edge $(w,x)$.} that for any pair $(u,v) \in \delta(x)$, $c(u,x) + c(v,x) \geq c(w,x)$, where $w \in \delta(v)\setminus \{u,v\}$. Then we can delete $x$ (along with all its incident edges) and add edges $(u,v),(v,w)$ and $(w,u)$ with capacities $(c(u,x)+c(v,x)-c(w,x))/2$, $(c(v,x)+c(w,x)-c(u,x))/2$ and $(c(u,x)+c(w,x)-c(v,x))/2$, respectively. 
	\item \emph{Delta-Wye transformation:} Delete the edges of a triangle connecting $x$, $y$ and $z$, introduce a new non-terminal vertex $w$ and add new edges $(w,x)$, $(w,y)$ and $(w,z)$ with edge capacities $c(x,y) + c(x,z),$ $c(x,y) + c(y,z)$ and $c(x,z) + c(y,z)$ respectively.
\end{enumerate}

\begin{figure}[H]
\centering
\scalebox{.8}{
\begin{tikzpicture}
\tikzstyle{vertex}=[circle,draw = white, fill=black, minimum size = 7pt, inner sep=2pt]
\tikzstyle{vertex1}=[fill = white, draw = white]
\tikzstyle{edge}=[-,thick ]
\tikzstyle{elipse}=[-,thick ]
\tikzstyle{vertex2}=[circle,draw = black, fill=white, minimum size = 7pt, inner sep=2pt]

  \node[vertex2] (x) at (-0.75,0.75) {$1$};

  \node[vertex1] (n1) at (0,0) {} ;
  \node[vertex1] (n2) at (0.5,0.5) {} ;
  \node[vertex1] (n3) at (0,1) {} ;
  
  \node[vertex] (n4) at (2.5,1.5) {};
    
  \node[vertex] (n5) at (2.5,0.5) {} ;
  \node[vertex1] (n6) at (2,0) {} ;
  \node[vertex1] (n7) at (3,0) {} ;

  \draw[edge] (n4) node[above = 2.5pt] {$y$} -- (n5)  ;

  \draw[edge] (n6) -- (n5) ;
  \draw[edge] (n5) -- (n7) ;

  \node[vertex1] (n11) at (4, 0.5) {};
  \node[vertex1] (n12) at (5, 0.5) {};
  \draw[edge,->,very thick,>=stealth] (n11) -- (n12) ;
  
  \node[vertex1] (n1a) at (6,0) {} ;
  \node[vertex] (n2a) at (6.5,0.5) {} ;
  \node[vertex1] (n3a) at (7,0) {} ;

  \node[vertex1] (n5a) at (8.5,0.5) {} ;
  \node[vertex1] (n6a) at (9,1) {} ;
  \node[vertex1] (n7a) at (9,0) {} ;

  \draw[edge] (n1a) -- (n2a) ;
  \draw[edge] (n2a) -- (n3a) ;
  


\end{tikzpicture}
}

\scalebox{.8}{

\begin{tikzpicture}
\tikzstyle{vertex}=[circle,draw = white, fill=black, minimum size = 7pt, inner sep=2pt]
\tikzstyle{vertex1}=[fill = white, draw = white]

\tikzstyle{edge}=[-,thick ]
\tikzstyle{elipse}=[-,thick ]
\tikzstyle{vertex2}=[circle,draw = black, fill=white, minimum size = 7pt, inner sep=2pt]

  \node[vertex2] (x) at (-0.75,0.5) {$2$};
  \node[vertex1] (n1) at (0,0) {} ;
  \node[vertex] (n2) at (0.5,0.5) {} ;
  \node[vertex1] (n3) at (0,1) {} ;
  
  \node[vertex] (n4) at (1.5,0.5) {};
    
  \node[vertex] (n5) at (2.5,0.5) {} ;
  \node[vertex1] (n6) at (3,1) {} ;
  \node[vertex1] (n7) at (3,0) {} ;
  
  \draw[edge] (n1) -- (n2) node[above = 2.5pt] {$x$} ;
  \draw[edge] (n2) -- (n3) ;
  
  \draw[edge] (n2) -- (n4) node[above = 2.5pt] {$y$} ;
  \draw[edge] (n4) -- (n5) node[above = 2.5pt] {$z$};

  \draw[edge] (n6) -- (n5) ;
  \draw[edge] (n5) -- (n7) ;

  \node[vertex1] (n11) at (4, 0.5) {};
  \node[vertex1] (n12) at (5, 0.5) {};
  \draw[edge,->,very thick,>=stealth] (n11) -- (n12) ;
  
  \node[vertex1] (n1a) at (6,0) {} ;
  \node[vertex] (n2a) at (6.5,0.5) {} ;
  \node[vertex1] (n3a) at (6,1) {} ;

  \node[vertex] (n5a) at (8.5,0.5) {} ;
  \node[vertex1] (n6a) at (9,1) {} ;
  \node[vertex1] (n7a) at (9,0) {} ;

  \draw[edge] (n1a) -- (n2a) ;
  \draw[edge] (n2a) -- (n3a) ;
  
  \draw[edge] (n2a) node[above = 2.5pt] {$x$} -- (n5a) node[above = 2.5pt] {$z$};

  \draw[edge] (n5a) -- (n6a) ;
  \draw[edge] (n5a) -- (n7a) ;
 
\end{tikzpicture}
}

\scalebox{.8}{

\begin{tikzpicture}
\tikzstyle{vertex}=[circle,draw = white, fill=black, minimum size = 7pt, inner sep=2pt]
\tikzstyle{vertex1}=[fill = white, draw = white]

\tikzstyle{edge}=[-,thick ]
\tikzstyle{elipse}=[-,thick ]
\tikzstyle{vertex2}=[circle,draw = black, fill=white, minimum size = 7pt, inner sep=2pt]

  \node[vertex2] (x) at (-0.75,0.5) {$3$};
  \node[vertex1] (n1) at (0,0) {} ;
  \node[vertex] (n2) at (0.5,0.5) {} ;
  \node[vertex1] (n3) at (0,1) {} ;
    
  \node[vertex] (n5) at (2.5,0.5) {} ;
  \node[vertex1] (n6) at (3,1) {} ;
  \node[vertex1] (n7) at (3,0) {} ;
  
  \draw[edge] (n1) -- (n2) ;
  \draw[edge] (n2) -- (n3) ;
  
  \path (n2) edge [bend left] node {} (n5) ;
  \path (n2) edge [bend right] node {} (n5) ;

  \draw[edge] (n6) -- (n5) ;
  \draw[edge] (n5) -- (n7) ;

  \node[vertex1] (n11) at (4, 0.5) {};
  \node[vertex1] (n12) at (5, 0.5) {};
  \draw[edge,->,very thick,>=stealth] (n11) -- (n12) ;
  
  \node[vertex1] (n1a) at (6,0) {} ;
  \node[vertex] (n2a) at (6.5,0.5) {} ;
  \node[vertex1] (n3a) at (6,1) {} ;

  \node[vertex] (n5a) at (8.5,0.5) {} ;
  \node[vertex1] (n6a) at (9,1) {} ;
  \node[vertex1] (n7a) at (9,0) {} ;

  \draw[edge] (n1a) -- (n2a) ;
  \draw[edge] (n2a) -- (n3a) ;
  
  \draw[edge] (n2a) -- (n5a);

  \draw[edge] (n5a) -- (n6a) ;
  \draw[edge] (n5a) -- (n7a) ;
 
\end{tikzpicture}
}

\scalebox{.8}{

\begin{tikzpicture}
\tikzstyle{vertex}=[circle,draw = white, fill=black, minimum size = 7pt, inner sep=2pt]
\tikzstyle{vertex1}=[fill = white, draw = white]

\tikzstyle{edge}=[-,thick ]
\tikzstyle{elipse}=[-,thick ]
\tikzstyle{vertex2}=[circle,draw = black, fill=white, minimum size = 7pt, inner sep=2pt]

  \node[vertex2] (x) at (-0.75,1.5) {$4$};

  \node[vertex1] (n1) at (0,0) {} ;
  \node[vertex] (n2) at (0.5,0.5) {} ;
  \node[vertex1] (n3) at (0,1) {} ;
  
  \node[vertex] (n4) at (1.5,1.25) {} ;
  
  \node[vertex] (n5) at (2.5,0.5) {} ;
  \node[vertex1] (n6) at (3,1) {} ;
  \node[vertex1] (n7) at (3,0) {} ;
  
  \node[vertex] (n8) at (1.5,2.5) {} ;
  \node[vertex1] (n9) at (1,3) {} ;
  \node[vertex1] (n10) at (2,3) {} ;
  
  \draw[edge] (n1) -- (n2) ;
  \draw[edge] (n2) -- (n3) ;
  \draw[edge] (n2) node[right = 2.5pt] {$u$} -- (n4) node[right = 2.5pt] {$x$};
  
  \draw[edge] (n4) -- (n5) ;
  \draw[edge] (n5) node[left = 2.5pt] {$v$} -- (n6) ;
  \draw[edge] (n5) -- (n7) ;

  \draw[edge] (n4) -- (n8) node[right = 2.5pt] {$w$} ;
  \draw[edge] (n8) -- (n9) ;
  \draw[edge] (n8) -- (n10) ;

  \node[vertex1] (n11) at (4, 1.75) {};
  \node[vertex1] (n12) at (5, 1.75) {};
  \draw[edge,->,very thick,>=stealth] (n11) -- (n12) ;
  
  \node[vertex1] (n1a) at (6,0) {} ;
  \node[vertex] (n2a) at (6.5,0.5) {} ;
  \node[vertex1] (n3a) at (6,1) {} ;
  
  
  \node[vertex] (n5a) at (8.5,0.5) {} ;
  \node[vertex1] (n6a) at (9,1) {} ;
  \node[vertex1] (n7a) at (9,0) {} ;
  
  \node[vertex] (n8a) at (7.5,2.5) {} ;
  \node[vertex1] (n9a) at (7,3) {} ;
  \node[vertex1] (n10a) at (8,3) {} ;
  
  \draw[edge] (n1a) -- (n2a) ;
  \draw[edge] (n2a) -- (n3a) ;
  \draw[edge] (n2a) node[below = 2.5pt] {$u$} -- (n5a) node[below = 2.5pt] {$v$};
  
  \draw[edge] (n5a) -- (n8a) ;
  \draw[edge] (n5a) -- (n6a) ;
  \draw[edge] (n5a) -- (n7a) ;

  \draw[edge] (n2a) -- (n8a) node[right = 2.5pt] {$w$} ;
  \draw[edge] (n8a) -- (n9a) ;
  \draw[edge] (n8a) -- (n10a) ;
 
\end{tikzpicture}
}

\scalebox{.8}{

\begin{tikzpicture}
\tikzstyle{vertex}=[circle,draw = white, fill=black, minimum size = 7pt, inner sep=2pt]
\tikzstyle{vertex1}=[fill = white, draw = white]

\tikzstyle{edge}=[-,thick ]
\tikzstyle{elipse}=[-,thick ]
\tikzstyle{vertex2}=[circle,draw = black, fill=white, minimum size = 7pt, inner sep=2pt]

  \node[vertex2] (x) at (-0.75,1.5) {$5$};

  \node[vertex1] (n1) at (0,0) {} ;
  \node[vertex] (n2) at (0.5,0.5) {} ;
  \node[vertex1] (n3) at (0,1) {} ;
  
  
  \node[vertex] (n5) at (2.5,0.5) {} ;
  \node[vertex1] (n6) at (3,1) {} ;
  \node[vertex1] (n7) at (3,0) {} ;
  
  \node[vertex] (n8) at (1.5,2.5) {} ;
  \node[vertex1] (n9) at (1,3) {} ;
  \node[vertex1] (n10) at (2,3) {} ;
  
  \draw[edge] (n1) -- (n2) ;
  \draw[edge] (n2) -- (n3) ;
  \draw[edge] (n2) node[below = 2.5pt] {$x$} -- (n5) node[below = 2.5pt] {$y$};
  
  \draw[edge] (n8) -- (n5) ;
  \draw[edge] (n5) -- (n6) ;
  \draw[edge] (n5) -- (n7) ;

  \draw[edge] (n2) -- (n8) node[right = 2.5pt] {$z$} ;
  \draw[edge] (n8) -- (n9) ;
  \draw[edge] (n8) -- (n10) ;

  \node[vertex1] (n11) at (4, 1.75) {};
  \node[vertex1] (n12) at (5, 1.75) {};
  \draw[edge,->,very thick,>=stealth] (n11) -- (n12) ;
  
  \node[vertex1] (n1a) at (6,0) {} ;
  \node[vertex] (n2a) at (6.5,0.5) {} ;
  \node[vertex1] (n3a) at (6,1) {} ;
  
  \node[vertex] (n4a) at (7.5,1.25) {} ;
  
  \node[vertex] (n5a) at (8.5,0.5) {} ;
  \node[vertex1] (n6a) at (9,1) {} ;
  \node[vertex1] (n7a) at (9,0) {} ;
  
  \node[vertex] (n8a) at (7.5,2.5) {} ;
  \node[vertex1] (n9a) at (7,3) {} ;
  \node[vertex1] (n10a) at (8,3) {} ;
  
  \draw[edge] (n1a) -- (n2a) ;
  \draw[edge] (n2a) -- (n3a) ;
  \draw[edge] (n2a) node[right = 2.5pt] {$x$} -- (n4a) ;
  
  \draw[edge] (n5a) node[left = 2.5pt] {$y$} -- (n4a) ;
  \draw[edge] (n5a) -- (n6a) ;
  \draw[edge] (n5a) -- (n7a) ;

  \draw[edge] (n8a) node[right = 2.5pt] {$z$} -- (n4a) node[right = 2.5pt] {$w$} ;
  \draw[edge] (n8a) -- (n9a) ;
  \draw[edge] (n8a) -- (n10a) ;
 
\end{tikzpicture}
}

\caption{Wye-Delta operations: 1. Degree-one reduction; 2. Series reduction; 3. Parallel reduction; 4. Wye-Delta transformation; 5. Delta-Wye transformation.}
~\label{fig:wyedelta}
\end{figure}
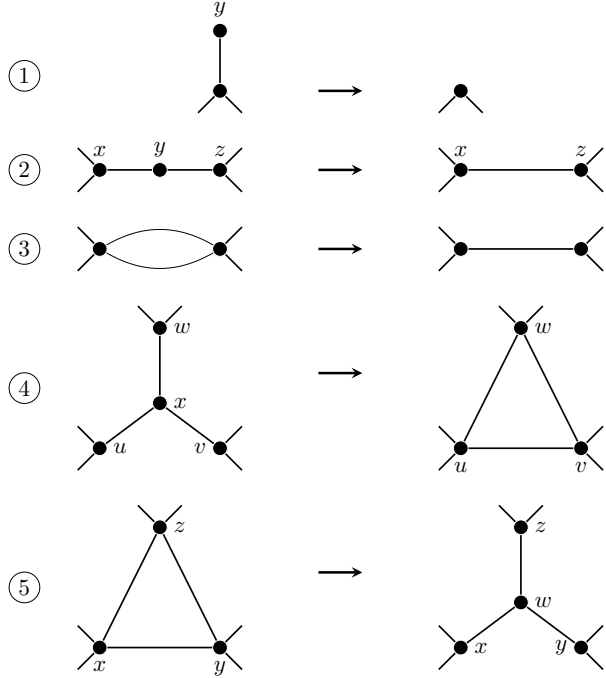

The following lemma (which follows from the above definitions) shows that the above rules preserve exactly all terminal minimum cuts.

\begin{lemma} \label{lemm: rules} Let $G$ be a $k$-terminal graph and $G'$ be a $k$-terminal graph obtained from $G$ by applying one of the rules $1-5$. Then $G'$ is a quality-$1$ cut sparsifier of $G$.
\end{lemma}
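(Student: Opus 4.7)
The plan is to verify each of the five reduction rules separately, since they are structurally different local operations. For each rule I fix an arbitrary terminal bipartition $(S, K \setminus S)$ and compare the minimum $S$-separating cut value in $G$ to that in $G'$, using the fact that any newly introduced (or surviving) non-terminal must be placed on one side or the other in any cut, and that the minimum cut always chooses the placement that minimizes cost. Since every rule only touches a constant-size neighborhood disjoint from the terminals, everything reduces to a local case analysis on the two or three boundary vertices of the modified region.

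Rules 1--3 are elementary. A degree-one non-terminal $y$ incident to a single edge $(x,y)$ can always be placed on the same side as $x$, contributing $0$ to every $S$-separating cut, so deleting $y$ and its edge preserves all terminal min-cut values. For a degree-two non-terminal $y$ with neighbors $x, z$: if $x$ and $z$ lie on the same side of a cut we place $y$ there too, and if they are on opposite sides then the optimal placement of $y$ contributes exactly $\min\{c(x,y), c(y,z)\}$, which is the capacity of the new edge $(x,z)$ in $G'$. Parallel reduction is trivially cut-preserving since every cut either contains all copies of a parallel edge or none, and the cost is additive.

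The substantive cases are rules 4 and 5. For the Wye-Delta transformation, let $x$ be the degree-three non-terminal with neighbors $u, v, w$ and set $a = c(u,x)$, $b = c(v,x)$, $d = c(w,x)$. For any $S$-separating cut, the only nontrivial configurations split $\{u,v,w\}$ into a singleton and a pair; for instance, if $u$ lies alone on one side and $v,w$ on the other, then in $G$ the optimal placement of $x$ yields contribution $\min\{a,\ b+d\} = a$ (using the stated assumption $b+d \geq a$), while in $G'$ the triangle contributes $c(u,v) + c(u,w) = \tfrac{(a+b-d)+(a+d-b)}{2} = a$. The two symmetric splits are checked identically by the assumption applied to the other two pairs, and the trivial split (all of $u,v,w$ on the same side) contributes $0$ in both graphs. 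For the Delta-Wye transformation with triangle capacities $a = c(x,y)$, $b = c(y,z)$, $c = c(x,z)$ and new wye capacities $c(w,x) = a+c$, $c(w,y) = a+b$, $c(w,z) = b+c$, each singleton-vs-pair split of $\{x,y,z\}$ is checked analogously; for example, when $x$ is isolated on one side the original contributes $a+c$ and the reduced one contributes $\min\{a+c,\ (a+b)+(b+c)\} = a+c$.

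The main subtlety is the Wye-Delta rule, where the capacity assumption $c(u,x)+c(v,x)\geq c(w,x)$ for every pairing is exactly what is needed to ensure the local minimum over the placement of $x$ in $G$ matches the pure triangle contribution in $G'$; without this normalization one of the terms in the min would dominate and the identity would fail. In the Delta-Wye case no analogous assumption is needed, since the inequality $a+c \leq (a+b)+(b+c)$ and its two cyclic variants hold automatically for nonnegative capacities. A final remark: the Wye-Delta rule may create parallel edges if one of $(u,v),(v,w),(w,u)$ was already present, but these can be immediately absorbed by rule~3 without affecting any terminal cut, so the statement of Lemma~\ref{lemm: rules} remains intact.
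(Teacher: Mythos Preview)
Your proposal is correct; the paper itself does not give a proof of this lemma, stating only that it ``follows from the above definitions,'' so your case-by-case verification is exactly the kind of argument the paper is leaving to the reader. Your local analysis---fixing the placement of the boundary vertices $\{u,v,w\}$ (respectively $\{x,y,z\}$) and optimizing over the placement of the interior non-terminal---is the right way to make the claim precise, and your identification of the triangle-inequality normalization in rule~4 as the one nontrivial point is accurate.
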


For our application, it will be useful to enrich the set of rules by introducing two new operations. These operations can be realized as series of the operations 1-5. (See Fig.~\ref{fig:edgedeletion} and \ref{fig:edgereplacement} for illustrations.)
\begin{enumerate}[resume]
	\item \emph{Edge deletion (with vertex $x$):} For a degree-three non-terminal with neighbours $u,v$, the edge $(u,v)$ can be deleted, if it exists. To achieve this, we use a Delta-Wye transformation followed by a series reduction.

	\item \emph{Edge replacement:} For a degree-four non-terminal vertex with neighbours $x,u,v,w$, if the edge $(x,u)$ exists, then it can be replaced by the edge $(v,w)$. To achieve this, we use a Delta-Wye transformation followed by a Wye-Delta transformation.
\end{enumerate}

\begin{figure}[H]
\begin{center}
\scalebox{.8}{
\begin{minipage}{.30\textwidth}
\centering

\begin{tikzpicture}
\tikzstyle{vertex}=[circle,draw = white, fill=black, minimum size = 7pt, inner sep=2pt]
\tikzstyle{vertex1}=[fill = white, draw = white]

\tikzstyle{edge}=[-,thick ]
\tikzstyle{elipse}=[-,thick ]
\tikzstyle{vertex2}=[circle,draw = black, fill=white, minimum size = 7pt, inner sep=2pt]

  \node[vertex2] (x) at (-0.75,1.5) {$6$};

  \node[vertex1] (n1) at (0,0) {} ;
  \node[vertex] (n2) at (0.5,0.5) {} ;
  \node[vertex1] (n3) at (0,1) {} ;
  
  \node[vertex] (n4) at (1.5,1.25) {} ;
  
  \node[vertex] (n5) at (2.5,0.5) {} ;
  \node[vertex1] (n6) at (3,1) {} ;
  \node[vertex1] (n7) at (3,0) {} ;
  
  \node[vertex] (n8) at (1.5,2.5) {} ;
  \node[vertex1] (n9) at (1,3) {} ;
  \node[vertex1] (n10) at (2,3) {} ;
  
  \draw[edge] (n1) -- (n2) node[below right] {$u$} ;
  \draw[edge] (n2) -- (n3) ;
  \draw[edge] (n2) -- (n4) node[right = 2.5pt] {$x$};
  
  \draw[edge] (n2) -- (n5); 
  
  \draw[edge] (n4) -- (n5) node[below left] {$v$} ;
  \draw[edge] (n5) -- (n6) ;
  \draw[edge] (n5) -- (n7) ;

  \draw[edge] (n4) -- (n8) ;
  \draw[edge] (n8) -- (n9) ;
  \draw[edge] (n8) -- (n10);
  
\end{tikzpicture}
\end{minipage} 
\hfill
\begin{minipage}{.25\textwidth}
\centering
\begin{tikzpicture}
\tikzstyle{vertex}=[circle,draw = white, fill=black, minimum size = 7pt, inner sep=2pt]
\tikzstyle{vertex1}=[fill = white, draw = white]

\tikzstyle{edge}=[-,thick ]
\tikzstyle{elipse}=[-,thick ]

  \node[vertex1] (n1) at (0,0) {} ;
  \node[vertex] (n2) at (0.5,0.5) {} ;
  \node[vertex1] (n3) at (0,1) {} ;
  
  \node[vertex] (n4) at (1.5,1.25) {} ;
  \node[vertex] (n11) at (1.5,0.625) {};
  
  \node[vertex] (n5) at (2.5,0.5) {} ;
  \node[vertex1] (n6) at (3,1) {} ;
  \node[vertex1] (n7) at (3,0) {} ;
  
  \node[vertex] (n8) at (1.5,2.5) {} ;
  \node[vertex1] (n9) at (1,3) {} ;
  \node[vertex1] (n10) at (2,3) {} ;
  
  \draw[edge] (n1) -- (n2) node[below right] {$u$} ;
  \draw[edge] (n2) -- (n3) ;
  \draw[edge] (n2) -- (n11) node[below = 2.5pt] {$w$} ;
 
  \draw[edge] (n4) node[right = 2.5pt] {$x$} -- (n11);
  
  \draw[edge] (n11) -- (n5) node[below left] {$v$} ;
  \draw[edge] (n5) -- (n6) ;
  \draw[edge] (n5) -- (n7) ;

  \draw[edge] (n4) -- (n8) ;
  \draw[edge] (n8) -- (n9) ;
  \draw[edge] (n8) -- (n10);
\end{tikzpicture}
\end{minipage}
\hfill
\begin{minipage}{.25\textwidth}
\centering
\begin{tikzpicture}
\tikzstyle{vertex}=[circle,draw = white, fill=black, minimum size = 7pt, inner sep=2pt]
\tikzstyle{vertex1}=[fill = white, draw = white]

\tikzstyle{edge}=[-,thick ]
\tikzstyle{elipse}=[-,thick ]

  \node[vertex1] (n1) at (0,0) {} ;
  \node[vertex] (n2) at (0.5,0.5) {} ;
  \node[vertex1] (n3) at (0,1) {} ;
  
  \node[vertex] (n4) at (1.5,1.25) {} ;
  
  \node[vertex] (n5) at (2.5,0.5) {} ;
  \node[vertex1] (n6) at (3,1) {} ;
  \node[vertex1] (n7) at (3,0) {} ;
  
  \node[vertex] (n8) at (1.5,2.5) {} ;
  \node[vertex1] (n9) at (1,3) {} ;
  \node[vertex1] (n10) at (2,3) {} ;
  
  \draw[edge] (n1) -- (n2) node[right = 2.5pt] {$u$};
  \draw[edge] (n2) -- (n3) ;
  \draw[edge] (n2) -- (n4) node[right = 2.5pt] {$w$}; 
  
  \draw[edge] (n4) -- (n5) node[left = 2.5pt] {$v$} ;
  \draw[edge] (n5) -- (n6) ;
  \draw[edge] (n5) -- (n7) ;

  \draw[edge] (n4) -- (n8) ;
  \draw[edge] (n8) -- (n9) ;
  \draw[edge] (n8) -- (n10);
  
\end{tikzpicture}
\end{minipage}
}
\end{center}
\caption{Edge deletion transformation. Edge capacities are omitted.}
\label{fig:edgedeletion}

\end{figure}
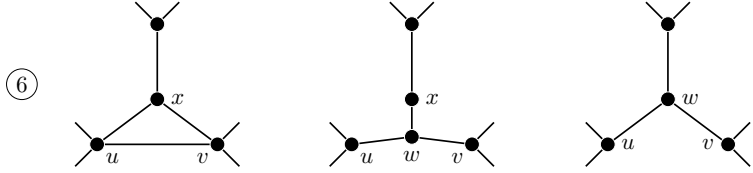

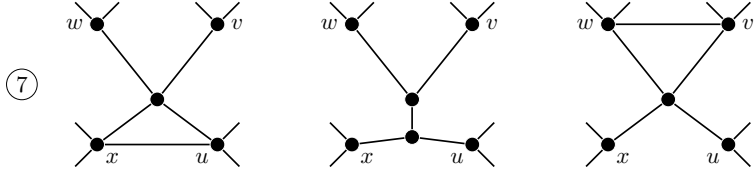
\begin{figure}[H]
\begin{center}
\scalebox{.8}{
\begin{minipage}{.30\textwidth}
\centering

\begin{tikzpicture}
\tikzstyle{vertex}=[circle,draw = white, fill=black, minimum size = 7pt, inner sep=2pt]
\tikzstyle{vertex1}=[fill = white, draw = white]

\tikzstyle{edge}=[-,thick ]
\tikzstyle{elipse}=[-,thick ]
\tikzstyle{vertex2}=[circle,draw = black, fill=white, minimum size = 7pt, inner sep=2pt]

  \node[vertex2] (x) at (-0.75,1.5) {$7$};

  \node[vertex1] (n1) at (0,0) {} ;
  \node[vertex] (n2) at (0.5,0.5) {} ;
  \node[vertex1] (n3) at (0,1) {} ;
  
  \node[vertex] (n4) at (1.5,1.25) {} ;
  
  \node[vertex] (n5) at (2.5,0.5) {} ;
  \node[vertex1] (n6) at (3,1) {} ;
  \node[vertex1] (n7) at (3,0) {} ;
  
  \node[vertex] (n8) at (0.5,2.5) {} ;
  \node[vertex1] (n9) at (0,3) {} ;
  \node[vertex1] (n10) at (1,3) {} ;
  
  \node[vertex] (n11) at (2.5,2.5) {} ;
  \node[vertex1] (n12) at (2,3) {} ;
  \node[vertex1] (n13) at (3,3) {} ;
  
  \draw[edge] (n1) -- (n2) node[below right] {$x$}  ;
  \draw[edge] (n2) -- (n3) ;
  \draw[edge] (n2) -- (n4) ;
  
  \draw[edge] (n2) -- (n5); 
  
  \draw[edge] (n4) -- (n5) node[below left] {$u$} ;
  \draw[edge] (n5) -- (n6) ;
  \draw[edge] (n5) -- (n7) ;

  \draw[edge] (n4) -- (n8) node[left = 2.5pt] {$w$};
  \draw[edge] (n8) -- (n9) ;
  \draw[edge] (n8) -- (n10);
  
  \draw[edge] (n4) -- (n11) node[right = 2.5pt] {$v$};
  \draw[edge] (n11) -- (n12) ;
  \draw[edge] (n11) -- (n13);
  
\end{tikzpicture}
\end{minipage} 
\hfill
\begin{minipage}{.25\textwidth}
\centering
\begin{tikzpicture}
\tikzstyle{vertex}=[circle,draw = white, fill=black, minimum size = 7pt, inner sep=2pt]
\tikzstyle{vertex1}=[fill = white, draw = white]

\tikzstyle{edge}=[-,thick ]
\tikzstyle{elipse}=[-,thick ]

  \node[vertex1] (n1) at (0,0) {} ;
  \node[vertex] (n2) at (0.5,0.5) {} ;
  \node[vertex1] (n3) at (0,1) {} ;
  
  \node[vertex] (n4) at (1.5,1.25) {} ;
  \node[vertex] (n11) at (1.5,0.625) {};
  
  \node[vertex] (n5) at (2.5,0.5) {} ;
  \node[vertex1] (n6) at (3,1) {} ;
  \node[vertex1] (n7) at (3,0) {} ;
  
  \node[vertex] (n8) at (0.5,2.5) {} ;
  \node[vertex1] (n9) at (0,3) {} ;
  \node[vertex1] (n10) at (1,3) {} ;
  
  \node[vertex] (n12) at (2.5,2.5) {} ;
  \node[vertex1] (n13) at (2,3) {} ;
  \node[vertex1] (n14) at (3,3) {} ;
  
  \draw[edge] (n1) -- (n2) node[below right] {$x$} ;
  \draw[edge] (n2) -- (n3) ;
  \draw[edge] (n2) -- (n11);
 
  \draw[edge] (n4) -- (n11);
  
  \draw[edge] (n11) -- (n5) node[below left] {$u$};
  \draw[edge] (n5) -- (n6) ;
  \draw[edge] (n5) -- (n7) ;

  \draw[edge] (n4) -- (n8) node[left = 2.5pt] {$w$};
  \draw[edge] (n8) -- (n9) ;
  \draw[edge] (n8) -- (n10);
  
  \draw[edge] (n4) -- (n12) node[right = 2.5pt] {$v$};
  \draw[edge] (n12) -- (n13) ;
  \draw[edge] (n12) -- (n14);
\end{tikzpicture}
\end{minipage}
\hfill
\begin{minipage}{.25\textwidth}
\centering
\begin{tikzpicture}
\tikzstyle{vertex}=[circle,draw = white, fill=black, minimum size = 7pt, inner sep=2pt]
\tikzstyle{vertex1}=[fill = white, draw = white]

\tikzstyle{edge}=[-,thick ]
\tikzstyle{elipse}=[-,thick ]

  \node[vertex1] (n1) at (0,0) {} ;
  \node[vertex] (n2) at (0.5,0.5) {} ;
  \node[vertex1] (n3) at (0,1) {} ;
  
  \node[vertex] (n4) at (1.5,1.25) {} ;
  
  \node[vertex] (n5) at (2.5,0.5) {} ;
  \node[vertex1] (n6) at (3,1) {} ;
  \node[vertex1] (n7) at (3,0) {} ;
  
  \node[vertex] (n8) at (0.5,2.5) {} ;
  \node[vertex1] (n9) at (0,3) {} ;
  \node[vertex1] (n10) at (1,3) {} ;
  
  \node[vertex] (n11) at (2.5,2.5) {} ;
  \node[vertex1] (n12) at (2,3) {} ;
  \node[vertex1] (n13) at (3,3) {} ;
  
  \draw[edge] (n1) -- (n2) node[below right] {$x$};
  \draw[edge] (n2) -- (n3) ;
  \draw[edge] (n2) -- (n4) ;
  
  \draw[edge] (n8) -- (n11); 
  
  \draw[edge] (n4) -- (n5) node[below left] {$u$};
  \draw[edge] (n5) -- (n6) ;
  \draw[edge] (n5) -- (n7) ;

  \draw[edge] (n4) -- (n8) node[left = 2.5pt] {$w$};
  \draw[edge] (n8) -- (n9) ;
  \draw[edge] (n8) -- (n10);
  
  \draw[edge] (n4) -- (n11) node[right = 2.5pt] {$v$};
  \draw[edge] (n11) -- (n12) ;
  \draw[edge] (n11) -- (n13);
  
\end{tikzpicture}
\end{minipage}
}
\end{center}
\caption{Edge replacement transformation. Edge capacities are omitted.}
\label{fig:edgereplacement}
\end{figure}

A $k$-terminal graph $G$ is \emph{Wye-Delta} reducible to another $k$-terminal graph $H$, if $G$ is reduced to $H$ by repeatedly applying one of the operations 1-7. 

\begin{lemma} \label{lemm: cutReducability} Let $G$ and $H$ be $k$-terminal graphs. Moreover, let $G$ be Wye-Delta reducible to $H$. Then $H$ is a quality-$1$ cut sparsifier of $G$.
\end{lemma}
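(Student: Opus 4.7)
My plan is to prove the lemma by induction on the length $\ell$ of the reduction sequence $G=G_0\to G_1\to\cdots\to G_\ell=H$ where each step is one application of an operation from the list 1--7. The base case $\ell=0$ is immediate since $G=H$. For the inductive step it suffices to show that a single application of any one of the operations 1--7 produces a quality-$1$ cut sparsifier of its input graph, because equality of terminal minimum-cut values is transitive: if $G_{i+1}$ is a quality-$1$ cut sparsifier of $G_i$ for every $i$, then chaining the identities
$\mincut_{G_i}(S,K\setminus S)=\mincut_{G_{i+1}}(S,K\setminus S)$ across $i=0,\dots,\ell-1$ yields $\mincut_G(S,K\setminus S)=\mincut_H(S,K\setminus S)$ for every $S\subseteq K$, which is exactly the quality-$1$ guarantee.

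For operations 1--5 the single-step preservation property is precisely the content of Lemma~\ref{lemm: rules}, so nothing further is required in those cases. For operations 6 (edge deletion with a degree-three non-terminal $x$) and 7 (edge replacement around a degree-four non-terminal), the defining descriptions already exhibit each as an explicit short composition of operations drawn from 1--5: edge deletion is realized by a Delta-Wye transformation applied to the triangle $\{x,u,v\}$ (operation 5) followed by a series reduction at the newly degree-two vertex $x$ (operation 2), while edge replacement is realized by a Delta-Wye transformation (operation 5) followed by a Wye-Delta transformation (operation 4). Hence a single application of operation 6 or 7 decomposes into at most two intermediate graphs, each obtained by one of the basic operations, and Lemma~\ref{lemm: rules} applies to each intermediate step.

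The conclusion then follows by combining the case analysis above with the transitivity observation from the first paragraph: the reduction sequence from $G$ to $H$ is refined into a possibly longer sequence of basic Wye-Delta operations 1--5, each of which preserves every terminal minimum cut exactly, and the composition of all such preservations yields $\mincut_G(S,K\setminus S)=\mincut_H(S,K\setminus S)$ for all $S\subseteq K$. I do not anticipate a genuine obstacle here, since Lemma~\ref{lemm: rules} does all of the nontrivial work; the only mild care needed is to verify that whenever operation 6 or 7 is invoked, the capacities prescribed by the intermediate Delta-Wye step are non-negative and the subsequent series/Wye-Delta step is well-defined, which follows directly from the capacity formulas given in the definitions of the rules.
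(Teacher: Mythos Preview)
Your proposal is correct and follows essentially the same approach as the paper: induction on the number of operations in the reduction sequence, with Lemma~\ref{lemm: rules} supplying the single-step preservation. The paper's proof is terser and simply asserts that rules 1--7 preserve terminal minimum cuts by Lemma~\ref{lemm: rules}; you are slightly more careful in explicitly decomposing operations 6 and 7 into their constituent operations from 1--5 before invoking that lemma, which is a welcome clarification since Lemma~\ref{lemm: rules} is stated only for rules 1--5.
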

\begin{proof}
	Observe that the rules 1-7 do not affect any terminal vertex and each rule preserves exactly all terminal minimum cuts by Lemma \ref{lemm: rules}. An induction on the number of rules needed to reduce $G$ to $H$ proves the claim.
\end{proof}
\paragraph*{Grid Graphs.} 
A \emph{grid} graph is a graph with $n \times n$ vertices $\{(u,v) : u,v = 1,\ldots, n\}$, where $(u,v)$ and $(u',v')$ are adjacent if $|u' -u| + |v'-v| = 1$. For $k < n$, a \emph{half-grid} graph with $k$ terminals is a graph $T^{n}_k = (V,E)$ with $K \subset V$ and $n(n+1)/2$ vertices $\{(i,j) : i \leq j \text{ and } i,j=1,\ldots,n  \}$, where $(i,j)$ and $(i',j')$ are connected by an edge if $|i'-i| + |j'-j| = 1$, and additional diagonal edges between $(i,i)$ and $(i+1,i+1)$ for $i = 1,\ldots, n-1$. Moreover, each terminal vertex in $T^{n}_k$ must be one of its diagonal vertices, i.e., every $x \in K$ is of the form $(m,m)$ for some $m \in \{1,\ldots,n\}$. Let $\hat{T}^{n}_{k}$ be the same graph as $T^{n}_k$ but excluding the diagonal edges. 

\paragraph*{Graph Embeddings.}\label{sec: graphEmbeddings} 
Throughout this paper, we will be dealing with the embedding of a planar graph into a square \emph{grid} graph. One way of drawing graphs in the plane are \emph{orthogonal grid-embeddings}~\cite{Valiant81}. In such a setting, the vertices correspond to distinct points and edges consist of alternating sequences of vertical and horizontal segments. Equivalently, one can view this as drawing our input graph as a subgraph of some grid. Formally, a \emph{node-embedding} $\rho$ of $G_1=(V_1,E_1)$ into $G_2=(V_2,E_2)$ is an injective mapping that maps $V_1$ into $V_2$, and $E_1$ into paths in $G_2$, i.e., $(u,v)$ maps to a path from $\rho(u)$ to $\rho(v)$, such that every pair of paths that correspond to two different edges in $G_1$ is vertex-disjoint (except possibly at the endpoints). If $G_2$ is a planar graph, then $\rho(G_1)$ and $G_1$ are also planar. Thus, if $G_1$ and $G_2$ are planar we then refer to $\rho$ as an \emph{orthogonal embedding}. Moreover, given a planar graph $G_1$ drawn in the plane, the embedding $\rho$ is called \emph{region-preserving} if $\rho(G_1)$ and $G_1$ have the same planar topological embedding.


Let $G_1$ be a $k$-terminal graph. Since the embedding does not affect the vertices of $G_1$, the terminals of $G_1$ are also terminals in $\rho(G_1)$. Although the embedding does not consider capacity of the edges in $G_1$, we can still guarantee that such an embedding preserves all terminal minimum cuts, for which we make use of the following operation:
\begin{enumerate}
	\item \emph{Edge subdivision: } Let $(u,v)$ be an edge of capacity $c(u,v)$. Delete $(u,v)$, introduce a new vertex $w$ and add edges $(u,w)$ and $(w,v)$, each of capacity $c(u,v)$. 
\end{enumerate}

The following lemma shows that a node-embedding is a cut preserving mapping.
\begin{lemma} \label{lemm: cutembedding}
	Let $\rho$ be a node-embedding and let $G_1$ and $\rho(G_1)$ be $k$-terminal graphs defined as above. Then $\rho(G_1)$ preserves exactly all terminal minimum cuts of $G$.
\end{lemma}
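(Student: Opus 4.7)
The plan is to reduce the statement to a repeated application of the edge subdivision operation (operation 8) and then invoke the fact that series reductions preserve terminal minimum cuts exactly. First, I would fix the convention on capacities in $\rho(G_1)$ that was left slightly implicit in the statement: for every edge $e = (u,v) \in E_1$, the node-embedding assigns a path $P_e$ in $G_2$ between $\rho(u)$ and $\rho(v)$, and every edge lying on $P_e$ is declared to have capacity $c(u,v)$. Because the edge-paths $\{P_e\}_{e \in E_1}$ are pairwise internally vertex-disjoint, this assignment is well-defined, and the internal vertices of each $P_e$ are non-terminals in $\rho(G_1)$ (only the endpoints $\rho(u),\rho(v)$ inherit terminal status from $G_1$).

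The key observation is that, under this capacity assignment, $\rho(G_1)$ is obtained from $G_1$ by a sequence of edge subdivisions: starting from $G_1$, repeatedly pick an edge $e = (u,v)$ of capacity $c(u,v)$, delete it, and introduce a path of length $|P_e|$ replacing it, where each new edge has capacity $c(u,v)$. Iterating this until every original edge is expanded to the right length yields exactly $\rho(G_1)$ (up to renaming of the internal vertices of each path). I would therefore phrase the proof as an induction on the number of subdivisions, where at each step a single edge subdivision is performed.

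It remains to argue that a single edge subdivision preserves terminal minimum cuts exactly. This is essentially operation 2 (series reduction) read in reverse: replacing an edge $(u,v)$ of capacity $c(u,v)$ by a path $u\,\text{-}\,w\,\text{-}\,v$ with both new edges of capacity $c(u,v)$ introduces a degree-two non-terminal $w$; reapplying the series rule gives back an edge of capacity $\min\{c(u,v),c(u,v)\} = c(u,v)$. By Lemma~\ref{lemm: rules}, series reductions yield quality-$1$ cut sparsifiers in both directions, so the subdivided graph has the same value of $\mincut(S, K\setminus S)$ for every $S \subset K$. Composing these equalities along the sequence of subdivisions gives the lemma.

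The only step requiring care is the bookkeeping that justifies writing the embedding as a composition of independent subdivisions: one has to check that the order in which edges are expanded does not matter and that each subdivision acts on a current edge whose capacity is the original $c(u,v)$. This follows because the paths $P_e$ are internally vertex-disjoint, so the subdivisions on different edges of $G_1$ act on disjoint parts of the intermediate graph and commute. I do not foresee a harder obstacle, since the capacity convention collapses the embedding to a purely combinatorial fact about series-equivalence.
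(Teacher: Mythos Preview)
Your proposal is correct and follows essentially the same approach as the paper: both argue that $\rho(G_1)$ is obtained from $G_1$ by a sequence of edge subdivisions and reduce to showing that a single subdivision preserves terminal minimum cuts exactly. The only minor difference is that the paper verifies the single-subdivision step directly by exhibiting cuts in each direction, whereas you observe that a subdivision is the inverse of a series reduction and invoke Lemma~\ref{lemm: rules}; both justifications are equally valid.
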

\begin{proof}
	We can view each path obtained from the embedding as taking the edge corresponding to the path endpoints in $G_1$ and performing edge subdivisions finitely many times. We claim that such subdivisions preserve all terminal cuts. 
	
	Indeed, let us consider a single edge subdivision for $(u,v)$ (the general claim then follows by induction on the number of edge subdivisions). Fix $S \subset K$ and consider some $S$-separating minimum cut $(U,V \setminus U)$ in $G_1$ cutting $(u,v)$. Then, in the transformed graph $\rho(G_1)$, we can simply cut either the edge $(u,w)$ or $(w,v)$. Since by construction, the new edge has the same capacity as the subdivided edge, we get that $\capacity_{\rho(G_1)}(\delta(U)) = \capacity_{G_1}(\delta(U))$, and in particular $\mincut_{\rho(G_1)}(S,K \setminus S) \leq \mincut_{G_1}(S,K \setminus S)$. 
	
	Furthermore, since $G_1$ is obtained by contracting two edges of the same capacity of $\rho(G_1)$, for any $S$-separating minimum cut $(U, V \setminus U)$ in $\rho(G_1)$, we have $\capacity_{\rho(G_1)}(\delta(U)) \geq \capacity_{G_1}(\delta(U))$, and in particular $\mincut_{\rho(G_1)}(S,K \setminus S) \geq \mincut_{G_1}(S,K \setminus S)$. Combining the above gives the lemma. 
\end{proof}
%
%
\subsection{Our Construction}
In this section  we construct our exact cut sparsifier and prove that any planar $k$-terminal graph with all terminals lying on the same face admits a cut sparsifier of size $O(k^2)$ that is also planar.
\subsubsection{Embedding into Grids} \label{EmbeddingGrid}

It is well-known that one can obtain an orthogonal embedding of a planar graph with maximum-degree at most three into a grid (see Valiant~\cite{Valiant81}). However, our input planar graph can have arbitrarily large maximum degree. In order to be able to make use of such an embedding, we need to first reduce our input graph to a bounded-degree graph while preserving planarity and all terminal minimum cuts. We achieve this by making use of a \emph{vertex splitting} technique, which we describe below. 

Given a $k$-terminal planar graph $G'=(V',E',c')$ with $K \subset V'$ lying on the outer face, vertex splitting produces a $k$-terminal planar graph $G=(V,E,c)$ with $K \subset V$ such that the maximum degree of $G$ is at most three. Specifically, for each vertex $v$ of degree $d>3$ with neighboring vertices $u_1,\ldots,u_{d}$, we delete $v$ and introduce new vertices $v_1, \ldots, v_d$ along with edges $\{(v_i,v_{i+1}) : i=1,\ldots,d-1\}$, each of capacity $C+1$, where $C=\sum_{e \in E'} c'(e)$. Further, we replace the edges $\{(u_i,v) : i = 1,\ldots,d\}$ with $\{(u_i,v_i) : i = 1,\ldots,d\}$, each of corresponding capacity. If $v$ is a terminal vertex, we set one of the $v_i$'s to be a terminal vertex. It follows that the resulting graph $G$ is planar and terminals can be still embedded on the outer face. Note that while the degree of every vertex $v_i$ is at most $3$, the degree of any other vertex is not affected. 

\begin{claim} \label{claim: embedding}
Let $G'$ and $G$ be $k$-terminal graphs defined as above. Then $G$ preserves exactly all minimum terminal cuts of $G'$, i.e., $G$ is a quality-$1$ cut sparsifier of $G'$.
\end{claim}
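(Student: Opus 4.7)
The plan is to establish equality $\mincut_G(S, K\setminus S) = \mincut_{G'}(S, K\setminus S)$ for every subset $S \subset K$ by proving the two inequalities separately. The key property to exploit is the choice of capacity $C+1$ for every edge $(v_i, v_{i+1})$ along a splitting path: since $C = \sum_{e \in E'} c'(e)$ is a strict upper bound on \emph{any} cut value in $G'$, the heavy edges will be prohibitively expensive for any minimum cut in $G$ to include.

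For the direction $\mincut_G(S, K\setminus S) \leq \mincut_{G'}(S, K\setminus S)$, I would start with an arbitrary $S$-separating cut $(U', V'\setminus U')$ in $G'$ and lift it to $G$ by declaring that for each split vertex $v$, all of its copies $v_1,\ldots,v_d$ belong to the same side of the cut as $v$ did in $G'$. The cut remains $S$-separating because the designated terminal copy is placed on the same side as the original terminal. None of the heavy path edges $(v_i,v_{i+1})$ is cut by construction, and the edges $(u_i,v_i)$ of $G$ that cross the cut are in one-to-one correspondence with the edges $(u_i,v)$ of $G'$ that cross the original cut, with matching capacities. Hence $\capacity_G(\delta(U)) = \capacity_{G'}(\delta(U'))$, yielding the inequality.

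For the reverse direction, let $(U, V\setminus U)$ be an $S$-separating minimum cut in $G$. By the first direction, its capacity is at most $\mincut_{G'}(S, K\setminus S) \leq C$. Since each heavy edge has capacity $C+1 > C$, no heavy edge can lie in $\delta(U)$; otherwise the cut's value would strictly exceed $C$, contradicting minimality. Consequently, for every original vertex $v$ the entire path $v_1 - v_2 - \cdots - v_d$ lies on one side of the cut, so we may define $U' \subset V'$ by $v \in U'$ iff the copies $v_i$ lie in $U$. This is well-defined, the cut is $S$-separating (using that the chosen terminal copy determines the side of the corresponding original terminal), and the cutset in $G'$ is in capacity-preserving bijection with the non-heavy edges of $\delta(U)$ in $G$. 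Therefore $\mincut_{G'}(S, K\setminus S) \leq \capacity_{G'}(\delta(U')) = \capacity_G(\delta(U)) = \mincut_G(S, K\setminus S)$, completing the proof.

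There is no real obstacle here; the only delicate point is verifying that the capacity threshold $C+1$ is chosen large enough to rule out \emph{every} possible min-cut value, which is why the paper sets it strictly above the total edge-capacity sum $C$ rather than, say, the maximum single edge capacity. Once that is in place the argument is a routine ``cuts lift and descend'' verification, and the same construction automatically preserves planarity and the fact that terminals remain on the outer face, since vertex splitting can be performed consistently with the planar embedding by placing the $v_i$'s in the local rotation order around $v$.
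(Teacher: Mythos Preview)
Your proposal is correct and follows essentially the same approach as the paper: use the fact that the path edges carry capacity $C+1$ to rule them out of any minimum terminal cut, and then observe that the remaining edges of $G$ are in capacity-preserving bijection with edges of $G'$. The paper compresses the two directions into a single contradiction argument (if a heavy edge were cut, moving all $v_i$ to one side yields a strictly cheaper cut of cost at most $C$), whereas you spell out both inequalities separately and use the already-established direction $\mincut_G \le \mincut_{G'} \le C$ to exclude heavy edges; this is a cosmetic difference, not a substantive one.
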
 
\begin{proof}
	It suffices to prove the case where $G$ is obtained from $G'$ by a single vertex splitting. Then the claim follows by induction on the number of vertex splittings required to transform $G'$ to $G$.
	
	Let $S \subset K$ and $(U, V \setminus U)$ be an $S$-separating cut in $G$ of size $\mincut_G(S,K \setminus S)$. Suppose towards contradiction that $\delta(U)$ contains an edge of the form $(v_j,v_{j+1})$, for some $j$, which in turn gives that $\capacity(\delta(U)) \geq C + 1$. Then we can move all the points $v_i$ to one of the sides of the cut $(U, V \setminus S)$ and obtain a new $S$-separating cut in $G$ of cost at most $C$, contradicting the fact that $(U, V \setminus U)$ is a minimum terminal cut. Hence, it follows that $\delta(U)$ uses either edges that are in both $G$ and $G'$ or edges of the form $(u_i,v_i)$, which by construction have the same capacity as the edges $(u_i,v)$ in $G'$. Thus, an $S$-separating minimum cut in $G$ corresponds to an $S$-separating minimum cut in $G'$ of the same cost. Since $S$ is chosen arbitrarily, the claim follows.
\end{proof}
%

Let $G=(V,E)$ be a $k$-terminal graph obtained by vertex splitting of all vertices of degree larger than $3$ of $G'=(V',E')$. Further, let $n' = |V'|$, $m' = |E'|$, $n = |V|$ and $m = |E|$. Then it is easy to show that $n \leq 2 m'$ and $m \leq m' + n \leq 3m'$. Since $G'$ is planar, we have that $n = O(n')$ and $m = O(n')$. Thus, by just a linear blow-up on the size of vertex and edge sets, we may assume w.l.o.g. that our input graph is a planar graph of degree at most three. 

Valiant~\cite{Valiant81} and Tamassia et al.~\cite{TamassiaT89} showed that a $k$-terminal planar graph $G$ with $n$ vertices and degree at most three admits an orthogonal region-preserving embedding into some square grid of size $O(n) \times O(n)$. By Lemma \ref{lemm: cutembedding}, we know that the resulting graph exactly preserves all terminal minimum cuts of $G$. We remark that since the embedding is region-preserving, the outer face of the input graph is embedded to the outer face of the grid. Therefore, all terminals in the embedded graph lie on the outer face of the grid. Performing appropriate edge subdivisions, we can make all the terminals lie on the boundary of some possibly larger grid. Further, we can add dummy non-terminals and zero edge capacities to transform our graph into a full-grid $H$. We observe that the latter does not affect any terminal min-cut. The above leads to the following:
\begin{lemma} \label{lemm: embeddingGrids}
Given a $k$-terminal planar graph $G$, where all terminals lie on the outer face, there exists a $k$-terminal grid graph $H$, where all terminals lie on the boundary such that $H$ preserves exactly all terminal minimum cuts of $G$. The resulting graph has $O(n^2)$  vertices and edges.
\end{lemma}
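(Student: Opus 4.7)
The plan is to chain together the three preservation tools already developed in this section: vertex splitting (Claim on vertex splitting), orthogonal region-preserving embedding of max-degree-three planar graphs (Valiant, Tamassia et al.), and edge subdivision (which is a special case of the node-embedding preservation Lemma). Concretely, I would first apply vertex splitting to every vertex of $G$ of degree larger than three, using sufficiently large capacity $C+1$ on the ``splitting'' paths so that no minimum terminal cut ever uses such an edge. By the Claim, the resulting graph $G''$ is a quality-$1$ cut sparsifier of $G$; because each split is performed locally around a vertex of the outer face, planarity is preserved and every terminal can still be placed on the outer face. A linear accounting shows $|V(G'')|, |E(G'')| = O(n)$.

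Next, I would invoke the orthogonal region-preserving embedding of Valiant and of Tamassia--Tollis, which embeds any degree-$\leq 3$ planar graph on $n'$ vertices into a grid of size $O(n') \times O(n')$, mapping vertices to grid points and edges to vertex-disjoint rectilinear paths. Calling this embedding $\rho$, I would apply it to $G''$. By the node-embedding cut-preservation Lemma (whose proof reduces the embedding to a sequence of edge subdivisions), $\rho(G'')$ exactly preserves the terminal minimum cuts of $G''$, hence of $G$. Region-preservation guarantees that the outer face of $G''$ maps to the outer face of the grid, so all $k$ terminals still lie on the outer face of $\rho(G'')$.

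The last two steps are cosmetic adjustments to obtain a genuine grid $H$ with terminals on the boundary. First, since the terminals already lie on the outer face, I would perform additional edge subdivisions along the boundary (or, equivalently, slightly enlarge the grid) so that every terminal is realized at a boundary grid point; edge subdivision preserves cuts exactly by the argument in the node-embedding lemma. Second, I would ``fill in'' every grid point and grid edge that is not used by $\rho(G'')$ by adding dummy non-terminal vertices connected by edges of capacity zero. Zero-capacity edges cannot change the value of any minimum cut, so the resulting full grid $H$ still satisfies $\mincut_H(S, K\setminus S) = \mincut_G(S, K\setminus S)$ for every $S \subset K$.

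The bookkeeping is the only subtle point: starting from $n' = O(n)$, the Valiant/Tamassia embedding yields an $O(n) \times O(n)$ grid, the boundary adjustment at most doubles the grid side length, and dummy filling only populates unused cells, so the final grid has $O(n^2)$ vertices and edges as claimed. The main obstacle I expect is verifying that each of these successive transformations (vertex splitting, orthogonal embedding, boundary subdivision, zero-capacity completion) individually preserves \emph{all} terminal minimum cuts exactly; fortunately each preservation statement is already established earlier in this section, so the proof is essentially a matter of composing them in the correct order.
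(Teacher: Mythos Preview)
Your proposal is correct and follows essentially the same approach as the paper: vertex splitting to bound the degree by three (using the Claim), then the Valiant/Tamassia orthogonal region-preserving embedding into an $O(n)\times O(n)$ grid (using Lemma~\ref{lemm: cutembedding}), then edge subdivisions to push terminals to the boundary, and finally padding with zero-capacity dummy edges to obtain a full grid. The paper's argument is exactly this composition, so there is nothing to add.
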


\subsubsection{Embedding Grids into Half-Grids}

Next, we show how to embed square grids into half-grid graphs (see Section \ref{sec: preli}), which will facilitate the application of Wye-Delta transformations. The existence of such an embedding was claimed in the thesis of Gitler~\cite{Gitler91}, but no details on its construction were given. 

Let $G$ be a $k$-terminal square grid on $n \times n$ vertices where terminals lie on the boundary of the grid. We obtain the following:
\begin{figure}[t] 
\begin{center}
\includegraphics[scale=1]{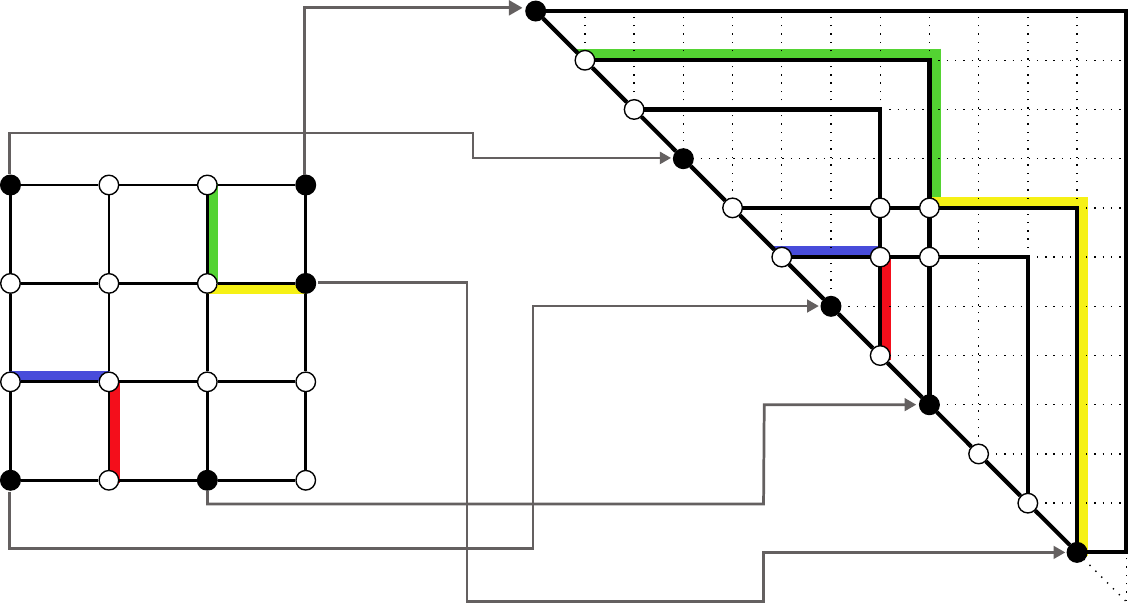}
\end{center}
\caption{Embedding grid into half-grid. Black vertices represent terminals while white vertices represent non-terminals. The counter-clockwise ordering starts at the top right terminal. Coloured edges and paths correspond to the mapping of the respective edges: blue for edges $((i,1),(i,2))$, red for edges $((n-1,j),(n,j))$, green for edges $((1,j),(2,j))$ and yellow for edges $((i,n-1), (i,n))$, where $i,j = 2,\ldots,n-1$. }
\label{fig: halfgridConst}
\end{figure}

\begin{lemma} \label{lemm: embeddingHalfGrid}
There exists a node embedding of the grid $G$ into $T^{\ell}_{k}$, where $\ell  = 4n-3$. 
\end{lemma}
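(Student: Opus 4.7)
The plan is to construct an explicit orthogonal node-embedding $\rho\colon G \to T^{\ell}_{k}$ with $\ell = 4n-3$, along the lines of Figure~\ref{fig: halfgridConst}, and then verify that the prescribed paths are internally vertex-disjoint. The guiding idea is to ``unfold'' the boundary of the square grid $G$ so that it traces out the diagonal of the triangular half-grid, and to place the interior of $G$ inside the resulting triangle.

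First I would fix the vertex map. The grid $G$ has exactly $4(n-1) = 4n-4$ boundary vertices, and by assumption all $k$ terminals lie among them. Starting from the top-right corner and walking counter-clockwise along the boundary of $G$, I would send the resulting sequence of boundary vertices bijectively onto $4n-4$ of the $\ell = 4n-3$ diagonal vertices $(m,m)$ of $T^{\ell}_{k}$, preserving the order. In particular, each boundary edge of $G$ is then realised by a single diagonal edge of $T^{\ell}_{k}$, and each terminal of $G$ lands on a diagonal vertex, as required by the definition of $T^{\ell}_{k}$. The remaining $(n-2)^{2}$ interior vertices of $G$ are placed at prescribed lattice points strictly above the diagonal, at positions that form an embedded copy of an $(n-2)\times(n-2)$ sub-grid of $T^{\ell}_{k}$.

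Next, I would specify the edge-paths. The edges of $G$ split naturally into three classes: boundary edges (already handled by diagonal edges), edges with exactly one endpoint on the boundary of $G$, and edges between two interior vertices. Interior-interior edges can be routed as single axis-parallel edges of $T^{\ell}_{k}$, since the interior vertices of $G$ were placed on an embedded sub-grid. The delicate class is the ``near-boundary'' edges of $G$, namely $((i,1),(i,2))$, $((n-1,j),(n,j))$, $((1,j),(2,j))$ and $((i,n-1),(i,n))$: since the four sides of $\partial G$ have been straightened onto the single diagonal of $T^{\ell}_{k}$, these edges must detour through the interior. I would route each of the four families as the corresponding coloured L-shaped path in Figure~\ref{fig: halfgridConst}, each family using only rows and columns of $T^{\ell}_{k}$ in its own ``quadrant'' of the triangle. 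The $+1$ in $\ell = 4(n-1)+1$ provides the single extra diagonal slot needed for the routing to close up at the top-right corner.

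The main obstacle is verifying that all the paths are internally vertex-disjoint. I would handle this by a case analysis according to which of the four colour classes a path belongs to. Paths from different sides of $\partial G$ use disjoint quadrants of the triangular region of $T^{\ell}_{k}$ by construction, so only paths from the same side can interact, and within a single side the L-shapes are nested and share no internal vertex by a straightforward index check; interior-interior edges use a separate central sub-grid of $T^{\ell}_{k}$ and therefore cannot collide with the coloured paths. Combined with the diagonal placement of the boundary vertices from the first step, this yields the required injective vertex map and internally disjoint edge-paths, establishing that $\rho$ is a node-embedding of $G$ into $T^{4n-3}_{k}$.
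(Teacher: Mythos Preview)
Your approach is essentially the paper's: unfold the boundary cycle onto the diagonal, drop the $(n-2)\times(n-2)$ interior into a translated sub-grid above it, and route the four families of near-boundary edges by axis-parallel paths in separate regions of the triangle. The verification of internal vertex-disjointness by splitting into these families is also exactly what the paper does.

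There is, however, one genuine gap. You claim that \emph{each} boundary edge of $G$ is realised by a single diagonal edge of $T^{\ell}_k$. This cannot be: the boundary of $G$ is a cycle with $4n-4$ edges, whereas the $4n-4$ diagonal vertices you use span only $4n-5$ consecutive diagonal edges. Exactly one boundary edge, the ``closing'' edge between the first and last vertex of your counter-clockwise traversal (in the paper, the edge $((1,n),(2,n))$), has its endpoints at opposite ends of the diagonal and must be routed by a long path through the outermost row and column of $T^{\ell}_k$; the extra diagonal slot alone does not suffice to ``close it up''. This path lives on the very boundary of the triangle, so it stays disjoint from everything you place in the interior, but you do need to specify it and include it in the disjointness check. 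Once you add this single special path (and note that two of the four near-boundary families are straight segments rather than L-shapes), your proof matches the paper's.
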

\begin{proof}
Our construction works as follows (See Fig.~\ref{fig: HalfGridRed} for an example). We first fix an ordering on the vertices lying on the boundary of the grid in the order induced by the grid. Then we embed each vertex according to that order into the diagonal vertices of the half-grid, along with the edges that form the boundary of the grid. The sub-grid obtained by removing all boundary vertices is embedded appropriately into the upper-part of the half-grid. Finally, we show how to embed edges between the boundary and the sub-grid vertices and argue that such an embedding is indeed vertex-disjoint for any pair of paths.

\input{figure_halfgridReduction}

We start with the embedding of the vertices of $G$. Let us first consider the boundary vertices. The ordering imposed on these vertices can be viewed as starting with the upper-right vertex $(1,n)$ and visiting the rest of vertices in a counter-clockwise direction until reaching the vertex $(2,n)$. We map the vertices on the boundary as follows.
\begin{enumerate}
\setlength\itemsep{0.1em}
\item The vertex $(1,j)$ is mapped to the vertex $(n-j+1,n-j+1)$ for $j=2,\ldots,n$, 
\item The vertex $(i,1)$ is mapped to the vertex $(n+i-1,n+i-1)$ for $i=1,\ldots,n-1$,
\item The vertex $(n,j)$ is mapped to the vertex $(2n+j-2,2n+j-2)$ for $j=1,\ldots,n-1$,
\item The vertex $(i,n)$ is mapped to the vertex $(4n-i-2,4n-i-2)$  for $i=2,\ldots,n$.
\end{enumerate} 
Now we consider the vertices that belong to the induced sub-grid $S$ of $G$ of size $(n-2)^2$ when removing the boundary vertices of our input grid. We map the vertex $(i,j)$ to the vertex $(n+i-1, 2n+j-2)$ for $i,j=2,\ldots,n-1$. In other words, for every vertex of $S$ we make a vertical shift by $n-1$ units and an horizontal shift by $2n-2$ units. By construction, it is not hard to check that every vertex of $G$ is mapped to a different vertex of $T^{\ell}_{k}$ and all terminal vertices lie on the diagonal of $T^{\ell}_{k}$. 

We continue with the embedding of the edges of $G$. First, every edge between two boundary vertices in $G$ is embedded to the edge between the corresponding mapped diagonal vertices of $T^{\ell}_{k}$, except the edge between $(1,n)$ and $(2,n)$. For this edge, we define an edge embedding between the corresponding vertices $(1,1)$ and $(4n-4,4n-4)$ of $T^{\ell}_{k}$ by using the path:
\[ (1,1) \rightarrow (1,2) \rightarrow \ldots \rightarrow (1,4n-3) \rightarrow (2,4n-3) \rightarrow \ldots \rightarrow (4n-4,4n-3) \rightarrow (4n-4,4n-4).\]

Next, every edge of the sub-grid $S$ is embedded in to the edge connecting the mapped endpoints of that edge in $T^{\ell}_{k}$. In other words, if $(i,j)$ and $(i',j')$ were connected by an edge $e$ in $S$, then $(n+i-1, 2n+j-2)$ and $(n+i'-1, 2n+j'-2)$ are connected by an edge $e'$ in $T^{\ell}_{k}$ and $e$ is mapped to $e'$. Finally, the only edges that remain are those connecting a boundary vertex of $G$ with a boundary vertex of $S$. We distinguish four cases depending on the edge position. 
\begin{enumerate}
\setlength\itemsep{0.1em}
\item The edge $((i,2), (i,1))$ is mapped to the horizontal path given by:
\[
	(n+i-1,2n) \rightarrow (n+i-1, 2n-1) \rightarrow \ldots \rightarrow (n+i-1,n+i-1) \text{ for } i = 2,\ldots,n-1.
\]
\item The edge $((n-1,j), (n,j))$ is mapped to the vertical path given by:
\[
	(2n-2,2n+j-2) \rightarrow (2n-1, 2n+j-2) \rightarrow \ldots \rightarrow (2n+j-2,2n+j-2) \text{ for } j = 2,\ldots,n-1.
\]
\item The edge $((2,j), (1,j))$ is mapped to the $L$-shaped path:
\begin{align*}
	(n+1,2n+&j-2)  \rightarrow (n, 2n+j-2) \rightarrow \ldots \rightarrow (n-j+1,2n+j-2) \\
	&  \rightarrow (n-j+1,2n+j-3)  \rightarrow \ldots \rightarrow (n-j+1,n-j+1) \text{ for } j = 2,\ldots,n-1.
\end{align*}

\item The edge $((i,n-1), (i,n))$ is mapped to the $L$-shaped path:
\begin{align*}
	(n+i-1,3n&-3)  \rightarrow (n+i-1, 3n-2) \rightarrow \ldots \rightarrow (n+i-1,4n-i-2) \\
	&  \rightarrow (n+i,4n-i-2)  \rightarrow \ldots \rightarrow (4n-i-2,4n-i-2) \text{ for } i = 2,\ldots,n-1.
\end{align*}
\end{enumerate} 
By construction, it follows that the paths in our edge embedding are vertex disjoint.
\end{proof}

\subsubsection{Reducing Half-Grids and Bringing the Piece Together}
We now review the construction of Gitler~\cite{Gitler91}, which shows how to reduce half-grids to much smaller half-grids (excluding diagonal edges) whose size depends only on $k$. For the sake of completeness, we provide a full proof here. Recall that $\hat{T}^{n}_{k}$ is the graph $T^{n}_k$ without the diagonal edges. 

\begin{lemma}[\cite{Gitler91}] \label{lemm: Gitler}
For any positive $k,n$ with $k<n$, $T^{n}_k$ is Wye-Delta reducible to $\hat{T}^{k}_{k}$.
\end{lemma}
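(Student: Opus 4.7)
The plan is an induction on $n-k \geq 0$. In the base case $n=k$, every diagonal vertex of $T^k_k$ is a terminal, so the $k-1$ diagonal edges must be eliminated without deleting any of the endpoints. For each diagonal edge $\{(m,m),(m+1,m+1)\}$, the non-terminal $(m,m+1)$ together with the two endpoints forms a triangle via the grid edges $\{(m,m),(m,m+1)\}$ and $\{(m,m+1),(m+1,m+1)\}$. Apply a Delta--Wye transformation (rule 5) to this triangle; this introduces a new non-terminal $w$ adjacent to the three triangle vertices and deletes the three triangle edges, in particular the diagonal edge. Now $(m,m+1)$ has degree three (its two grid edges into the triangle are replaced by a single edge to $w$, while its other grid neighbours $(m-1,m+1)$ and $(m,m+2)$ persist); a Wye--Delta transformation (rule 4) at $(m,m+1)$ replaces it by a triangle on $\{w,(m-1,m+1),(m,m+2)\}$, so $w$ inherits all four original grid neighbours of $(m,m+1)$ and only the off-grid edge $\{(m-1,m+1),(m,m+2)\}$ is left over. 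That edge is removed by an edge replacement (rule 7) centred at the degree-four non-terminal $(m-1,m+2)$, producing a duplicate of the existing grid edge $\{(m-2,m+2),(m-1,m+3)\}$ which is then collapsed by a parallel reduction. Iterating for $m=1,\ldots,k-1$ yields $\hat T^k_k$.

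For the inductive step $n>k$, at least one diagonal vertex of $T^n_k$ is a non-terminal, and I reduce $T^n_k$ to a graph isomorphic to $T^{n-1}_k$ and apply the inductive hypothesis. If the non-terminal vertex is at a corner, say $(n,n)$ with degree two, then a series reduction followed by a parallel reduction removes it cleanly (the new edge coincides with the existing grid edge $\{(n-1,n),(n-1,n-1)\}$). The rest of the rightmost column is a chain of non-terminals; each becomes degree two as its successor in the column is removed, so I peel them off by iterated series reductions, migrating each off-grid edge that appears back to a grid position via an edge replacement (rule 7) followed by a parallel reduction. If the non-terminal $(m,m)$ lies in the interior ($1<m<n$, degree four), I first apply Delta--Wye transformations to each of the two triangles $\{(m,m),(m-1,m),(m-1,m-1)\}$ and $\{(m,m),(m,m+1),(m+1,m+1)\}$ formed by $(m,m)$ together with one of its incident diagonal edges. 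This reduces $(m,m)$'s degree to two, so a series reduction eliminates it, and a clean-up sequence of edge replacements and parallel reductions absorbs the two auxiliary non-terminals introduced by the Delta--Wye steps back into the grid, producing $T^{n-1}_k$.

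The main obstacle I expect is the bookkeeping of off-grid edges created by series reductions and Delta--Wye transformations: each such edge has to be routed back onto a grid position by an edge replacement (rule 7), which requires a degree-four non-terminal to be available at a specific pivot location. I would formalize this by maintaining an explicit invariant---that the intermediate graph is a half-grid extended by a bounded, progressively retracting ``trail'' of off-grid edges---and verifying its preservation under each sub-step of the reduction. Since every operation in rules 1--7 preserves all terminal minimum cuts by Lemma~\ref{lemm: rules}, once the graph-theoretic reduction to $\hat T^k_k$ is established the cut-preservation follows automatically, and the reduction sequence itself constitutes the desired Wye--Delta reducibility.
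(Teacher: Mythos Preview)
Your overall strategy---induct on $n-k$, handling the base case $n=k$ by eliminating the diagonal edges and the inductive step by stripping off one diagonal non-terminal at a time---is reasonable in spirit, but there is a concrete error that recurs throughout and would prevent the argument from going through as written. In the base case you claim that after the Delta--Wye / Wye--Delta pair the leftover off-grid edge $\{(m-1,m+1),(m,m+2)\}$ is converted by a \emph{single} edge replacement at $(m-1,m+2)$ into a duplicate of the ``existing grid edge $\{(m-2,m+2),(m-1,m+3)\}$''. But $\{(m-2,m+2),(m-1,m+3)\}$ is not a grid edge: its endpoints differ by $1$ in each coordinate, so it is another off-grid diagonal of the same shape, merely shifted one step toward the boundary. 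The identical issue appears in your inductive step when you peel off the rightmost column: the series reduction at $(n-1,n)$ creates the off-grid edge $\{(n-2,n),(n-1,n-1)\}$, and one edge replacement at $(n-2,n-1)$ only shifts it to $\{(n-3,n-1),(n-2,n-2)\}$, which is again off-grid. In both places a single replacement does not suffice; the off-grid edge must be pushed \emph{repeatedly} by edge replacements until it reaches a boundary vertex of the half-grid, where an edge deletion (rule~6), not a parallel reduction, finishes it off. You flag exactly this bookkeeping as ``the main obstacle,'' but the concrete steps you write down presume it is already solved.

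For comparison, the paper's proof avoids the induction on $n$ altogether. It works directly on $T^n_k$: first it eliminates \emph{all} diagonal edges (each one by iterated edge replacements pushing it to the boundary, then edge deletion), leaving $\hat T^n_k$; then it series-reduces every non-terminal diagonal vertex $(i,i)$, removes the new diagonal edges this creates, and iterates, yielding a staircase-shaped subgraph on which the only degree-two vertices are terminals; finally it removes the superfluous rows and columns between each pair of adjacent terminals by Wye--Delta transformations followed again by iterated edge replacements to the boundary. The key mechanism---iterating edge replacement until the unwanted edge is absorbed at the boundary---is the same one your argument needs, but the paper makes the iteration explicit rather than collapsing it to one step.
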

\begin{proof}
	For sake of simplicity, we assume w.l.o.g that the four vertices $(1,1)$, $(2,2)$, $(n-1,n-1)$ and $(n,n)$ are terminals\footnote{If they are not terminals, we can simply define them as terminals, thus increasing the number of terminals to $k+4 = O(k)$.}. Furthermore, we say that two terminals $(i,i)$ and $(j,j)$ are \textit{adjacent} iff $i<j$ and there is no terminal $(\ell,\ell)$ such that $i < \ell < j$. 
	
	We next describe the reduction procedure. Also see Fig.~\ref{fig: HalfGridRed} for an example. The reduction procedure starts by removing the diagonal edges of $T^{n}_k$, thus producing the graph $\hat{T}^{n}_k$ . Specifically, the two edges $((1,1),(2,2))$ and $((n-1,n-1), (n,n))$ are removed using an edge deletion operation. For each remaining diagonal edge of the form $((i,i), (i+1,i+1))$, $i=2,\ldots,n-2$ we repeatedly apply an edge replacement operation until the edge is incident to a boundary vertex $(1,j)$ or $(j,n)$ of the grid, where an edge deletion operation with one of the neighbours of $(1,j)$ resp. $(j,n)$ as vertex $x$ is applied.
	
	Now, we know that all non-terminals of the form $(i,i)$ are degree-two vertices, thus a series reduction is applied on each of them. This produces new diagonal edges, which are effectively reduced by the above procedure. We keep removing the newly-created degree-two non-terminal vertices and the newly-created edges until no further removals are possible. At this point, the only degree-$2$ vertices are terminal vertices. 
	
	The resulting graph has a staircase structure, where for every pair of adjacent terminals $(i,i)$ and $(j,j)$, there is a non-terminal $(i,j)$ of degree three or four, namely, the intersection vertex, and a (possibly empty) sequence of degree-three non-terminals that lie on the boundary path from $(i,i)$ to $(j,j)$. For $k = i+1,\ldots,j-1$, let $(i,k)$ and $(k,j)$ be the degree-three non-terminals lying on the row and the column subpath, respectively. Additionally, for $k = i+1,\ldots,j-1$, let $C^{i}_{k}=\{(i',k) : i'=i,\ldots,1\}$, resp. $R^{j}_{k}=\{(k,j') : j'=j,\ldots,n\}$ be the vertices sharing the same column, resp. row with $(i,k)$, resp. $(k,j)$. We next show that the vertices belonging to $C^{i}_{k}$ and $R^{j}_{k}$ can be removed. 
	
	The removal process works as follows. For $k=i+1,\ldots,j-1$, we start by choosing a degree $3$ vertex $(i,k)$ and its corresponding column $C^{i}_{k}$. Then we apply a Wye-Delta transformation on $(i,k)$, thus creating two new diagonal edges. Similarly as above, we remove such edges by repeatedly applying an edge replacement operation until they have been pushed to the boundary of the grid, where an edge deletion operation is applied. In the resulting graph, the vertex $(i-1,k) \in C^{i}_{k}$ is now a degree-three non-terminal. We apply the same procedure to this vertex. Applying such a procedure to all remaining vertices of $C^{i}_{k}$, we eliminate a column of the grid. Symmetrically, the same process applies to the case when we want to remove the row $R^{j}_{k}$ corresponding to the vertex $(k,j)$.
	
	Applying the above removal process for every adjacent terminal pair and the corresponding degree-three non-terminals, we end up with the graph $\hat{T}^{k}_{k}$, where every diagonal vertex is a terminal. By definition, it follows that $\hat{T}^{k}_{k}$ has at most $O(k^2)$ vertices.
\end{proof}

Combining the above reductions leads to the following theorem:
\begin{theorem} \label{thm: CutSparsifier}
Let $G$ be a $k$-terminal planar graph where all terminals lie on the outer face. Then $G$ admits a quality-$1$ cut sparsifier of size $O(k^{2})$, which is also a planar graph.
\end{theorem}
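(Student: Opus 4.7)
The plan is a straightforward composition: chain together the four reductions developed in the preceding subsections, checking at each step that terminal minimum cuts and planarity are preserved. The key enabling observation is that being a quality-$1$ cut sparsifier is transitive: if $H_1$ preserves terminal minimum cuts of $G$ exactly (with the same terminal set $K$) and $H_2$ preserves those of $H_1$, then $H_2$ preserves those of $G$.

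I would proceed in four stages. First, apply the vertex splitting procedure of Section~\ref{EmbeddingGrid} to produce a planar graph $G_1$ of maximum degree at most three with all $k$ terminals still on the outer face; Claim~\ref{claim: embedding} guarantees exact preservation of terminal minimum cuts, and the size blows up only linearly. Second, invoke Lemma~\ref{lemm: embeddingGrids} to embed $G_1$ into a $k$-terminal square grid $H_1$ with all terminals on its boundary, again preserving terminal minimum cuts. Third, use Lemma~\ref{lemm: embeddingHalfGrid} to node-embed $H_1$ into the half-grid $T^{\ell}_k$ with $\ell = 4n-3$; by Lemma~\ref{lemm: cutembedding}, this embedding preserves all terminal minimum cuts exactly. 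Fourth, apply Lemma~\ref{lemm: Gitler} to reduce $T^{\ell}_k$ via Wye-Delta operations to $\hat{T}^k_k$, which has $k(k+1)/2 = O(k^2)$ vertices; Lemma~\ref{lemm: cutReducability} ensures that terminal minimum cuts are preserved through this last step. Composing the four steps yields a quality-$1$ cut sparsifier of $G$ of size $O(k^2)$.

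Planarity is maintained throughout: vertex splitting clearly preserves planarity, the orthogonal grid embedding produces a subgraph of a grid, the half-grid embedding is into the planar graph $T^{\ell}_k$, and every Wye-Delta rule is a local transformation that preserves planarity. There is essentially no remaining obstacle, since all technical content resides in the preceding lemmas; the only bookkeeping is to verify that each reduction keeps the \emph{same} set $K$ of terminals rather than merging or renaming them, which is evident from the construction — vertex splitting retains a terminal representative, the grid and half-grid embeddings treat terminals as fixed vertices, and the Wye-Delta rules act only on non-terminals.
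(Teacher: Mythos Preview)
Your proposal is correct and follows essentially the same composition as the paper's proof. The only step you omit is the padding after the half-grid embedding: $\rho(H_1)$ is merely a \emph{subgraph} of $T^{\ell}_k$, so the paper explicitly extends it to the full $T^{\ell}_k$ by adding dummy non-terminals and zero-capacity edges before Lemma~\ref{lemm: Gitler} can be applied.
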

\begin{proof}
Let $n$ denote the number of vertices in $G$. First, we apply Lemma \ref{lemm: embeddingGrids} on $G$ to obtain a grid graph $H$ with $O(n^{2})$ vertices, which preserves exactly all terminal minimum cuts of $G$. We then apply Lemma \ref{lemm: embeddingHalfGrid} on $H$ to obtain a node embedding $\rho$ into the half-grid $T^{\ell}_{k}$, where $\ell = 4n-3$. By Lemma \ref{lemm: cutembedding}, $\rho(H)$ preserves exactly all terminal minimum cuts of $H$. We can further extend $\rho(H)$ to the full half-grid $T^{\ell}_{k}$, if dummy non-terminals and zero edge capacities are added. Finally, we apply Lemma \ref{lemm: Gitler} on $T^{\ell}_{k}$ to obtain a Wye-Delta reduction to the reduced half-grid graph $\hat{T}^{k}_{k}$. It follows by Lemma \ref{lemm: cutReducability} that $\hat{T}^{k}_{k}$ is a quality-$1$ cut sparsifier of $T^{\ell}_{k}$, where the size guarantee is immediate from the definition of $\hat{T}^{k}_{k}$.
\end{proof}

\section{Extensions to Planar Flow and Distance Sparsifiers} \label{sec: UpperFlow}
In this section we show how to extend our result for cut sparsifiers to flow and distance sparsifiers. 
\subsection{An Upper Bound for Flow Sparsifiers}
We first review the notion of Flow Sparsifiers. Let $d$ be a demand function over terminal pairs in $G$ such that $d(x,x')=d(x',x)$ and $d(x,x)=0$ for all $x,x' \in K$. We denote by $P_{xx'}$ the set of all paths between vertices $x$ and $x'$, for all $x,x' \in K$. Further, let $P_{e}$ be the set of all paths using edge $e$, for all $e \in E$ . A \emph{concurrent} (\textit{multi-commodity}) flow $f$ of \emph{throughput} $\lambda$ is a function over terminal paths in $G$ such that (1) $\sum_{p \in P_{xx'}} f(p) \geq \lambda d(x,x')$, for all distinct terminal pairs $x,x' \in K$ and (2) $\sum_{p \in P_e} f(p) \leq c(e)$, for all $e \in E$. We let $\lambda_G(d)$ denote the \emph{throughput of the concurrent flow} in $G$ that attains the largest throughput and we call a flow achieving this throughput the \emph{maximum concurrent flow}. A graph $H = (V_H, E_H, c_H)$, $K \subset V_H$ is a \emph{quality-$1$} (\emph{vertex}) \emph{flow sparsifier} of $G$ with $q \geq 1$ if for every demand function
$d$, $\lambda_G(d) \leq \lambda_H(d) \leq q \cdot \lambda_H(d).$

Next we show that given a $k$-terminal planar graph, where all terminals lie on the outer face, one can construct a quality-$1$ flow sparsifier of size $O(k^{2})$. Our result follows from combining the observation of Andoni et al.~\cite{andoni} for constructing flow-sparsifiers using flow/cut gaps and the flow/cut gap result of Okamura and Seymour~\cite{OkamuraS81}.

Given a $k$-terminal graph and a demand function $d$, recall that $\lambda_G(d)$ is the maximum fraction of $d$ that can be routed in $G$. We define the \emph{sparsity} of a cut $(U, V \setminus U)$ to be
\[
\Phi_G(U,d) := \frac{\capacity(\delta(U))}{\sum_{i,j: |\{i,j\} \cap U|=1}d_{ij}}
\]
and the \emph{sparsest cut} as $\Phi_G(d) := \min_{U \subset V} \Phi_G(U,d)$. Then the \emph{flow-cut} gap is given by
\[
\gamma(G) := \max \{\Phi_G(d) / \lambda_G(d) : d \in \mathbb{R}^{\binom{k}{2}}_{+}\}.
\]

We will make use of the following theorem:
\begin{theorem}[\cite{andoni}] \label{thm: FlowCutGap}
	Given a $k$-terminal graph $G$ with terminals $K$, let $G'$ be a quality-$\beta$ cut sparsifier for $G$ with $\beta \geq 1$. Then for every demand function $d \in \mathbb{R}^{\binom{k}{2}}_{+}$,
	\[
	\frac{1}{\gamma(G')} \leq \frac{\lambda_{G'}(d)}{\lambda_G(d)} \leq \beta \cdot \gamma(G).
	\]
	Therefore, the graph $G'$ with edge capacities scaled up by $\gamma(G')$ is a quality-$\beta \cdot \gamma(G) \cdot \gamma(G')$ flow sparsifier of size $|V(G')|$ for $G$.
\end{theorem}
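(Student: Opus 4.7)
The plan is to reduce the comparison of maximum concurrent flows $\lambda_G(d)$ and $\lambda_{G'}(d)$ to a comparison of sparsest-cut values $\Phi_G(d)$ and $\Phi_{G'}(d)$, using the flow-cut gap on both sides and the cut-sparsifier hypothesis in between. The first step I would carry out is a routine rewriting: since $d$ is supported on terminal pairs, the denominator $\sum_{|\{i,j\}\cap U|=1}d_{ij}$ in the sparsity only depends on the terminal bipartition $(U\cap K,\, K\setminus U)$ induced by the cut, while minimizing the numerator over all cuts inducing a fixed bipartition $(S,K\setminus S)$ yields exactly $\mincut_G(S,K\setminus S)$. Hence
\[
\Phi_G(d)\;=\;\min_{\emptyset\neq S\subsetneq K}\frac{\mincut_G(S,K\setminus S)}{d(S,K\setminus S)},
\]
and analogously for $G'$.

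Next, I would invoke the quality-$\beta$ cut sparsifier hypothesis bipartition by bipartition: for every $S$, $\mincut_G(S,K\setminus S)\le \mincut_{G'}(S,K\setminus S)\le \beta\cdot \mincut_G(S,K\setminus S)$. Dividing by the common denominator $d(S,K\setminus S)$ and taking the minimum over $S$ transfers this directly to the sparsities, giving $\Phi_G(d)\le \Phi_{G'}(d)\le \beta\cdot \Phi_G(d)$. I would then combine this with the two flow-cut gap sandwiches $\lambda_G(d)\le \Phi_G(d)\le \gamma(G)\cdot \lambda_G(d)$ and $\lambda_{G'}(d)\le \Phi_{G'}(d)\le \gamma(G')\cdot \lambda_{G'}(d)$, which hold by definition of $\gamma(\cdot)$. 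Chaining in one direction yields $\lambda_{G'}(d)\le \Phi_{G'}(d)\le \beta\Phi_G(d)\le \beta\gamma(G)\lambda_G(d)$ for the upper bound, and in the other direction $\lambda_G(d)\le \Phi_G(d)\le \Phi_{G'}(d)\le \gamma(G')\lambda_{G'}(d)$ for the lower bound, producing exactly the claimed inequalities.

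For the final sparsifier conclusion, I would use that scaling all edge capacities of $G'$ by a factor $\alpha$ scales the maximum concurrent flow by the same factor. Taking $\alpha=\gamma(G')$ and applying both bounds yields $\lambda_G(d)\le \gamma(G')\lambda_{G'}(d)\le \beta\gamma(G)\gamma(G')\cdot \lambda_G(d)$, which is precisely the definition of a quality-$\beta\gamma(G)\gamma(G')$ flow sparsifier; the vertex set is unchanged by the rescaling, so the size equals $|V(G')|$.

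I do not expect any serious obstacle: the only non-cosmetic step is the first reduction from arbitrary cuts to cuts realizing terminal bipartitions, and this is immediate once one notes that the demand function is terminal-supported, so the quality-$\beta$ guarantee (which a priori only concerns terminal min-cuts) suffices to control $\Phi_{G'}(d)$. Everything else is a chain of inequalities arranged around the flow-cut gap.
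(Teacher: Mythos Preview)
The paper does not give its own proof of this theorem; it is quoted as a black box from \cite{andoni} and applied directly in Corollary~\ref{cor: FlowSparsifiers}. Your argument is correct and is precisely the intended proof: rewrite $\Phi_G(d)$ as a minimum over terminal bipartitions (legitimate because $d$ is terminal-supported), sandwich $\Phi_{G'}(d)$ between $\Phi_G(d)$ and $\beta\,\Phi_G(d)$ via the cut-sparsifier guarantee, and then chain with the flow--cut gap inequalities on each side. The final scaling step is likewise standard. There is nothing to add or compare.
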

This leads to the following corollary.
\begin{corollary} \label{cor: FlowSparsifiers}
	Let $G$ be a $k$-terminal planar graph where all terminals lie on the outer face. Then $G$ admits a quality-$1$ flow sparsifier of size $O(k^{2})$.
\end{corollary}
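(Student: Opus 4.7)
The plan is to combine Theorem~\ref{thm: CutSparsifier} with the flow/cut gap machinery of Theorem~\ref{thm: FlowCutGap}, exploiting the fact that OS instances have flow/cut gap equal to $1$ by the classical Okamura--Seymour theorem.

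First, I would apply Theorem~\ref{thm: CutSparsifier} to the input $k$-terminal planar graph $G$ with all terminals on the outer face, obtaining a quality-$1$ cut sparsifier $G'$ of size $O(k^2)$ that is itself planar. The key point I would need to verify here is that $G'$ is again an OS instance, i.e., that all terminals still lie on a common face of $G'$. This should follow from inspecting the pipeline used to prove Theorem~\ref{thm: CutSparsifier}: the vertex-splitting step (Claim~\ref{claim: embedding}) keeps terminals on the outer face by construction, the orthogonal grid embedding of Lemma~\ref{lemm: embeddingGrids} is region-preserving and places terminals on the boundary of the grid, the node embedding into the half-grid from Lemma~\ref{lemm: embeddingHalfGrid} maps all terminals onto the diagonal (which is part of the boundary of $T^{\ell}_k$), and finally the Wye--Delta reduction of Lemma~\ref{lemm: Gitler} ends at $\hat{T}^k_k$, in which the terminals still lie on the outer face.

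Next, I would invoke the Okamura--Seymour theorem~\cite{OkamuraS81}, which states that for any undirected planar graph in which all terminals lie on a common face, the flow/cut gap is exactly $1$. Applied to both $G$ and $G'$, which are OS instances, this gives $\gamma(G)=\gamma(G')=1$. Plugging $\beta=1$ and $\gamma(G)=\gamma(G')=1$ into Theorem~\ref{thm: FlowCutGap}, we obtain
\[
\frac{1}{\gamma(G')}\;\leq\;\frac{\lambda_{G'}(d)}{\lambda_G(d)}\;\leq\;\beta\cdot\gamma(G),
\]
which simplifies to $\lambda_{G'}(d)=\lambda_G(d)$ for every demand $d$. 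Moreover, the scaling factor $\gamma(G')=1$, so no rescaling of edge capacities is needed; $G'$ itself serves as a quality-$1$ flow sparsifier of size $|V(G')|=O(k^2)$.

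The only mildly delicate point in this plan is the verification that the cut sparsifier produced by Theorem~\ref{thm: CutSparsifier} is indeed an OS instance (so that the flow/cut gap equals $1$ on \emph{both} sides); every other ingredient is a black-box application of already-stated results. Once this invariant is confirmed along the reduction pipeline, the corollary is immediate.
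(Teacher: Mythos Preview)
Your proposal is correct and follows essentially the same approach as the paper: apply Theorem~\ref{thm: CutSparsifier} to obtain a planar quality-$1$ cut sparsifier $G'$ of size $O(k^2)$, observe that both $G$ and $G'$ are OS instances so that Okamura--Seymour gives $\gamma(G)=\gamma(G')=1$, and plug $\beta=1$ into Theorem~\ref{thm: FlowCutGap}. Your explicit check that the terminals of $G'$ remain on a common face is a useful addition, since the statement of Theorem~\ref{thm: CutSparsifier} only guarantees planarity; the paper's proof simply asserts this fact, relying on the construction (the final graph $\hat{T}^k_k$) to make it evident.
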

\begin{proof}
	Given a $k$-terminal planar graph where all terminals lie on the outer face, Theorem \ref{thm: CutSparsifier} shows how to construct a cut sparsifier $G'$ with quality $\beta = 1$ and size $O(k^2)$, which is also a planar graph with all the $k$ terminals lying on the outer face. Okamura and Seymour~\cite{OkamuraS81} showed that for every $k$-terminal planar graph $G$ with terminals lying on the outer face the flow-cut gap is $1$. This implies that $\gamma(G) = 1$ and $\gamma(G') = 1$. Invoking Theorem \ref{thm: FlowCutGap} we get that $G'$ is a quality-$1$ flow sparsifier of size $O(k^2)$ for $G$.
\end{proof}

\subsection{An Upper Bound for Distance Sparsifiers} \label{sec: UpperDist}
We first review the notion of Vertex Distance Sparsifiers. Let $G=(V,E,\ell)$ with $K \subset V$ be a $k$-terminal graph, where we replace the capacity function $c$ with a length function $\ell : E \rightarrow \mathbb{R}_{\geq 0}$. For a terminal pair $(x,x') \in K$, let $d_G(x,x')$ denote the shortest path with respect to the edge lengths $\ell$ in $G$. A graph $H=(V',E',\ell')$ is a \emph{quality-$q$} (\emph{vertex}) \emph{distance sparsifier} of $G$ with $q \geq 1$ if for any $x,x' \in K$, $d_G(x,x') \leq d_H(x,x') \leq q \cdot d_G(x,x')$.

Next we argue that a symmetric approach applies to the construction of vertex sparsifiers that preserve distances. Concretely, we prove that given a $k$-terminal planar graph, where all terminals lie on the outer face, one can construct a quality-$1$ distance sparsifier of size $O(k^{2})$, which is also a planar graph. It is not hard to see that almost all arguments that we used about cut sparsifiers go through, except some adaptations regarding edge lengths in the Wye-Delta rules, edge subdivision operation and vertex splitting operation.

We start adapting the Wye-Delta operations. 

\begin{enumerate}
	\setlength\itemsep{0.1em}
	\item \emph{Degree-one reduction:} Delete a degree-one non-terminal and its incident edge.
	\item \emph{Series reduction:} Delete a degree-two non-terminal $y$ and its incident edges $(x,y)$ and $(y,z)$, and add a new edge $(x,z)$ of length $\ell(x,y) + \ell(y,z)$.
	\item \emph{Parallel reduction:} Replace all parallel edges by a single edge whose length is the minimum over all lengths of parallel edges.
	\item \emph{Wye-Delta transformation:} Let $x$ be a degree-three non-terminal with neighbours $\delta(x) = \{u,v,w\}$. 
	Delete $x$ (along with all its incident edges) and add edges $(u,v),(v,w)$ and $(w,u)$ with lengths $\ell(u,x) + \ell(v,x)$, $\ell(v,x) + \ell(w,x)$ and $\ell(w,x) + \ell(u,x)$, respectively. 
	\item \emph{Delta-Wye transformation:} Let $x$, $y$ and $z$ be the vertices of the triangle connecting them. Assume w.l.o.g.\footnote{Suppose there exists a triangle edge $(x,y)$ with $\ell(x,y) > \ell(x,z) + \ell(y,z)$, where $z$ is the other triangle vertex. Then we can simply set $\ell(x,y) = \ell(x,z) + \ell(y,z)$, since any shortest path between terminal pairs would use the edges $(x,z)$ and $(y,z)$ instead of the edge $(x,y)$.} that for any triangle edge $(x,y)$, $\ell(x,y) \leq \ell(x,z) + \ell(y,z)$, where $z$ is the other triangle vertex. Delete the edges of the triangle, introduce a new vertex $w$ and add new edges $(w,x)$, $(w,y)$ and $(w,z)$ with edge lengths $(\ell(x,y) + \ell(x,z) - \ell(y,z))/2,$ $(\ell(x,z) + \ell(y,z) - \ell(x,u))/2$ and $(\ell(x,y) + \ell(y,z) - \ell(x,z))/2$, respectively.
\end{enumerate}
The following lemma shows that the above rules preserve exactly all shortest path distances between terminal pairs. 

\begin{lemma} Let $G$ be a $k$-terminal graph and $G'$ be a $k$-terminal graph obtained from $G$ by applying one of the rules 1-5. Then $G'$ is a quality-$1$ distance sparsifier of $G$.
\end{lemma}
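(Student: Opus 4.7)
My plan is to prove the lemma by a case analysis over the five operations, in each case showing that the transformation preserves shortest-path distances between the constant-size set of ``boundary'' vertices of the modified region. Once local distance preservation is established, the conclusion for terminal pairs follows from a standard splicing argument: any terminal-to-terminal shortest path in $G$ can be decomposed into maximal subpaths lying inside and outside the modified region, and each inside subpath enters and leaves the region through boundary vertices whose mutual distances are unchanged; the converse direction is symmetric. So the core of the proof is to verify the five local claims.

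For rules $1$--$3$ the verification is essentially immediate. In rule $1$, a degree-one non-terminal cannot lie in the interior of any simple path, so its removal does not affect any terminal shortest path. In rule $2$, any path through the degree-two non-terminal $y$ must traverse both of its incident edges, contributing exactly $\ell(x,y)+\ell(y,z)$ to the path length; this matches the length assigned to the new edge $(x,z)$, and conversely the new edge can be replaced by the two-edge subpath $x$--$y$--$z$. Rule $3$ is obvious since a shortest path will always use the cheapest of the parallel edges.

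For rule $4$ (Wye-Delta) the key observation is that any shortest path through the degree-three non-terminal $x$ uses exactly two of its three incident edges. Thus the path $u$--$x$--$v$ has length $\ell(u,x)+\ell(v,x)$, which is precisely the length assigned to the new edge $(u,v)$, and similarly for the other two pairs. So every old path through $x$ can be replaced by a single triangle edge of the same length in $G'$, and every new triangle edge can be replaced by the corresponding two-edge path through $x$ in $G$. For rule $5$ (Delta-Wye), let $a = (\ell(x,y)+\ell(x,z)-\ell(y,z))/2$, $b = (\ell(x,y)+\ell(y,z)-\ell(x,z))/2$, $c = (\ell(x,z)+\ell(y,z)-\ell(x,y))/2$ be the new edge lengths. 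A short computation gives $a+b = \ell(x,y)$, $a+c=\ell(x,z)$, and $b+c=\ell(y,z)$, so the two-edge paths through the new vertex $w$ simulate the triangle edges with identical lengths; the non-negativity of $a,b,c$ is exactly the triangle-inequality assumption imposed before the operation.

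The only mildly delicate point is rule $5$: one must argue that no shortest path in $G'$ benefits from detouring through $w$ in a way that has no analog in $G$. This is handled by the same observation that a path entering and leaving $w$ uses exactly two of the star edges, whose combined length equals the corresponding triangle-edge length in $G$; no simple path uses all three star edges or revisits $w$. Collecting these five arguments yields $d_{G'}(x,x') = d_G(x,x')$ for every pair of terminals $x,x' \in K$, which in particular establishes the quality-$1$ distance sparsifier guarantee.
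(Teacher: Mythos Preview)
Your proposal is correct. The paper itself does not supply a proof of this lemma at all: it is stated immediately after the five distance-version rules with no argument, mirroring the cut-version Lemma~\ref{lemm: rules}, which the paper likewise asserts ``follows from the above definitions.'' Your case-by-case verification of local distance preservation, together with the splicing argument reducing global terminal distances to boundary-vertex distances of the modified region, is precisely the natural way to justify the claim and fills in what the paper leaves implicit. The only cosmetic remark is that in rule~5 your labeling of $a,b,c$ differs in order from the paper's (which moreover contains a typographical slip, writing $\ell(x,u)$ for $\ell(x,y)$), but your identities $a+b=\ell(x,y)$, $a+c=\ell(x,z)$, $b+c=\ell(y,z)$ are the correct ones and your handling of the non-negativity via the imposed triangle-inequality assumption is exactly right.
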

We remark that there is no need to re-define the Edge deletion and replacement operations, since they are just a combination of the above rules. An analogue of Lemma \ref{lemm: cutReducability} can also be shown for distances. We now modify the Edge subdivision operation, which is used when dealing with graph embeddings (see Section \ref{sec: graphEmbeddings}).
\begin{enumerate}
	\item \emph{Edge subdivision}: Let $(u,v)$ be an edge of length $\ell(u,v)$. Delete $(u,v)$, introduce a new vertex $w$ and add edges $(u,w)$ and $(w,v)$, each of length $\ell(u,v)/2$. 
\end{enumerate}

We now prove an analogue to Lemma \ref{lemm: cutembedding}.

\begin{lemma} \label{lemm: DistancePreservation}
	Let $\rho$ be a node embedding and let $G_1$ and $\rho(G_1)$ be $k$-terminal graphs as defined in Section~\ref{sec: graphEmbeddings}. Then $\rho(G_1)$ preserves exactly all shortest path distances between terminal pairs.
\end{lemma}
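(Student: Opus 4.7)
The plan is to mirror the structure of the proof of Lemma~\ref{lemm: cutembedding}, adapting it for the distance metric using the modified edge-subdivision rule. A node embedding $\rho$ can be realized as a finite sequence of edge subdivisions: each original edge $(u,v) \in E_1$ is replaced by its image path, which is obtained by repeatedly subdividing $(u,v)$. So I would proceed by induction on the number of subdivisions; the only real content is to verify that a single edge subdivision preserves the shortest-path distance between every pair of original vertices, and in particular between every terminal pair.

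For the base case, let $(u,v)$ be an edge of length $\ell := \ell(u,v)$, and let $G_1'$ be obtained from $G_1$ by inserting a non-terminal $w$ with $\ell(u,w) = \ell(w,v) = \ell/2$ in place of $(u,v)$. For any terminals $x, x' \in K$, the inequality $d_{G_1'}(x,x') \le d_{G_1}(x,x')$ is immediate: lift a shortest $xx'$-path $P$ in $G_1$ to $G_1'$ by replacing each use of $(u,v)$ with the two-edge subpath $u \to w \to v$; since $\ell(u,w) + \ell(w,v) = \ell$, the total length is unchanged. For the reverse inequality, take any shortest $xx'$-path $P'$ in $G_1'$ and transform it into a walk $P$ in $G_1$ of length at most that of $P'$. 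Because $w$ has degree exactly two in $G_1'$ and is a non-terminal (so neither endpoint of $P'$), every visit of $P'$ to $w$ uses exactly the two edges $(u,w)$ and $(w,v)$. Each maximal such visit is of the form $u \to w \to v$ or $v \to w \to u$ (which we replace by the single edge $(u,v)$ of $G_1$), or $u \to w \to u$ or $v \to w \to v$ (which is a back-and-forth detour of length $\ell$ that we splice out). The resulting walk lies in $G_1$ and has length at most the length of $P'$, yielding $d_{G_1}(x,x') \le d_{G_1'}(x,x')$.

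The inductive step is now routine: applying the single-subdivision claim to each of the finitely many subdivisions making up $\rho$ preserves all pairwise distances between vertices of $V_1$ at every stage, hence in particular between terminal pairs in the final graph $\rho(G_1)$. Combined with the two-sided inequality above, this gives $d_{\rho(G_1)}(x,x') = d_{G_1}(x,x')$ for every $x, x' \in K$, as required.

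The only subtlety I expect to have to handle carefully is in the lower-bound direction of the base case, specifically the fact that $P'$ could in principle visit the new vertex $w$ multiple times; but since each visit of $P'$ to the degree-two non-terminal $w$ is independently of one of the four forms above, the rewriting operation can be applied to each such visit separately without affecting the endpoints $x, x'$ or increasing the total length. Everything else is a direct translation of Lemma~\ref{lemm: cutembedding}'s proof from capacities/cuts to lengths/distances.
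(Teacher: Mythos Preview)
Your proposal is correct and follows essentially the same approach as the paper: reduce to a single edge subdivision and verify both inequalities by rewriting paths across the subdivided edge. Your lower-bound direction is slightly more careful than the paper's (you handle hypothetical back-and-forth visits $u\to w\to u$), but since a shortest path may be taken simple and $w$ is a degree-two non-terminal, the paper simply notes that any shortest path through $w$ must use both $(u,w)$ and $(w,v)$ exactly once, which suffices.
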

\begin{proof}
	We can view each path obtained from the embedding as taking the edge corresponding to that path endpoints in $G_1$ and performing edge subdivisions finitely many times. We claim that such subdivisions preserve all terminal shortest paths.
	
	Indeed, let us consider a single edge subdivison for $(u,v)$ (the general claim then follows by induction on the number of edge subdivions). Fix $x,x' \in K$ and consider some shortest path $p(x,x')$ in $G_1$ that uses $(u,v)$. We can construct in $\rho(G_1)$ a path $q(x,x')$ of the same length as follows: traverse the subpath $p(x,u)$, traverse the edges $(u,w)$ and $(w,v)$ and finally traverse the subpath $p(v,x')$. It follows that $\sum_{e \in p(x,x')} \ell(e) = \sum_{e \in q(x,x')} \ell(e)$, and thus $d_{\rho(G_1)}(s,t) \leq d_{G_1}(s,t)$.
	
	On the other hand, fix $x,x' \in K$ and consider some shortest path $p'(x,x')$ in $\rho(G_1)$ that uses the two subdivided edges $(u,w)$ and $(w,v)$ (note that it cannot use only one of them). We can construct in $G_1$ a path $q'(x,x')$ of the same length as follows: traverse the subpath $p'(x,u)$, traverse the edge $(u,v)$ and finally traverse the subpath $p'(v,x')$. It follows that $\sum_{e \in p'(x,x')} \ell(e) = \sum_{e \in q'(x,x')} \ell(e)$ and thus $d_{G_1}(s,t) \leq d_{\rho(G_1)}(s,t)$. Combining the above gives the lemma.
\end{proof}

We next consider vertex splitting for graphs whose maximum degree is larger than three. For each vertex $v$ of degree $d > 3$ with $u_1,\ldots,u_d$ adjacent to $v$, we delete $v$ and introduce new vertices $v_1, \ldots, v_d$ along with edges $\{(v_i,v_{i+1}) : i = 1,\ldots,d-1\}$, each of length $0$. Furthermore, we replace the edges $\{(u_i,v) : i=1,\ldots,d\}$ with $\{(u_i,v_i) : i = 1,\ldots, d\}$, each of corresponding length. If $v$ is a terminal vertex, we make one of the $v_i$'s be a terminal vertex. An analogue to Claim \ref{claim: embedding} gives that the resulting graph preserves all terminal shortest path distances.

We finally note that whenever we add dummy edges of capacity $0$ in the cut setting, we replace them by edges of length $D+1$ in the distance setting, where $D$ is the sum over all edge lengths in the graph we consider. Since any shortest path in the graph does not use the added edges, the terminal shortest path remain unaffected. The above discussion leads to the following theorem. 

\begin{theorem} \label{thm: DistanceSparsifier}
	Let $G$ be a $k$-terminal planar graph where all terminals lie on the outer face. Then $G$ admits a quality-$1$ distance sparsifier of size $O(k^{2})$, which is also a planar graph. 
\end{theorem}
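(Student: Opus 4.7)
The plan is to mirror the proof of Theorem~\ref{thm: CutSparsifier} step by step, but substituting each tool with its distance-preserving counterpart that has been set up in the paragraphs preceding the theorem. Concretely, given a $k$-terminal planar graph $G=(V,E,\ell)$ with all terminals on the outer face, I would execute the same pipeline: reduce to maximum degree~$3$ via vertex splitting, orthogonally embed the degree-$3$ graph into a grid with all terminals on the boundary, extend to a full grid, embed that grid into a half-grid $T^{\ell}_{k}$ via Lemma~\ref{lemm: embeddingHalfGrid}, and finally apply Gitler's reduction (Lemma~\ref{lemm: Gitler}) to obtain $\hat{T}^{k}_{k}$, which has $O(k^2)$ vertices and is planar.

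The correctness of each step needs to be justified using the distance versions of the operations. For vertex splitting the added edges carry length $0$ (rather than capacity $C+1$), and the distance analogue of Claim~\ref{claim: embedding} ensures that no terminal shortest path is lengthened or shortened. For the orthogonal embedding of Valiant~\cite{Valiant81} and Tamassia et al.~\cite{TamassiaT89}, the modified Edge subdivision operation (splitting an edge of length $\ell(u,v)$ into two edges each of length $\ell(u,v)/2$) together with Lemma~\ref{lemm: DistancePreservation} guarantees that every node embedding preserves terminal distances exactly. When padding the embedded graph into a full grid and later a full half-grid, I would insert the missing edges with length $D+1$, where $D$ is the total edge length; since no terminal shortest path has weight exceeding $D$, these dummy edges are never used by any terminal shortest path, so distances are unaffected. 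Finally, the distance-preserving analogue of Lemma~\ref{lemm: cutReducability} shows that each Wye-Delta operation in Gitler's reduction preserves all terminal pairwise distances, so $\hat{T}^{k}_{k}$ is a quality-$1$ distance sparsifier of the half-grid, and hence of the original $G$.

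Putting these pieces together yields a planar graph $\hat{T}^{k}_{k}$ with $O(k^2)$ vertices that exactly preserves all terminal shortest path distances in $G$, which is exactly the statement of Theorem~\ref{thm: DistanceSparsifier}. The main obstacle, and the part worth writing out carefully, is verifying that the two Wye-Delta operations whose behaviour genuinely differs between the cut and distance settings, namely the Delta-Wye transformation and the parallel reduction, do not create spurious short paths between terminals; this is handled precisely by the normalization assumption on triangle edge lengths (that $\ell(x,y) \le \ell(x,z)+\ell(y,z)$), which one establishes as a preprocessing step exactly as in the cut construction, and by the fact that the composite Edge deletion and Edge replacement operations are built from these primitives and therefore inherit their distance-preserving property. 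Everything else is a routine transcription of the cut-sparsifier argument, so no new structural lemma is required.
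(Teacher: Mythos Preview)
Your proposal is correct and mirrors exactly the paper's own approach: the paper sets up the distance analogues of the Wye-Delta rules, edge subdivision, vertex splitting (with length-$0$ edges), and dummy edges (with length $D+1$), proves Lemma~\ref{lemm: DistancePreservation}, and then simply states that ``the above discussion leads to'' Theorem~\ref{thm: DistanceSparsifier} without writing out a separate proof. Your walkthrough of the pipeline (vertex splitting $\to$ grid embedding $\to$ half-grid embedding $\to$ Gitler's reduction) with the distance-preserving substitutions at each step is precisely the intended argument.
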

\subsection{Incompressibility of Distances in $k$-Terminal Graphs} \label{sec: lowerBound}

In this section we prove the following incompressibility result (i.e., Theorem~\ref{thm:incompressibility}) concerning the trade-off between quality and size of any compression function when estimating terminal distances in $k$-terminal graphs: for every $\varepsilon > 0$ and $t \geq 2$, there exists a (sparse) $k$-terminal $n$-vertex graph such that $k=o(n)$, and that any compression algorithm that approximates pairwise terminal distances within a factor of $t - \varepsilon$ or an additive error $2t-3$ must use $\Omega(k^{1+1/(t-1)})$ bits. Our lower bound is inspired by the work of Matou{\v{s}}ek~\cite{matousek96}, which has also been utilized in the context of distance oracles~\cite{ThorupZ05}. Our arguments rely on the recent extremal combinatorics construction (see~\cite{cheung2016}) that was used to prove lower bounds on the size of distance approximating minors. 

\paragraph*{Discussion on our result.} Note that for any $k$-terminal graph $G$, if we do not have any restriction on the structure of the distance sparsifier, then $G$ always admits a trivial quality $1$ distance sparsifier $H$ which is the complete weighted graph on $k$ terminals with each edge weight being equal to the distance between the two endpoints in $G$. Furthermore, by the well-known result of Awerbuch~\cite{awerbuch1985}, such a graph $H$ in turn admits a multiplicative $(2t-1)$-\emph{spanner} $H'$ with $O(k^{1+1/t})$ edges, that is, all the distances in $H$ are preserved up to a multiplicative factor of $2t-1$ in $H'$, for any $t\geq 1$. This directly implies that the $k$-terminal graph $G$ has a quality $2t-1$ distance sparsifier with $k$ vertices and $O(k^{1+1/t})$ edges. On the other hand, though \emph{unconditional} lower bounds of type similar to our result have been known for the number of edges of spanners~\cite{lazebnik1995,lazebnik1996,woodruff2006}, we are not aware of such lower bounds for the size of \emph{data structure} that preserves pairwise terminal distances for any $k$-terminal $n$-vertex graph when $k=o(n)$. In the extreme case when $k=n$ (i.e., all the vertices are terminals), the recent work by Abboud and Bodwin~\cite{abboud} shows that any data structure that preserves the distances with an additive error $t$ needs $\Omega(n^{4/3-\varepsilon})$ bits, for any $\varepsilon>0, t=O(n^{\delta})$ and $\delta=\delta(\varepsilon)$ (see also the follow-up work~\cite{abboud2017}).

We start by reviewing a classical notion in combinatorial design.
\begin{definition} [Steiner Triple System] Given a ground set $T=[k]$, an $(3,2)$-Steiner system (abbr. $(3,2)$-\emph{SS}) of $T$ is a collection of $3$-subsets of $T$, denoted by $\mathcal{S} = \{S_1,\ldots,S_r\}$, where $r = \binom{k}{2}\left/3\right.$, such that every $2$-subset of $T$ is contained in \emph{exactly} one of the $3$-subsets.
\end{definition}

\begin{lemma}[\cite{Wilson1975}] For infinity many $k$, the set $T=[k]$ admits an $(3,2)$-\emph{SS}.
\end{lemma}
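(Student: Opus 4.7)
The plan is to establish the lemma via an explicit construction for the infinite family $k = 2^n - 1$ with $n \geq 2$, using binary projective geometry; this sidesteps the full strength of Wilson's 1975 theorem, which actually yields existence for every sufficiently large $k$ satisfying the divisibility conditions $k \equiv 1, 3 \pmod 6$. Since we only need infinitely many $k$, the family $k = 2^n-1$ is more than enough. One first checks that for $k$ in this family, $6 \mid k(k-1)$, so $r = \binom{k}{2}/3$ is indeed an integer, matching the excerpt's definition.

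For the construction, fix $n \geq 2$ and identify $T = [k]$ with the set of nonzero vectors of $\mathbb{F}_2^n$, of which there are exactly $2^n - 1 = k$. For every unordered pair $\{u,v\}$ of distinct vectors in $T$, form the triple $\{u, v, u+v\}$, where the sum is taken in $\mathbb{F}_2^n$. Because $u \neq v$, the vector $u+v$ is nonzero and distinct from both $u$ and $v$, so this is a genuine $3$-subset of $T$. Let $\mathcal{S}$ be the collection of all such triples (with duplicates removed).

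The main step is to verify that every $2$-subset of $T$ lies in exactly one member of $\mathcal{S}$. Existence is immediate by construction. For uniqueness, suppose $\{u,v\} \subseteq S$ for some $S = \{a,b,a+b\} \in \mathcal{S}$. The three $2$-subsets of $S$ are $\{a,b\}$, $\{a,a+b\}$, and $\{b,a+b\}$; in characteristic two one has $a + (a+b) = b$ and $b + (a+b) = a$, so in every case the third element of $S$ is precisely the sum of the two elements of the $2$-subset. Hence $S = \{u,v,u+v\}$ is uniquely determined by $\{u,v\}$, and the Steiner property holds. A routine double count then confirms $|\mathcal{S}| = \binom{k}{2}/3 = r$, as required.

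The main obstacle, if one can call it that, is a small bookkeeping point: one must leverage that addition in $\mathbb{F}_2^n$ is its own inverse, which is what collapses any apparent ambiguity among the three possible $2$-subsets of a candidate triple. Once this observation is in hand the verification is purely mechanical, and since $n$ ranges over all integers $\geq 2$, the construction produces a valid $(3,2)$-Steiner system for infinitely many values of $k$, establishing the lemma.
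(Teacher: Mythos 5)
Your proof is correct, but it takes a different route from the paper: the paper offers no argument at all for this lemma, simply citing Wilson's 1975 existence theorem (which, together with the classical Kirkman result, gives a $(3,2)$-Steiner system for every $k \equiv 1,3 \pmod 6$), whereas you give a self-contained explicit construction for the family $k = 2^n-1$ by identifying $[k]$ with the nonzero vectors of $\mathbb{F}_2^n$ and taking all triples $\{u,v,u+v\}$. Your verification is sound: closure under addition in characteristic two makes the third point of a triple the sum of the other two, which gives uniqueness of the triple through any pair, and the congruence $2^n-1 \equiv 1,3 \pmod 6$ confirms that $r=\binom{k}{2}/3$ is an integer. What the citation buys is generality (all admissible residues $k \equiv 1,3 \pmod 6$, hence a much denser set of valid $k$); what your construction buys is elementarity and constructiveness, and since the lemma and its downstream use in Theorem~\ref{thm:incompressibility} only require infinitely many admissible $k$, your sparser family $k=2^n-1$ fully suffices for the paper's purposes.
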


Roughly speaking, our proof proceeds by forming a $k$-terminal bipartite graph, where terminals lie on one side and non-terminals on the other. The set of non-terminals will correspond to some subset of a Steiner Triple System $\mathcal{S}$, which will satisfy some \emph{certain} property. One can equivalently view such a graph as taking union over \emph{star} graphs. Before delving into details, we need to review a couple of other useful definitions and the construction from~\cite{cheung2016}. 

\paragraph*{\textbf{Detour Graph and Cycle.}} 
Let $k$ be an integer such that $T=[k]$ admits an $(3,2)$-SS. Let $\mathcal{S}$ be such an $(3,2)$-SS. We associate $\mathcal{S}=\{S_1,\ldots,S_r\}$ with a graph whose vertex set is $\mathcal{S}$. We refer to such graph as a \emph{detouring graph}.  By the definition of Steiner system, it follows that $|S_i \cap S_j|$ is either zero or one. Thus, two vertices $S_i$ and $S_j$ are adjacent in the detouring graph iff $|S_i \cap S_j|=1$. It is also useful to label each edge $(S_i, S_j)$ with the terminal in $S_i \cap S_j$. A \emph{detouring cycle} is a cycle in the detouring graph such that no two neighbouring edges in the cycles have the same terminal label. Observe that the detouring graph has other cycles which are not detouring cycles. 

Ideally, we would like to construct detouring graphs with long detouring cycles while keeping the size of the graph as large as possible. One trade-off is given in the following lemma.

\begin{lemma}[\cite{cheung2016}] \label{lemm: detouringGraph} For any integer $t \geq 3$, given a detouring graph with vertex set $\mathcal{S}$, there exists a subset $\mathcal{S}' \subset \mathcal{S}$ of cardinality $\Omega(k^{1+1/(t-1)})$ such that the induced graph on $\mathcal{S}'$ has no detouring cycles of size $t$ or less.
\end{lemma}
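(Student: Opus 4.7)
The plan is a probabilistic method with alteration, in the Erdős ``sample and delete'' style used for producing high-girth subgraphs, but adapted to the more restrictive notion of \emph{detouring} cycles. I would sample each vertex of $\mathcal{S}$ independently with probability $p$ to obtain a random subset $\mathcal{S}_p$, then delete one vertex from every detouring cycle of length at most $t$ that survives in the subgraph induced on $\mathcal{S}_p$, and let $\mathcal{S}'$ be what remains. By linearity of expectation, it suffices to show that for an appropriate $p$ the expected number of short detouring cycles in $\mathcal{S}_p$ is at most half the expected value of $|\mathcal{S}_p|$.

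The crucial counting input, which goes beyond a naive ``max-degree times walk-expansion'' bound, is the following structural observation that exploits the $(3,2)$-Steiner property. Consider any detouring closed walk $S_0 \to S_1 \to \cdots \to S_\ell = S_0$ with edge labels $x_1,\ldots,x_\ell \in [k]$ (so that $S_{i-1} \cap S_i = \{x_i\}$). Since $x_i, x_{i+1} \in S_i$ and $x_i \neq x_{i+1}$ by the detouring condition, the pair $\{x_i,x_{i+1}\}$ uniquely determines the triple $S_i$, because in a $(3,2)$-Steiner system every $2$-subset lies in exactly one triple. The same reasoning applies cyclically to $S_0 = S_\ell$ via the pair $\{x_\ell,x_1\}$. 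Hence the whole closed walk is determined by its label sequence, and the number of detouring closed walks of length $\ell$ is bounded by the number of sequences in $[k]^\ell$ with no two cyclically consecutive entries equal, which is at most $k(k-1)^{\ell-1}$. Dividing by the $2\ell$ closed walks associated with each simple cycle (choice of start vertex and orientation) yields the key estimate
\[
\#\{\text{detouring cycles of length }\ell\} \;\leq\; \frac{k(k-1)^{\ell-1}}{2\ell} \;=\; O\!\left(\frac{k^{\ell}}{\ell}\right).
\]

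With this in hand, the rest is standard. Since $|\mathcal{S}| = \Theta(k^2)$, one has $\expect{|\mathcal{S}_p|} = \Theta(pk^2)$, and
\[
\expect{\#\{\text{short detouring cycles in }\mathcal{S}_p\}} \;\leq\; \sum_{\ell=3}^{t}\frac{k^{\ell}p^{\ell}}{2\ell} \;=\; O\!\left((kp)^{t}\right)
\]
whenever $kp \geq 2$. Choosing $p = c\, k^{-(t-2)/(t-1)}$ for a sufficiently small constant $c = c(t) > 0$ balances these two quantities, forcing the expected short-cycle count to be at most half of $\expect{|\mathcal{S}_p|}$. Deleting one vertex from each surviving short detouring cycle then produces $\mathcal{S}'$ with $|\mathcal{S}'| \geq pk^2/4 = \Omega(k^{1+1/(t-1)})$ and no detouring cycle of length at most $t$, as required.

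The hardest part is the counting step. A ``blind'' argument that treats detouring walks as arbitrary walks in a graph of maximum degree $\Theta(k)$ gives only $O(k^{\ell+1}/\ell)$ cycles of length $\ell$, and propagating this through the sampling calculation yields a surviving subset of size only $\Omega(k)$ -- far too weak. Exploiting the Steiner property to shave one factor of $k$ from the walk count (the ``label sequence determines the walk'' observation) is exactly what improves the bound to $\Omega(k^{1+1/(t-1)})$; indeed the exponent $1/(t-1)$ arises in the final algebra as the unique value equating $(kp)^t$ with $pk^2$.
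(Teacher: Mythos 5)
The paper never proves this lemma itself --- it is imported verbatim from \cite{cheung2016} --- so there is no in-paper argument to compare against; judged on its own, your proof is correct. The decisive step, that in a $(3,2)$-Steiner system the two distinct labels of consecutive detouring edges determine the shared triple, so a detouring closed walk is recovered uniquely from its cyclic label sequence and hence there are at most $k(k-1)^{\ell-1}/(2\ell)$ detouring cycles of length $\ell$, is sound, and it is exactly the factor-$k$ saving over the naive max-degree count that makes the exponent $1+1/(t-1)$ emerge from the standard sample-and-delete computation with $p=\Theta\bigl(k^{-(t-2)/(t-1)}\bigr)$; the only blemishes are immaterial constants ($|\mathcal{S}|=k(k-1)/6$ rather than $k^2$, and one needs $k$ large enough that $kp\ge 2$), which an $\Omega(\cdot)$ statement absorbs. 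This Erd\H{o}s-style alteration argument is, to the best of my knowledge, essentially the route taken in the cited source as well.
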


Now we are ready to prove our incompressibility result regarding approximately preserving terminal pairwise distances.
\paragraph*{\textbf{Proof of Theorem~\ref{thm:incompressibility}:}} 
Let $k$ be an integer such that $T=[k]$ admits an $(3,2)$-SS $\mathcal{S}$. Fix some integer $t \geq 3$, some positive constant $c$ and use Lemma \ref{lemm: detouringGraph} to construct a subset $\mathcal{S}'$ of $\mathcal{S}$ of size $\Omega(k^{1+1/(t-1)})$ such that the induced graph on $\mathcal{S}'$ has no detouring cycles of size $t$ or less. We may assume w.l.o.g. that $\ell = |\mathcal{S}'| = c \cdot k^{1+1/(t-1)}$ (this can be achieved by repeatedly removing elements from $\mathcal{S}'$, as the property concerning the detouring cycles is not destroyed). Fix some ordering among $3$-subsets of $\mathcal{S}'$ and among terminals in each $3$-subset.

We define the $k$-terminal graph $G$ as follows:
\begin{itemize}
\item For each $e_i \in \mathcal{S}'$ create a non-terminal vertex $v_i$. Let $V_{\mathcal{S}'}$ denote the set of such vertices. The vertex set of $G$ is $T \cup V_{\mathcal{S}'}$, where $T=[k]$ denotes the set of terminals.
\item For each $e_i \in \mathcal{S}'$, connect $v_i$ to the three terminals $\{x^{i}_{1},x^{i}_{2},x^{i}_{3}\}$ belonging to $e_i$, i.e., add edges $(v_i,x^{i}_j)$, $j=1,2,3$.
\end{itemize}
Note that $G$ is sparse since both the number of vertices and edges are $\Theta(\ell)$, and it also holds that $k=o(|V(G)|)$.

For any subset $R \subseteq \mathcal{S}'$, we define the subgraph $G_R=(V(G), E_R)$ of $G$ as follows. For each $e_i \in S'$, if $e_i \in R$, perform no changes. If $e_i \not \in R$, delete the edge $(v_i,x^{i}_1)$. Note that there are $2^{\ell}$ subgraphs $G_R$. We let $\mathcal{G}$ denote the family of all such subgraphs. 

We say a terminal pair $(x,x')$ \emph{respects $\mathcal{S'}$} if in the $(3,2)$-SS $\mathcal{S}$, the unique $3$-subset $e$ that contains $x$ and $x'$ belongs to $\mathcal{S'}$. Given $R \subseteq \mathcal{S}'$ and some terminal pair $(x,x')$, we say that $R$ \emph{covers} $(x,x')$ if both $x$ and $x'$ are connected to some non-terminal $v$ in $G_R$. 

\begin{claim}~\label{claim:cover} 
	For all $R \subseteq \mathcal{S}'$ and terminal pairs $(x,x')$ covered by $R$ we have that $d_{G_R}(x,x')=2$.
\end{claim}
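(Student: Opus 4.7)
The plan is to establish the equality by proving the two inequalities $d_{G_R}(x,x') \leq 2$ and $d_{G_R}(x,x') \geq 2$ separately, both of which follow essentially from the construction of $G_R$.

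For the upper bound, I would unpack the definition of ``$R$ covers $(x,x')$''. By hypothesis there exists some non-terminal $v$ such that both $x$ and $x'$ are adjacent to $v$ in $G_R$. Hence the two-edge walk $x \to v \to x'$ is a path in $G_R$, yielding $d_{G_R}(x,x') \leq 2$. (Implicitly, the covering non-terminal $v$ must be some $v_i$ with $e_i \in R$, and moreover in the subgraph $G_R$ neither incident edge $(v_i,x)$ nor $(v_i,x')$ has been deleted, which is precisely what the covering condition encodes.)

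For the lower bound, I would use the bipartite structure of $G$ (and therefore of $G_R$). By construction, every edge of $G$ is of the form $(v_i, x^i_j)$ with $v_i \in V_{\mathcal{S}'}$ and $x^i_j \in T$; in particular no two terminals are joined by an edge in $G$, and hence none are joined in $G_R \subseteq G$ either. Since $x \neq x'$ are both terminals and there is no edge between them, we conclude $d_{G_R}(x,x') \geq 2$.

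Combining the two bounds gives $d_{G_R}(x,x') = 2$, completing the claim. I do not expect any substantial obstacle here; the whole content is the bipartite structure plus the definition of ``covered'', and the lemma is intended as a routine building block for the more delicate later step, which will presumably have to argue that if $R$ does \emph{not} cover $(x,x')$, then $d_{G_R}(x,x')$ is forced to be large, using the absence of short detouring cycles in the induced detouring graph on $\mathcal{S}'$.
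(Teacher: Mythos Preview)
Your proposal is correct and follows essentially the same reasoning as the paper's proof, just spelled out in more detail. The paper's proof is a single sentence invoking the construction of $G_R$ and the Steiner system to assert the $2$-hop path; your version makes the bipartite lower bound and the covering upper bound explicit, which is a strictly clearer rendering of the same argument.
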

\begin{proof}
By the definition of Steiner system and the construction of $G_R$, the shortest path between $x$ and $x'$ is simply a $2$-hop path, i.e., $d_{G_R}(x,x') = 2$. 
\end{proof}

\begin{claim}~\label{claim:non_cover}
For all $R \subseteq \mathcal{S}'$ and any terminal pair $(x,x')$ that respects $\mathcal{S}'$ and is \emph{not} covered by $R$, we have that $d_{G_R}(x,x')\geq 2t$.
\end{claim}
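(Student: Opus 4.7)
I argue by contrapositive: if the shortest $x$-$x'$ path $P$ in $G_R$ has length strictly less than $2t$, I will extract a detouring cycle of length at most $t$ in the detouring graph induced on $\mathcal{S}'$, contradicting Lemma~\ref{lemm: detouringGraph}. Since $G_R$ is bipartite with sides $T$ and $V_{\mathcal{S}'}$, any path between two terminals has even length; write $P\colon x=y_0, v_{i_1}, y_1, v_{i_2}, \ldots, v_{i_m}, y_m=x'$ of length $2m$ and assume $m \le t-1$. Note $m \ge 2$, since $m=1$ would mean $v_{i_1}$ is adjacent to both $x$ and $x'$ in $G_R$, contradicting that $(x,x')$ is uncovered. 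Because $(x,x')$ respects $\mathcal{S}'$, there is a (unique) $e \in \mathcal{S}'$ containing both $x$ and $x'$.

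The natural candidate cycle is the closed walk $W\colon e, e_{i_1}, e_{i_2}, \ldots, e_{i_m}, e$ whose edges carry labels $x, y_1, y_2, \ldots, y_{m-1}, x'$, namely the shared terminal between consecutive $3$-subsets (by the Steiner property, any two distinct $3$-subsets in $\mathcal{S}$ share at most one element, so these labels are well-defined and the corresponding edges exist in the simple detouring graph). Simplicity of $P$ together with $x \ne x'$ makes the sequence of consecutive labels pairwise distinct, so $W$ is a detouring closed walk. If $v_e \notin P$, then $e, e_{i_1}, \ldots, e_{i_m}$ are genuinely distinct (the $e_{i_p}$ are distinct because the $v_{i_p}$ are distinct on $P$, and $e \ne e_{i_p}$ because $v_e \ne v_{i_p}$), and $W$ is a legitimate detouring cycle of length $m+1 \le t$, contradicting Lemma~\ref{lemm: detouringGraph}.

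The main obstacle is the case $v_e \in P$, where $W$ may not be simple. Using that $(x,x')$ is uncovered and that the only edge possibly deleted at $v_e$ is $(v_e, x^e_1)$, I conclude $e \notin R$ (otherwise all three edges at $v_e$ are present, covering $(x,x')$) and, WLOG, $x = x^e_1$. Thus $v_e$ is not adjacent to $x$, so if $v_e = v_{i_p}$ then $p \ne 1$, and the two path-neighbours $y_{p-1}, y_p$ of $v_e$ both lie in $\{x^e_2, x^e_3\}$. Simplicity of $P$ (in particular, $x' = y_m$ appears only at position $m$) then forces $p = m$ and $y_{m-1} = x^e_3$. I therefore replace $W$ by the shorter closed walk $C\colon e, e_{i_1}, \ldots, e_{i_{m-1}}, e$ of length $m$, with labels $x, y_1, \ldots, y_{m-2}, x^e_3$; again consecutive labels differ (the closing comparison uses $x^e_3 \ne x^e_1 = x$), and all its vertices are distinct since $v_e = v_{i_m}$ differs from every $v_{i_p}$ with $p < m$. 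For $m \ge 3$ this is a detouring cycle of length $m \le t-1$, contradicting Lemma~\ref{lemm: detouringGraph}.

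The last hurdle is the degeneration $m=2$, where $C$ would collapse into a multi-edge rather than a simple cycle. Here however $e_{i_1}$ would contain both $x = x^e_1$ and $y_1 = x^e_3$, so by the Steiner uniqueness we would have $e_{i_1} = e = e_{i_2}$, contradicting $v_{i_1} \ne v_{i_2}$. Hence every case is excluded and $m \ge t$, i.e.\ $d_{G_R}(x,x') \ge 2t$. I expect the fine-grained casework around $v_e \in P$, together with ruling out the $m=2$ multi-edge degeneration via the Steiner property, to be the main technical point; everything else is bookkeeping about bipartite path parity and label non-adjacency driven by simplicity of the shortest path.
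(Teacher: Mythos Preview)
Your proof is correct and follows the same strategy as the paper: a short $x$--$x'$ path in $G_R$ yields a short detouring cycle in the detouring graph induced on $\mathcal{S}'$, contradicting Lemma~\ref{lemm: detouringGraph}. The paper's own argument is more of a sketch and does not explicitly address the case $v_e \in P$ or the $m=2$ degeneration; your casework fills in exactly those gaps and makes the implication rigorous.
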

\begin{proof}
Since $(x,x')$ respects $\mathcal{S'}$, there exists $e_i=(x^i_1,x^i_2,x^i_3)\in \mathcal{S}'$ that contains both $x$ and $x'$. By construction of $G_R$ and the fact that $(x,x')$ is not covered by $R$, it follows that $e_i\in \mathcal{S}'\setminus R$, and one of $x,x'$ corresponds to $x^i_1$ and the other corresponds to $x^i_2$ or $x^i_3$. W.l.o.g., we assume $x=x^i_1$ and $x'=x^i_2$. Note that there is no edge connecting $x^{i}_1$ with the non-terminal $v_i$ that corresponds to $e_i$. Now by Lemma~\ref{lemm: detouringGraph}, the detouring graph induced on $\mathcal{S}'$ has no detouring cycles of size $t$ or less, which implies that any other simple path between $x^{i}_1$ and $x^{i}_2$ in $G$ must pass through at least $t-1$ other terminals. Let $w_1,\ldots,w_{t-1}$ be such terminals and let $P:=x^{i}_1 \rightarrow w_1,\ldots,w_{t-1} \rightarrow x^{i}_2$ denote the corresponding path, ignoring the non-terminals along the path. Between any consecutive terminal pairs in $P$, the shortest path is at least $2$. Thus, the length of $P$ is at least $2t$, i.e., $d_{G_R}(x^{i}_1,x^{i}_2) \geq 2t$.
\end{proof}

Fix any two subsets $R_1, R_2 \subseteq \mathcal{S}'$ with $R_1 \neq R_2$. It follows that there exists a $3$-subset $e_i= (x^i_1,x^i_2,x^i_3)\in \mathcal{S}'$ such that either $e \in R_1 \setminus R_2$ or $e \in R_2 \setminus R_1$. Assume w.l.o.g that $e \in R_2 \setminus R_1$. Note that $(x^i_1,x^i_2)$ respects $\mathcal{S}'$ and it is covered in $R_2$ but not in $R_1$. By~Claim~\ref{claim:cover} and~\ref{claim:non_cover}, it holds that $d_{G_{R_{2}}}(x^i_1,x^i_2) = 2$ and $d_{G_{R_1}}(x^i_1,x^i_2) \geq 2t$. In other words, there exists a set $\mathcal{G}$ of $2^{\ell}$ different subgraphs on the same set of nodes $V(G)$ satisfying the following property: for any $G_1,G_2 \in \mathcal{G}$, there exists a terminal pair $(x,x')$ such that the distances between $x$ and $x'$ in $G_1$ and $G_2$ differ by at least a $t$ factor as well as by at least $2t-2$. On the other hand, for any compression function that approximates terminal path distances within a factor of $t-\varepsilon$ or an additive error $2t-3$ and produces a bitstring with less than $\ell$ bits, there exist two different graphs $G_1,G_2 \in \mathcal{G}$ that map to the same bit string. Hence, any such compression function must use at least $\Omega(\ell) = \Omega(k^{1+1/(t-1)})$ bits if we want to preserve terminal distances within a $t-\varepsilon$ factor or an additive error $2t-3$. 



To complete our argument, we need to show the claim for quality $t=2$. The only significant modification we need is the usage of an $(3,2)$-SS in the construction of graph $G$ (instead of using a subset of it). The remaining details are similar to the above proof and we omit them here.







\bibliographystyle{alpha}
\bibliography{literature}

\newcommand{\etalchar}[1]{$^{#1}$}
\begin{thebibliography}{AGMW18}

\bibitem[AB16]{abboud}
Amir Abboud and Greg Bodwin.
\newblock The 4/3 additive spanner exponent is tight.
\newblock In {\em Proc. of the 48th STOC}, pages 351--361, 2016.

\bibitem[AB18]{AB18reachability}
Amir Abboud and Greg Bodwin.
\newblock Reachability preservers: New extremal bounds and approximation
  algorithms.
\newblock In {\em Proc. of the 29th SODA}, 2018.
\newblock available at Arxiv: CoRR abs/1710.11250.

\bibitem[ABP17]{abboud2017}
Amir Abboud, Greg Bodwin, and Seth Pettie.
\newblock A hierarchy of lower bounds for sublinear additive spanners.
\newblock In {\em Proc. of the 28th SODA}, pages 568--576, 2017.

\bibitem[AGK14]{andoni}
Alexandr Andoni, Anupam Gupta, and Robert Krauthgamer.
\newblock Towards {(1+ $\varepsilon$)}-approximate flow sparsifiers.
\newblock In {\em Proc. of the 25th SODA}, pages 279--293, 2014.

\bibitem[AGMW18]{AbboudGMW17}
Amir Abboud, Pawel Gawrychowski, Shay Mozes, and Oren Weimann.
\newblock Near-optimal compression for the planar graph metric.
\newblock In {\em Proc. of the 29th SODA}, 2018.
\newblock available at arXiv: CoRR abs/1703.04814.

\bibitem[AGU72]{AhoGU72}
Alfred~V. Aho, M.~R. Garey, and Jeffrey~D. Ullman.
\newblock The transitive reduction of a directed graph.
\newblock {\em {SIAM} J. Comput.}, 1(2):131--137, 1972.

\bibitem[Awe85]{awerbuch1985}
Baruch Awerbuch.
\newblock Complexity of network synchronization.
\newblock {\em J. ACM}, 32(4):804--823, 1985.

\bibitem[BK96]{BenczurK96}
Andr{\'{a}}s~A. Bencz{\'{u}}r and David~R. Karger.
\newblock Approximating \emph{s-t} minimum cuts in
  \emph{{\~{O}}}(\emph{n}\({}^{\mbox{2}}\)) time.
\newblock In {\em Proc. of the 28th STOC}, pages 47--55, 1996.

\bibitem[Bod17]{Bodwin17}
Greg Bodwin.
\newblock Linear size distance preservers.
\newblock In {\em Proc. of the 28th SODA}, pages 600--615, 2017.

\bibitem[CE06]{coppersmithE06}
Don Coppersmith and Michael Elkin.
\newblock Sparse sourcewise and pairwise distance preservers.
\newblock {\em {SIAM} J. Discrete Math.}, 20(2):463--501, 2006.

\bibitem[CGH16]{cheung2016}
Yun~Kuen Cheung, Gramoz Goranci, and Monika Henzinger.
\newblock Graph minors for preserving terminal distances approximately - {Lower
  and Upper Bounds}.
\newblock In {\em Proc. of the 43rd ICALP}, pages 131:1--131:14, 2016.

\bibitem[CGN{\etalchar{+}}06]{chekuri2006embedding}
Chandra Chekuri, Anupam Gupta, Ilan Newman, Yuri Rabinovich, and Alistair
  Sinclair.
\newblock Embedding k-outerplanar graphs into l1.
\newblock {\em SIAM J. Discrete Math.}, 20(1):119--136, 2006.

\bibitem[Che18]{Cheung17}
Yun~Kuen Cheung.
\newblock Steiner point removal - distant terminals don't (really) bother.
\newblock In {\em Proc. of the 29th SODA}, 2018.
\newblock available at Arxiv: CoRR abs/1703.08790.

\bibitem[Chu12a]{juliasteiner}
Julia Chuzhoy.
\newblock On vertex sparsifiers with steiner nodes.
\newblock In {\em Proc. of the 44th STOC}, pages 673--688, 2012.

\bibitem[Chu12b]{chuzhoy2012routing}
Julia Chuzhoy.
\newblock Routing in undirected graphs with constant congestion.
\newblock In {\em Proc. of the 44th STOC}, pages 855--874, 2012.

\bibitem[CIM98]{curtis98}
Edward~B Curtis, David Ingerman, and James~A Morrow.
\newblock Circular planar graphs and resistor networks.
\newblock {\em Linear algebra and its applications}, 283(1):115--150, 1998.

\bibitem[CKS09]{chekuri09}
Chandra Chekuri, Sanjeev Khanna, and F~Bruce Shepherd.
\newblock Edge-disjoint paths in planar graphs with constant congestion.
\newblock {\em SIAM J. Comput.}, 39(1):281--301, 2009.

\bibitem[CLLM10]{charikar}
Moses Charikar, Tom Leighton, Shi Li, and Ankur Moitra.
\newblock Vertex sparsifiers and abstract rounding algorithms.
\newblock In {\em Proc. of the 51th FOCS}, pages 265--274, 2010.

\bibitem[CSW10]{chekuri2010flow}
Chandra Chekuri, F~Bruce Shepherd, and Christophe Weibel.
\newblock Flow-cut gaps for integer and fractional multiflows.
\newblock In {\em Proc. of the 21st SODA}, pages 1198--1208, 2010.

\bibitem[CXKR06]{chan06}
T-H~Hubert Chan, Donglin Xia, Goran Konjevod, and Andrea Richa.
\newblock A tight lower bound for the steiner point removal problem on trees.
\newblock In {\em Proc. of the 9th APPROX/RANDOM}, pages 70--81, 2006.

\bibitem[DDK{\etalchar{+}}17]{DaubelDKMS17}
Karl D{\"{a}}ubel, Yann Disser, Max Klimm, Torsten M{\"{u}}tze, and Frieder
  Smolny.
\newblock Distance-preserving graph contractions.
\newblock {\em CoRR}, abs/1705.04544, 2017.

\bibitem[DS07]{DiksS07}
Krzysztof Diks and Piotr Sankowski.
\newblock Dynamic plane transitive closure.
\newblock In {\em Proc. of the 15th ESA}, pages 594--604, 2007.

\bibitem[EGK{\etalchar{+}}14]{englert10}
Matthias Englert, Anupam Gupta, Robert Krauthgamer, Harald R{\"{a}}cke, Inbal
  Talgam{-}Cohen, and Kunal Talwar.
\newblock Vertex sparsifiers: New results from old techniques.
\newblock {\em SIAM J. Comput.}, 43(4):1239--1262, 2014.

\bibitem[Fil18]{Filtser17}
Arnold Filtser.
\newblock Steiner point removal with distortion o(log k).
\newblock In {\em Proc. of the 29th SODA}, 2018.
\newblock available at Arxiv: CoRR abs/1706.08115.

\bibitem[FP93]{feo1993}
Thomas~A Feo and J~Scott Provan.
\newblock Delta-wye transformations and the efficient reduction of two-terminal
  planar graphs.
\newblock {\em Operations Research}, 41(3):572--582, 1993.

\bibitem[Git91]{Gitler91}
Isidoro Gitler.
\newblock {\em Delta-Wye-Delta Transformations: Algorithms and Applications}.
\newblock PhD thesis, Department of Combinatorics and Optimization, University
  of Waterloo, 1991.

\bibitem[GR16]{GoranciR16}
Gramoz Goranci and Harald R{\"{a}}cke.
\newblock Vertex sparsification in trees.
\newblock In {\em Proc. of the 14th WAOA}, pages 103--115, 2016.

\bibitem[GR17]{GajjarR17}
Kshitij Gajjar and Jaikumar Radhakrishnan.
\newblock Distance-preserving subgraphs of interval graphs.
\newblock In {\em Proc. of the 25th ESA}, pages 39:1--39:13, 2017.

\bibitem[Gup01]{gupta01}
Anupam Gupta.
\newblock Steiner points in tree metrics don't (really) help.
\newblock In {\em Proc. of the 12th SODA}, pages 220--227, 2001.

\bibitem[HKNR98]{HagerupKNR98}
Torben Hagerup, Jyrki Katajainen, Naomi Nishimura, and Prabhakar Ragde.
\newblock Characterizing multiterminal flow networks and computing flows in
  networks of small treewidth.
\newblock {\em J. Comput. Syst. Sci.}, 57(3):366--375, 1998.

\bibitem[KKN15]{kamma2015}
Lior Kamma, Robert Krauthgamer, and Huy~L Nguyen.
\newblock Cutting corners cheaply, or how to remove steiner points.
\newblock {\em SIAM J. Comput.}, 44(4):975--995, 2015.

\bibitem[KKS05]{katriel2005reachability}
Irit Katriel, Martin Kutz, and Martin Skutella.
\newblock Reachability substitutes for planar digraphs.
\newblock In {\em Technical Report MPI-I-2005-1-002}. Max-Planck-Institut
  f{\"u}r Informatik, 2005.

\bibitem[KNZ14]{distancepreserving}
Robert Krauthgamer, Huy~L Nguyen, and Tamar Zondiner.
\newblock Preserving terminal distances using minors.
\newblock {\em SIAM J. Discrete Math.}, 28(1):127--141, 2014.

\bibitem[KPZP17]{karpov2017exponential}
Nikolay Karpov, Marcin Pilipczuk, and Anna Zych-Pawlewicz.
\newblock An exponential lower bound for cut sparsifiers in planar graphs.
\newblock {\em The 12th International Symposium on Parameterized and Exact
  Computation (IPEC)}, 2017.

\bibitem[KR13]{KrauthgamerR13}
Robert Krauthgamer and Inbal Rika.
\newblock Mimicking networks and succinct representations of terminal cuts.
\newblock In {\em Proc. of the 24th SODA}, pages 1789--1799, 2013.

\bibitem[KR14]{KhanR14}
Arindam Khan and Prasad Raghavendra.
\newblock On mimicking networks representing minimum terminal cuts.
\newblock {\em Inf. Process. Lett.}, 114(7):365--371, 2014.

\bibitem[KR17]{krauthgamer2017refined}
Robert Krauthgamer and Inbal Rika.
\newblock Refined vertex sparsifiers of planar graphs.
\newblock {\em CoRR}, abs/1702.05951, 2017.

\bibitem[KZ12]{KrauthgamerZ12}
Robert Krauthgamer and Tamar Zondiner.
\newblock Preserving terminal distances using minors.
\newblock In {\em Proc. of the 39th ICALP}, pages 594--605, 2012.

\bibitem[LM10]{leighton}
Frank~Thomson Leighton and Ankur Moitra.
\newblock Extensions and limits to vertex sparsification.
\newblock In {\em Proc. of the 42nd STOC}, pages 47--56, 2010.

\bibitem[LMM13]{LeeMM13}
James~R Lee, Manor Mendel, and Mohammad Moharrami.
\newblock A node-capacitated okamura-seymour theorem.
\newblock In {\em Proc. of the 45th STOC}, pages 495--504, 2013.

\bibitem[LUW95]{lazebnik1995}
Felix Lazebnik, Vasiliy~A Ustimenko, and Andrew~J Woldar.
\newblock A new series of dense graphs of high girth.
\newblock {\em Bulletin of the American mathematical society}, 32(1):73--79,
  1995.

\bibitem[LUW96]{lazebnik1996}
Felix Lazebnik, Vasiliy~A Ustimenko, and Andrew~J Woldar.
\newblock A characterization of the components of the graphs d (k, q).
\newblock {\em Discrete Mathematics}, 157(1-3):271--283, 1996.

\bibitem[Mat96]{matousek96}
Ji{\v{r}}{\'\i} Matou{\v{s}}ek.
\newblock On the distortion required for embedding finite metric spaces into
  normed spaces.
\newblock {\em Israel Journal of Mathematics}, 93(1):333--344, 1996.

\bibitem[MM10]{mm10}
Konstantin Makarychev and Yury Makarychev.
\newblock Metric extension operators, vertex sparsifiers and lipschitz
  extendability.
\newblock In {\em Proc. of the 51th FOCS}, pages 255--264, 2010.

\bibitem[Moi09]{moitra09}
Ankur Moitra.
\newblock Approximation algorithms for multicommodity-type problems with
  guarantees independent of the graph size.
\newblock In {\em Proc. of the 50th FOCS}, 2009.

\bibitem[OS81]{OkamuraS81}
Haruko Okamura and Paul~D. Seymour.
\newblock Multicommodity flows in planar graphs.
\newblock {\em J. Comb. Theory, Ser. {B}}, 31(1):75 -- 81, 1981.

\bibitem[ST11]{SpielmanT04}
Daniel~A. Spielman and Shang{-}Hua Teng.
\newblock Spectral sparsification of graphs.
\newblock {\em {SIAM} J. Comput.}, 40(4):981--1025, 2011.

\bibitem[Sub93]{Subramanian93}
Sairam Subramanian.
\newblock A fully dynamic data structure for reachability in planar digraphs.
\newblock In {\em Proc. of the 1st ESA}, pages 372--383, 1993.

\bibitem[Tar72]{Tarjan72}
Robert~Endre Tarjan.
\newblock Depth-first search and linear graph algorithms.
\newblock {\em {SIAM} J. Comput.}, 1(2):146--160, 1972.

\bibitem[Tho04]{Thorup04}
Mikkel Thorup.
\newblock Compact oracles for reachability and approximate distances in planar
  digraphs.
\newblock {\em J. {ACM}}, 51(6):993--1024, 2004.

\bibitem[TT89]{TamassiaT89}
Roberto Tamassia and Ioannis~G Tollis.
\newblock Planar grid embedding in linear time.
\newblock {\em IEEE Trans. Circuits Syst.}, 36(9):1230--1234, 1989.

\bibitem[TZ05]{ThorupZ05}
Mikkel Thorup and Uri Zwick.
\newblock Approximate distance oracles.
\newblock {\em J. ACM}, 52(1):1--24, 2005.

\bibitem[Val81]{Valiant81}
Leslie~G. Valiant.
\newblock Universality considerations in {VLSI} circuits.
\newblock {\em {IEEE} Trans. Computers}, 30(2):135--140, 1981.

\bibitem[Wil75]{Wilson1975}
Richard~M. Wilson.
\newblock An existence theory for pairwise balanced designs, {III:} proof of
  the existence conjectures.
\newblock {\em J. Comb. Theory, Ser. {A}}, 18(1):71--79, 1975.

\bibitem[Woo06]{woodruff2006}
David~P Woodruff.
\newblock Lower bounds for additive spanners, emulators, and more.
\newblock In {\em Proc. of the 47th FOCS}, pages 389--398, 2006.

\end{thebibliography}



\end{document}